\newtheorem{theorem}{Theorem}[section]
\newtheorem{example}{Example}[section]
\newtheorem{proposition}{Proposition}[section]
\newtheorem{cor}{Corollary}[section]
\newtheorem{lemma}{Lemma}[section]
\theoremstyle{definition}
\newtheorem{definition}{Definition}[section]
\providecommand{\R}[0]{\mathbb{R}}
\providecommand{\norm}[1]{\lVert#1\rVert}
\providecommand{\inner}[2]{\langle#1,#2\rangle}
\title{Classical Limits of Hilbert Bimodules as Symplectic Dual Pairs}
\author{Benjamin H. ~Feintzeig\\\texttt{bfeintze@uw.edu}}
\affil{{Department of Philosophy}\\{University of Washington}}
\author{Jer Steeger\footnote{These authors contributed equally to this work.}\\\texttt{jsteeger@uw.edu}}
\affil{{Department of Philosophy}\\{University of Washington}}
\date{}
\begin{document}

\maketitle

\begin{abstract}
Hilbert bimodules are morphisms between C*-algebraic models of quantum systems, while symplectic dual pairs are morphisms between Poisson geometric models of classical systems.  Both of these morphisms preserve representation-theoretic structures of the relevant types of models.  Previously, it has been shown that one can functorially associate certain symplectic dual pairs to Hilbert bimodules through strict deformation quantization. We show that, in the inverse direction, strict deformation quantization also allows one to functorially take the classical limit of a Hilbert bimodule to reconstruct a symplectic dual pair.
\end{abstract}

%\keywords{C*-algebras; Poisson algebras; strict deformation quantization.}

%\ccode{Mathematics Subject Classification 2020: 17B63, 46L65, 81R10, 81S99}

\section{Introduction}
\label{sec:intro}

It is well known that many quantum and classical systems can be associated with a Lie group, whose representations---understood differently relative to the classical or quantum context---capture the structure of conserved quantities of the physical system.  The current paper investigates the relationship between the classical and quantum structures captured by the representations of such a group.  We prove that there is a structural correspondence between a quantum system with conserved quantities and a classical system with conserved quantities constructed through the classical limit.  We make this structural correspondence precise in the form of a functor representing the classical limit that maps between suitable categories of models of quantum and classical physics.  In both categories, the morphisms can be understood as relations between models that preserve the group representation structure, where those representations are defined appropriate to the quantum or classical nature of the model.

On the quantum side, the models we consider are formulated in terms of collections of observables that form C*-algebras.  The morphisms between such models are \emph{Hilbert bimodules}, which provide a procedure for inducing Hilbert space representations of one model from another  \citep{FeDo88,Ma68,Ri72a,Ri74,La95}.
On the classical side, the models we consider are formulated in terms of phase spaces that are Poisson manifolds.  The morphisms between such models are \emph{symplectic dual pairs}, which provide a procedure for inducing symplectic representations or realizations of one model from another \citep{MaWe74,We83,Xu91,Xu92,Xu94a}.

The classical and quantum models are related by \emph{strict deformation quantization} \citep{Ri89,Ri93,La98b,La06,BiGa15}.  This implies that one can form a continuous bundle of C*-algebras \citep{KiWa95} over a base space consisting of values of Planck's constant $\hbar$.  In this setting, at values $\hbar>0$, the fiber C*-algebra is the quantum model given by operators on a Hilbert space.  And at the value $\hbar = 0$, the fiber C*-algebra is an algebra of continuous functions on the phase space of the classical model.  In the opposite direction of quantization, the classical limit can be understood as an operation that generates the C*-algebra of quantities of a classical system from such a continuous bundle \citep{StFe21a}.

Beyond this relationship between classical and quantum models, \citet{La95b,La01a,La05,La02a} provides a relationship between classical and quantum morphisms.  This is given by a quantization map that assigns each suitable classical morphism in the form of a symplectic dual pair to a corresponding quantum morphism in the form of a Hilbert bimodule.  The result is that quantization may be understood as a functor.

In the current paper, we establish a relationship in the opposite direction between quantum and classical morphisms according to the classical $\hbar\to 0$ limit.  We provide a limiting map that assigns each suitable quantum morphism in the form of a Hilbert bimodule to a corresponding classical morphism in the form of a symplectic dual pair.  We show that this map is likewise a classical limit functor.  Indeed, we conjecture that the classical limit functor is almost inverse to the quantization functor \citep{Fe24}.

The goal of the current paper is to provide a two-step procedure for constructing a symplectic dual pair as the classical limit of a Hilbert bimodule between C*-algebras of quantum observables.  In the first step, we follow the strategy of \citet{StFe21a}, who understand classical limits of continuous bundles of C*-algebras as commutative C*-algebras given by a quotient construction; we construct Hilbert bimodules between such commutative C*-algebras via an analogous quotient procedure.  In the second step, we extend the well known Gelfand duality \citep[][\S4.4]{KaRi97} to construct a topological space from our Hilbert bimodule between commutative C*-algebras.  To form a symplectic dual pair, we need to endow this topological space with the canonical structure of a smooth manifold and a symplectic form.  In the literature, \citet{Ne03} already provides necessary and sufficient conditions for a commutative algebra to be dual to a smooth manifold.  Since algebraic conditions for the construction of a smooth structure are already known, we will take these conditions as assumptions, and thus our goal is to focus on the definition of a symplectic form on this space.  Throughout, our primary aim is to establish functoriality of the construction by showing that it respects the composition of Hilbert bimodules and symplectic dual pairs via their corresponding tensor products.

We leave a number of open questions for future research.  For example, we rely on some assumptions about the Hilbert bimodule at $\hbar=0$ in order to guarantee that the dual space carries the structure of a smooth manifold.  It would be interesting to search for further useful conditions on the Hilbert bimodules at $\hbar>0$ that would guarantee  our construction yields a smooth manifold.  Likewise, we are so far able to establish that the construction applies only to a limited class of Hilbert bimodules.  It would be interesting to show that the construction applies to other types of Hilbert bimodules, which might require some generalization beyond the results of this paper.  The current paper thus provides only the first steps in the analysis of classical limits of Hilbert bimodules.

The paper is organized as follows.  In \S\ref{sec:back}, we provide background on strict quantization, Hilbert bimodules, and symplectic dual pairs.  The remainder of the paper provides the two-step procedure to construct a symplectic dual pair as the classical limit of a Hilbert bimodule satisfying certain conditions.  First, in \S\ref{sec:classlim} we prove that one can perform a quotient construction on a Hilbert bimodule between C*-algebras of quantum observables to obtain a Hilbert bimodule between commutative C*-algebras of classical observables.  Second, in \S\ref{sec:dual}  we construct a symplectic manifold that is, in a certain sense, dual to this Hilbert bimodule between commutative C*-algebras when the appropriate conditions are satisfied.  Finally, \S\ref{sec:con} concludes with a discussion and future directions.  

\section{Background}
\label{sec:back}

\subsection{Strict Deformation Quantization}

We will use the tools of strict deformation quantization to relate mathematical models of classical and quantum systems.  We begin by reviewing definitions of the structures involved \citep{Ri89,Ri93,La98b,La06,BiGa15,Fe23}. Throughout, when $X$ is a topological space, we let $C_0(X)$ denote the collection of continuous functions vanishing at infinity, and we let $C_b(X)$ denote the collection of continuous and bounded functions.  Further, if $X$ is a metric space, then we let $UC_b(X)$ denote the collection of uniformly continuous and bounded functions on $X$.  First, we shall need the notion of a strict deformation quantization \citep[][p. 108, Def. 1.1.1]{La98b}.\medskip

\begin{definition}
\label{def:strquant}
Given a Poisson manifold $M$, a \emph{strict quantization} of a Poisson algebra $\mathcal{P}\subseteq C_b(M)$ consists in
\begin{itemize}
\item a locally compact set $I\subseteq \R$ called the \emph{base space}, containing $0$ as an acccumulation point;
\item a family of C*-algebras $(\mathfrak{A}_\hbar)_{\hbar\in I}$ called the \emph{fibers}, with norms denoted by $\norm{\cdot}_\hbar$; and
\item a family of linear, *-preserving \emph{quantization maps} $(\mathcal{Q}_\hbar: \mathcal{P}\to \mathfrak{A}_\hbar)_{\hbar\in I}$, with\newline $\mathcal{Q}_0: \mathcal{P}\hookrightarrow C_b(M)$ the identity and $\mathcal{Q}_\hbar(\mathcal{P})$ dense in $\mathfrak{A}_\hbar$ for each $\hbar\in I$.
\end{itemize}
Together, these structures are required to satisfy the following conditions for every $f,g\in\mathcal{P}$
\begin{enumerate}
\item (von Neumann's condition) $\lim_{\hbar\to 0}\norm{\mathcal{Q}_\hbar(f)\mathcal{Q}_\hbar(g) - \mathcal{Q}_\hbar(fg)}_\hbar = 0$;
\item (Dirac's condition) $\lim_{\hbar\to 0}\norm{\frac{i}{\hbar}[\mathcal{Q}_\hbar(f),\mathcal{Q}_\hbar(g)] - \mathcal{Q}_\hbar(\{f,g\})}_\hbar = 0$;
\item (Rieffel's condition) the map $\hbar\mapsto \norm{\mathcal{Q}_\hbar(f)}_\hbar$ is in $C_0(I)$.
\end{enumerate}
A strict quantization is called a \emph{deformation quantization} if, moreover, each quantization map $\mathcal{Q}_\hbar$ is injective and $\mathcal{Q}_\hbar(\mathcal{P})$ is closed under the product.
\end{definition}\medskip

In the definition of strict quantization, $\mathcal{P}$ may be any Poisson subalgebra of $C_b(M)$.  Note that we only consider bounded functions in the domain of our quantization maps.\medskip

\begin{example}
    Consider the Poisson manifold $\R^{2n}$ with the standard Poisson bracket and the Poisson algebra $\mathcal{P} = C_c^\infty(\R^{2n})$.  Consider the C*-algebra $\mathfrak{A}_0 = C_0(\R^{2n})$ of continuous functions vanishing at infinity and the C*-algebra $\mathfrak{A}_\hbar = \mathcal{K}(L^2(\R^n))$ of compact operators for $\hbar>0$.  Define the maps $\mathcal{Q}_\hbar: \mathcal{P}\to \mathfrak{A}_\hbar$ for $\hbar>0$ by
    \begin{align}
        (\mathcal{Q}_\hbar(f)\psi)(x) = \int_{\R^{2n}} \frac{d^np\ d^nq}{(2\pi\hbar)^n} e^{ip(x-q)/\hbar}f(p,\frac{1}{2}(x+q))\psi(q)
    \end{align}
    for all $\psi\in L^2(\R^n)$ and $f\in \mathcal{P}$.  These maps $(\mathcal{Q}_\hbar)_{\hbar\in [0,1]}$ define a strict deformation quantization known as Weyl quantization \citep[][p. 377, Ex. 23]{Fe23}.\medskip
\end{example}

The Weyl quantization maps can be extended to a wider class of Poisson manifolds associated with Lie groups and Lie groupoids (see \citep{Ri89,La99,BiGa15}).  We recall two examples.  The first concerns the quantization of a symmetry group associated with a physical system.\medskip

\begin{example}
\label{ex:quantLie}
    Let $G$ be a connected compact Lie group and consider the Poisson manifold $\mathfrak{g}^*$ given by the dual of the Lie algebra $\mathfrak{g}$ of $G$.  The Poisson algebra $\mathcal{P} = C_{PW}^\infty(\mathfrak{g}^*)$ is the collection of Paley-Weiner functions, i.e., whose Fourier transform is smooth and compactly supported.  The C*-algebras are $\mathfrak{A}_0 = C_0(\mathfrak{g}^*)$ and $\mathfrak{A}_\hbar = C^*(G)$ (the group C*-algebra of $G$) for $\hbar>0$. \citet[][p. 184, Thm. 1]{La98a} defines quantization maps $\mathcal{Q}_\hbar: \mathcal{P}\to\mathfrak{A}_\hbar$ for $\hbar>0$ and proves that they define a strict deformation quantization.\medskip
\end{example}

Our next example corresponds to a physical system given by a particle moving in an external Yang-Mills field.  The classical phase space of such a system is discussed by \citet{St77} and \citet{We78}.\medskip

\begin{example}
\label{ex:quantYM}
    Let $P\to Q$ be a principal bundle, where the total space $P$ is a Riemannian manifold, with typical fiber given by a connected compact Lie group $G$. Consider the Poisson manifold $(T^*P)/G$.  The Poisson algebra $\mathcal{P} = C_{PW}^\infty((T^*P)/G)$ is the collection of Paley-Weiner functions, i.e., whose fiberwise Fourier transform is smooth and compactly supported.  Consider the C*-algebra $\mathfrak{A}_0 = C_0((T^*P)/G)$ and  the C*-algebra $\mathfrak{A}_\hbar = \mathcal{K}(L^2(Q))\otimes C^*(G)$, for $\hbar>0$.  \citet[][p. 105, Thm. 1]{La93a} defines quantization maps $\mathcal{Q}_\hbar: \mathcal{P}\to\mathfrak{A}_\hbar$ for $\hbar>0$ and proves that they define a strict deformation quantization.\medskip
\end{example}

Recent work has also extended the Weyl quantization maps to different C*-algebras, including the almost periodic functions on the dual of a symplectic vector space \citep{BiHoRi04b,HoRi05,HoRiSc08} and the resolvent algebras \citep{vN19}.  Moreover, the Weyl quantization maps are not unique; on some spaces, one can define a distinct strict deformation quantization known as Berezin quantization \citep[][p. 137, Thm. 2.4.1]{La98b} (see also \citep{BeCo86,Co92}).  Since there can be distinct quantization maps on the same space, one can look for common structure underlying strict quantizations.

Two strict quantizations $(\mathcal{Q}_\hbar)_{\hbar\in I}$ and $(\mathcal{Q}'_\hbar)_{\hbar\in I}$ of a Poisson algebra $\mathcal{P}$ with the same fiber C*-algebras are called \emph{equivalent} if for every $f\in \mathcal{P}$, the map
\begin{align}
\hbar\in I \mapsto \norm{\mathcal{Q}_\hbar(f)-\mathcal{Q}'_\hbar(f)}_\hbar
\end{align}
is continuous \citep[][p. 109]{La98b}.  For example, it is known that Weyl quantization and Berezin quantization on $\R^{2n}$ are equivalent \citep[][p. 144, Prop. 2.6.3]{La98b}.

Equivalent quantizations share a common underlying structure called a bundle of C*-algebras \citep[][p. 677, Def. 1.1]{KiWa95}. 
 While it is more common in quantization theory to employ bundles of C*-algebras satisfying relatively weak continuity constraints \citep[][p. 110, Def. 1.2.1]{La98b}, we will use stronger constraints by requiring uniform continuity in the bundles considered here \citep[][p. 6, Def. 3.1]{StFe21a}.\medskip

\begin{definition}
A \emph{uniformly continuous bundle of C*-algebras} consists in
\begin{itemize}
\item a locally compact metric space $I$ called the \emph{base space};
\item a family of C*-algebras $(\mathfrak{A}_\hbar)_{\hbar\in I}$ called the \emph{fibers}, with norms denoted by $\norm{\cdot}_\hbar$;
\item a C*-algebra $\mathfrak{A}$ called the algebra of \emph{uniformly continuous sections}; and
\item a family of surjective *-homomorphisms $(\phi_\hbar: \mathfrak{A}\to\mathfrak{A}_\hbar)$ called the \emph{evaluation maps}.
\end{itemize}
Together, these structures are required to satisfy the following conditions for every $a\in\mathfrak{A}$:
\begin{enumerate}
\item $\norm{a} = \sup_{\hbar\in I}\norm{\phi_\hbar(a)}_{\hbar}$;
\item for each $f\in UC_b(I)$, there is a continuous section $fa\in \mathfrak{A}$ satisfying $\phi_\hbar(fa) = f(\hbar)\phi_\hbar(a)$ for each $\hbar\in I$;
\item the map $\hbar\mapsto \norm{\phi_\hbar(a)}_\hbar$ is in $UC_b(I)$.
\end{enumerate}
\end{definition}

The usual definition of a continuous bundle of C*-algebras is as above, but replaces each of the appearances of $UC_b(I)$ with $C_0(I)$.  Hence, the only difference in our definition of a uniformly continuous bundle of C*-algebras is that it employs a metric on the base space and requires uniform continuity of the sections.  For more on the comparison of different continuity constraints for bundles of C*-algebras, see \citet[][p. 26, Appendix B]{StFe21a}.

A strict quantization determines a continuous bundle of C*-algebras as long as it satisfies mild technical conditions.  The following construction produces a continuous bundle of C*-algebras just in case for every polynomial $P$ of the maps $[\hbar\mapsto\mathcal{Q}_\hbar(f)]$ for $f\in\mathcal{P}$, the map
\begin{align}
    \hbar\mapsto \norm{P}_\hbar
\end{align}
is continuous \citep[][p. 332]{BiHoRi04b}.  One generates a *-algebra $\tilde{\mathfrak{A}}\subseteq \prod_{\hbar\in I}\mathfrak{A}_\hbar$ of sections by taking pointwise products and sums of maps of the form
\begin{align}
[\hbar\mapsto \mathcal{Q}_\hbar(f)]
\end{align}
for $f\in \mathcal{P}$.  One defines a C*-algebra of continuous sections by
\begin{align}
\mathfrak{A} = \Big\{a\in \prod_{\hbar\in I}\mathfrak{A}_\hbar\ | \ \text{ the map $[\hbar\mapsto \norm{a(\hbar) - \tilde{a}(\hbar)}_\hbar]$ is in $C_0(I)$ for each $\tilde{a}\in\tilde{\mathfrak{A}}$}\Big\}.
\end{align}
With the evaluation maps given by
\begin{align}
\phi_\hbar(a) = a(\hbar)
\end{align}
for each $a\in\mathfrak{A}$ and $\hbar\in I$, this structure becomes a continuous bundle of C*-algebras.  Indeed, 
 one can show that equivalent quantizations determine the same algebra of continuous sections, and hence the same continuous bundle of C*-algebras \citep[][p. 111, Thm. 1.2.4]{La98b}.

We are primarily interested in the case where one has a strict quantization over the base space $[0,1]$, which determines a continuous bundle of C*-algebras over $[0,1]$.  Then the restriction of a continuous bundle over the closed interval $[0,1]$ to the half open interval $(0,1]$ is a uniformly continuous bundle of C*-algebras.  We understand a uniformly continuous bundle of C*-algebras over $(0,1]$ to encode information about the algebras of quantum observables and their scaling properties for values $\hbar>0$.  This will be our starting point in the remainder of the paper.

We consider the classical $\hbar\to 0$ limit as the ``inverse" process to quantization.  As a prototype for the $\hbar\to 0$ limit, we consider the extension of a uniformly continuous bundle of C*-algebras over a base space given by the half-open interval $(0,1]$ to the closed interval $[0,1]$, which includes the value $\hbar=0$.  Since $[0,1]$ is the one-point compactification of the half-open interval $(0,1]$, we treat the $\hbar\to 0$ limit as a particular instance of the extension of a uniformly continuous bundle of C*-algebras over a locally compact base space to the one-point compactification of that base space.

More precisely, suppose one is given a uniformly continuous bundle of C*-algebras over a locally compact metric space $I$ with the algebra of uniformly continuous sections $\mathfrak{A}$, fibers $(\mathfrak{A}_\hbar)_{\hbar\in I}$ and evaluation maps $(\phi_\hbar)_{\hbar\in I}$.  Now, consider the enlarged base space given by the one-point compactification $\dot{I} = I \cup \{0_I\}$ where we denote the point at infinity by $0_I$.  We likewise assume that $\dot{I}$ is a metric space.  \citet[][p. 9, Thm. 4.1 and p. 11, Thm. 4.2]{StFe21a} show that if the inclusion $I\hookrightarrow\dot{I}$ is isometric, then this bundle can be uniquely extended to the one-point compactification $\dot{I}$.  The fiber algebra $\mathfrak{A}_\hbar$ at any $\hbar \in \dot{I}$ (including the limit point $0_I$) is determined by
\begin{align}
\mathfrak{A}_{\hbar} = \mathfrak{A}/K_{\hbar}
\end{align}
where one takes the quotient by the closed two-sided ideal
\begin{align}
K_\hbar = \{ a\in\mathfrak{A}\ |\ \lim_{\hbar'\to \hbar}\norm{\phi_{\hbar'}(a)}_{\hbar'} = 0\}.
\end{align}
In particular, one has the unique limiting fiber algebra $\mathfrak{A}_{0_I} = \mathfrak{A}/K_{0_I}$.

When the uniformly continuous bundle is determined by a strict deformation quantization of a Poisson algebra, with the base space $I=(0,1]$, we have that $\mathfrak{A}_1$ represents the fully quantum theory and $0_I = 0$ so that $\mathfrak{A}_0$ represents the classical theory.  Indeed, in this case where the uniformly continuous bundle $\mathfrak{A}$ is generated by a strict deformation quantization of the Poisson algebra $\mathcal{P} = C_c^\infty(M)$ on a Poisson manifold $M$, the following statements are guaranteed to hold concerning the classical limit:
\begin{itemize}
    \item \emph{The classical limit algebra $\mathfrak{A}_0$ is abelian.}

    The uniqueness of the classical limit \citep[][p. 11, Thm. 4.2]{StFe21a} implies that $\mathfrak{A}_0$ is *-isomorphic to the abelian algebra $C_0(M)$ and hence, the classical limit algebra $\mathfrak{A}_0$ is abelian.  In this case, we have the identification of $M$ with the pure state space $\mathcal{P}(\mathfrak{A}_0)$ as topological spaces by Gelfand duality.  Another way to see this is to note that von Neumann's condition in Def. \ref{def:strquant} implies that in any uniformly continuous bundle generated by a strict quantization, the commutators of all uniformly continuous sections vanish in norm in the limit $\hbar\to 0$, which again implies that the classical limit algebra $\mathfrak{A}_0$ is abelian, and hence is an algebra of continuous functions on a topological space.\smallskip

    \item \emph{The classical limit space $M\cong \mathcal{P}(\mathfrak{A}_0)$ carries the structure of a smooth manifold.}

    It follows from \citet[][p. 22, Prop. 5.5]{StFe21a} that $C_c^\infty(M) \cong \mathcal{Q}(\mathcal{P})/K_0$ (Here $\mathcal{Q}$ is the global quantization map, which takes each function $f\in\mathcal{P}$ to its corresponding continuous section in $\mathfrak{A}$ determined by $\phi_\hbar(\mathcal{Q}(f)) = \mathcal{Q}_\hbar(f)$).  The algebra $C_c^\infty(M)$ is known to determine the smooth structure on $M$ by the construction of \citet{Ne03}, which we describe briefly here.  Following \citet[][p. 34-35]{Ne03}, we define the \emph{smooth envelope} of an algebra $\mathcal{P}$ of functions on a space $X$ as the algebra of all functions on the same space of the form
\begin{align}
g(f_1,...,f_n)(x) = g(f_1(x),...,f_n(x))
\end{align}
for all $x\in X$, where $f_1,...,f_n\in\mathcal{P}$ and $g\in C^\infty(\R^n)$.  We denote the smooth envelope of $\mathcal{P}$ by $\overline{\mathcal{P}}$.  Then for the smooth manifold $M$, we have $\overline{C_c^\infty(M)}\cong C^\infty(M)$.  Likewise, we have that the isomorphic algebra determined by the classical limit satisfies $\overline{\mathcal{Q}(\mathcal{P})/K_0} \cong C^\infty(M)$.  The latter algebra $C^\infty(M)$ (and hence, all algebras isomorphic to it) satisfies the conditions provided by Nestruev of being smooth \citep[][p. 37]{Ne03}, geometric \citep[][p. 23]{Ne03}, and complete \citep[][p. 31]{Ne03}.  Nestruev shows that these are necessary and sufficient conditions to guarantee that $M\cong \mathcal{P}(\mathfrak{A}_0)$ is a smooth manifold  \citep[][p. 77, Thm. 7.2 and p. 79, Thm. 7.7]{Ne03}.  To save space, we omit further discussion of Nestruev's conditions and direct the reader to the reference for a detailed exposition.
    \smallskip

    \item \emph{The classical limit space $M\cong \mathcal{P}(\mathfrak{A}_0)$ carries the structure of a Poisson bracket.}

    The $\hbar\to 0$ classical limit of the commutator in the fiber C*-algebras $(\mathfrak{A}_\hbar)_{\hbar\in I}$ determines a Poisson bracket on $M$ according to the definition given in \citet[][p. 20, Eq. (5.9)]{StFe21a}.  Hence, the structure of $M$ as a Poisson manifold is determined by the classical $\hbar\to 0$ limit.
\end{itemize}

\noindent Thus, the classical limit of a uniformly continuous bundle generated by a strict deformation quantization allows one to recover the geometric structures of the original space one quantized, as determined by structures at $\hbar>0$.

\subsection{Morphisms}

Strict quantization relates the Poisson manifolds used in classical physics to the C*-algebras used in quantum physics.  We will understand these Poisson manifolds and C*-algebras as objects in suitable categories.  Strict quantization can be understood to provide a map between objects that forms a part of a quantization functor \citep[][p. 18, Thm. 1]{La02a}.  We now define the kinds of morphisms we will consider in our categories.

On the classical side, morphisms are given by \emph{symplectic dual pairs}.  (We primarily follow \citep[][p. 542]{We83} and \citep[][p. 168, Def. 5.1]{La01b}, but see also \citep{Xu91,Xu92,Xu94a,La01b,La02a}).\medskip

\begin{definition}
A \emph{symplectic dual pair} $M\leftarrow S\to N$ between Poisson manifolds $M$ and $N$ consists in a symplectic manifold $S$ and smooth Poisson maps $J_M: S\to M$ and $J_N: S\to N$, where $N$ is given minus the Poisson bracket.  Together, they must satisfy for every $f\in C^\infty(M)$ and $g\in C^\infty(N)$
\begin{align}
\{J_M^*f,J_N^*g\} = 0,
\end{align}
where the Poisson bracket is determined by the symplectic form on $S$.\medskip
\end{definition}

In the definition of a symplectic dual pair, no further conditions on the Poisson or symplectic manifolds are required, although in order to guarantee the existence of symplectic dual pairs that can serve as identity morphisms from an object to itself, one restricts to integrable Poisson manifolds \citep[][p. 172, Lemma 5.12]{La01b} (See also \citep{CrFe04}).  We will sometimes focus on particular dual pairs in order to define a notion of composition.  We call a symplectic dual pair \emph{weakly regular} if the maps $J_M$ and $J_N$ are complete, and $J_N$ is a surjective submersion \citep[][cf. p. 171, Def. 5.10]{La01b}.  Here, we say $J_M$ is complete if whenever $f\in C^\infty(M)$ has complete Hamiltonian flow, then $J_M^*f$ has complete Hamiltonian flow on $S$, and likewise for $J_N$ \citep[][p. 67]{La98b}.  We call another dual pair $M\leftarrow \tilde{S}\to N$ with maps $\tilde{J}_M: \tilde{S}\to M$ and $\tilde{J}_N: \tilde{S}\to N$ isomorphic to the dual pair $M\leftarrow S\to N$ denoted above when there is a symplectomorphism $u: S\to\tilde{S}$ such that $\tilde{J}_M \circ u = J_M$ and $\tilde{J}_N\circ u = J_N$.  Sometimes we will refer to a symplectic dual pair by its middle space $S$, leaving the relevant mappings implied.  

Symplectic dual pairs are used to analyze symplectic realizations of Poisson manifolds.  A symplectic realization of a Poisson manifold $P$ is a symplectic manifold $S$ with a smooth Poisson map $J: S\to P$.  It is well known that a symplectic dual pair allows one to induce symplectic realizations of $N$ from those of $M$ \citep[][p. 292, Thm. 2]{La95b}.\medskip

\begin{example}
\label{ex:pmorphdualpair}
Suppose $J: N\to M$ is a smooth, injective Poisson immersion from a connected, simply connected symplectic manifold $N$ to a Poisson  manifold $M$.  In this case, $J$ defines a symplectomorphism between $N$ and $J(N)$.  Then for $S = J(N)\times N$ (where the second space $N$ is given minus the Poisson bracket) the map $J$ defines a symplectic dual pair
\begin{align}
    M\leftarrow J(N)\times N \rightarrow N
\end{align}
where $J_M = pr_M$ and $J_N = pr_N$ are the projection maps.  Thus, every smooth, injective Poisson immersion between connected, simply connected symplectic manifolds corresponds to a symplectic dual pair.  (See also \citet[][p. 50, Remark 4.1]{BuWe04} or \citet[][p. 285, Remark 8.9]{La01c} for a general construction of a dual pair for any smooth Poisson map.)\medskip
\end{example}

\begin{example}
\label{ex:classYMdualpair}
    Let $P\to Q$ be a principal bundle, where $P$ is a Riemannian manifold, with typical fiber given by a connected compact Lie group $G$.  Consider the Poisson manifolds $(T^*P)/G$ and $\mathfrak{g}^*$. 
 \citet[][p. 312]{La95b} constructs a symplectic dual pair 
 \begin{align}
     (T^*P)/G\leftarrow T^*P\rightarrow \mathfrak{g}^*.
 \end{align}
 This dual pair establishes a bijective correspondence between symplectic leaves in $(T^*P)/G$ and symplectic leaves in $\mathfrak{g}^*$, the latter of which are given by coadjoint orbits.\medskip
\end{example}

Symplectic dual pairs that are weakly regular can be composed by a kind of tensor product as follows \citep[][p. 159, Lemma 5.3]{La01b}.  Consider two symplectic dual pairs\newline $M\leftarrow S_1\to N$ with maps $\overset{1}{J}_M: S_1\to M$ and $\overset{1}{J}_N: S_1\to N$, and $N\leftarrow S_2\to P$ with maps $\overset{2}{J}_N: S_2\to N$ and $\overset{2}{J}_P: S_2\to P$.  Let
\begin{align}
    S_1\times_N S_2 = \Big\{(s_1,s_2)\in S_1\times S_2\ |\ \overset{1}{J}_N(s_1) = \overset{2}{J}_N(s_2)\Big\}.
    \end{align}
Since $S_1\times S_2$ carries the product symplectic structure determined by $S_1$ and $S_2$, we can consider the null distribution $\mathcal{N}_{S_1\times_NS_2}$ of vectors in $T(S_1\times_NS_2)$ that are null relative to the symplectic form.  Then define $S_1\circledcirc_N S_2 = (S_1\times_NS_2)/\mathcal{N}_{S_1\times_NS_2}$.  This is a symplectic manifold that defines a symplectic dual pair $M\leftarrow S_1\circledcirc_N S_2\to P$, which we consider the composition of the previous two morphisms \citep[][p. 6]{La02a}.

On the quantum side, morphisms are given by \emph{Hilbert bimodules} (We primarily follow \citep[][p. 144, Def. 3.1]{La01b}, but see also \citep{Ri74,Ri72a,La95,RaWi98,La01a}).\medskip

\begin{definition}
A Hilbert bimodule $\mathfrak{A}\rightarrowtail\mathcal{E}\leftarrowtail\mathfrak{B}$ between C*-algebras $\mathfrak{A}$ and $\mathfrak{B}$ consists in a vector space $\mathcal{E}$ carrying a $\mathfrak{B}$-valued positive definite inner product $\inner{\cdot}{\cdot}_\mathfrak{B}$ that is antilinear in the first argument and linear in the second.  Further, $\mathcal{E}$ has a right action of $\mathfrak{B}$ and a left action of $\mathfrak{A}$ by adjointable operators (Here, a linear operator $a$ on $\mathcal{E}$ is called adjointable \citep[][p. 8]{La95} if there is a linear operator $a^*$ such that $\inner{e_1}{ae_2}_\mathfrak{B} = \inner{a^*e_1}{e_2}_\mathfrak{B}$ for all $e_1,e_2\in\mathcal{E}$.)  The inner product is required to satisfy
\begin{align}
\inner{e_1}{e_2b}_\mathfrak{B} = \inner{e_1}{e_2}_\mathfrak{B}\cdot b
\end{align}
for all $e_1,e_2\in\mathcal{E}$ and $b\in\mathfrak{B}$.  The space $\mathcal{E}$ is required to be complete in the norm 
\begin{align}
    \norm{e}^2 = \norm{\inner{e}{e}_\mathfrak{B}},
\end{align}
where the right hand side uses the C*-norm in $\mathfrak{B}$.
Finally, the $\mathfrak{A}$-action is required to be nondegenerate in the sense that $\mathfrak{A}\mathcal{E}$ is dense in $\mathcal{E}$.
\end{definition}\medskip

We do not require that Hilbert bimodules be \emph{full}, in the sense that we do not require the ideal $\{\inner{e_1}{e_2}_\mathfrak{B}\ |\ e_1,e_2\in\mathcal{E}\}$ be dense in $\mathfrak{B}$.  We call another Hilbert bimodule $\mathfrak{A}\rightarrowtail \tilde{\mathcal{E}}\leftarrowtail\mathfrak{B}$ unitarily equivalent to the Hilbert bimodule $\mathfrak{A}\rightarrowtail\mathcal{E}\leftarrowtail\mathfrak{B}$ denoted above when there is a unitary map $U:\mathcal{E}\to\tilde{\mathcal{E}}$ intertwining both the left $\mathfrak{A}$-actions and the right $\mathfrak{B}$-actions.  Sometimes we will refer to a Hilbert bimodule by its carrying space $\mathcal{E}$, leaving the relevant actions and inner product implied.  It is well known that a Hilbert bimodule allows one to induce Hilbert space representations of $\mathfrak{B}$ from those of $\mathfrak{A}$ \citep[][p. 299]{La95b} (See also \citep{Ri72a,Ri74a,Ri74}).\medskip

\begin{example}
\label{ex:*hombimod}
Suppose $\beta: \mathfrak{A}\to\mathfrak{B}$ is a *-homomorphism between C*-algebras $\mathfrak{A}$ and $\mathfrak{B}$ that is non-degenerate, i.e., $\beta(\mathfrak{A})\mathfrak{B}$ is dense in $\mathfrak{B}$.  Then with $\mathcal{E} = \mathfrak{B}$, the map $\beta$ defines a Hilbert bimodule
\begin{align}
\mathfrak{A}\rightarrowtail\mathfrak{B}\leftarrowtail\mathfrak{B},
\end{align}
where $\mathfrak{B}$ acts on the right on itself by right multiplication, $\mathfrak{A}$ acts on the left on $\mathfrak{B}$ by $ab = \beta(a)b$ for every $a\in\mathfrak{A}$ and $b\in\mathfrak{B}$, and the $\mathfrak{B}$-valued inner product is given by $\inner{b_1}{b_2}_\mathfrak{B} = b_1^*b_2$ for all $b_1,b_2\in\mathfrak{B}$.  Thus, every non-degenerate *-homomorphism corresponds to a Hilbert bimodule.\medskip 
\end{example}

\begin{example}
\label{ex:quantYMbimod}
Let $P\to Q$ be a principal bundle, where $P$ is a Riemannian manifold, with typical fiber given by a connected compact Lie group $G$.  Consider the C*-algebras  $\mathcal{K}(L^2(Q))\otimes C^*(G)$ and $C^*(G)$ as determined by strict quantization in Ex. \ref{ex:quantLie} and \ref{ex:quantYM}.  \citet[][p. 313]{La95b} constructs a Hilbert bimodule
\begin{align}
    \mathcal{K}(L^2(Q))\otimes C^*(G)\rightarrowtail L^2(P)\leftarrowtail C^*(G).
\end{align}
This bimodule establishes a bijective correspondence between irreducible representations of $\mathcal{K}(L^2(Q))\otimes C^*(G)$ and irreducible representations of $C^*(G)$, the latter of which are given by unitary representations of $G$.\medskip
\end{example}

We establish some notation for dealing with a Hilbert bimodule $\mathcal{E}$.  We denote the set of adjointable operators on $\mathcal{E}$ by $\mathcal{L}(\mathcal{E})$.  Consider also the specific adjointable operators $\Theta_{\varphi,\eta}$ on $\mathcal{E}$ for $\varphi,\eta\in\mathcal{E}$ defined by
\begin{align}
\Theta_{\varphi,\eta}(\xi) = \eta\cdot \inner{\varphi}{\xi}_{\mathfrak{B}}
\end{align}
for all $\xi\in\mathcal{E}$.  The closed linear subspace of $\mathcal{L}(\mathcal{E})$ spanned by the operators of the form $\Theta_{\varphi,\eta}$ is called the \emph{compact operators} and denoted by $\mathcal{K}(\mathcal{E})$.

Hilbert bimodules can also be composed by a tensor product as follows.  Consider two Hilbert bimodules $\mathfrak{A}\rightarrowtail \mathcal{E}\leftarrowtail \mathfrak{B}$ and $\mathfrak{B}\rightarrowtail\mathcal{F}\leftarrowtail\mathfrak{C}$.  Let $\mathcal{N}_\mathfrak{B}$ denote the subspace of the algebraic tensor product $\mathcal{E}\otimes\mathcal{F}$ spanned by $\{eb\otimes f - e\otimes bf\ | e\in\mathcal{E}, f\in\mathcal{F}, b\in\mathfrak{B}\}$. 
 Then define $\mathcal{E}\dot{\otimes}_{\mathfrak{B}}\mathcal{F} = (\mathcal{E}\otimes\mathcal{F})/\mathcal{N}_{\mathfrak{B}}$.  This vector space carries a natural left $\mathfrak{A}$-action and right $\mathfrak{C}$-action, as well as a $\mathfrak{C}$-valued inner product determined by
 \begin{align}
\inner{e_1\otimes f_1}{e_2\otimes f_2}_{\mathfrak{C}} = \inner{f_1}{\inner{e_1}{e_2}_{\mathfrak{B}}f_2}_{\mathfrak{C}}
 \end{align}
 for $e_1,e_2\in\mathcal{E}$ and $f_1,f_2\in\mathcal{F}$.  We denote the completion of $\mathcal{E}\dot{\otimes}_{\mathfrak{B}}\mathcal{F}$ by $\mathcal{E}\otimes_\mathfrak{B}\mathcal{F}$, which thus defines a Hilbert bimodule $\mathfrak{A}\rightarrowtail\mathcal{E}\otimes_\mathfrak{B}\mathcal{F}\leftarrowtail\mathfrak{C}$ called the interior tensor product, which we consider the composition of the previous two morphisms \citep[][p. 145]{La01b}.

\citet{La95b,La01a,La05,La02a} shows that one can associate certain symplectic dual pairs between classical Poisson manifolds with Hilbert bimodules between their corresponding quantum C*-algebras.  As a particular case, his construction associates to each dual pair
\begin{align}
    (T^*P)/G\leftarrow T^*P\rightarrow \mathfrak{g}^*
\end{align}
from the construction in Ex. \ref{ex:classYMdualpair} \citep[][p. 110, Lemma 3]{La01a} a corresponding Hilbert bimodule
\begin{align}
    \mathcal{K}(L^2(Q))\otimes C^*(G)\rightarrowtail L^2(P)\leftarrowtail C^*(G)
\end{align}
from the construction in Ex. \ref{ex:quantYMbimod} \citep[][p. 106, Lemma 1]{La01a}.
This quantization of morphisms respects composition of dual pairs \citep[][p. 111, Thm. 3]{La01a} and Hilbert bimodules \citep[][p. 106, Thm. 2]{La01a}, so that quantization may be understood as a functor between suitable categories \citep[][p.18, Thm. 1]{La02a}.  We now turn to an investigation of the correspondence in the opposite direction through the classical limit.  We will establish that the classical limit can be understood as a functor for these same objects and morphisms.

\section{Classical Limit of a Hilbert Bimodule}
\label{sec:classlim}

We now consider the classical limit as the ``inverse" process of quantization, which we will use to define a functor.  We have already seen that the classical limit provides a map from quantum to classical objects.  We now show that, under suitable conditions, one can also take the classical limit of a quantum morphism in the form of a Hilbert bimodule to obtain a classical morphism in the form of a symplectic dual pair.

We proceed in this section by dealing with the general case where we have two continuous bundles of C*-algebras $((\mathfrak{A}_\hbar,\phi_\hbar)_{\hbar\in I},\mathfrak{A})$ and $((\mathfrak{B}_\hbar,\psi_\hbar)_{\hbar\in J},\mathfrak{B})$ over locally compact metric spaces $I$ and $J$.  We consider extending to the one-point compactifications $\dot{I} = I \cup \{0_I\}$ and $\dot{J} = I \cup \{0_J\}$, which we assume are both metric spaces, and we assume that the inclusions $I\hookrightarrow\dot{I}$ and $J\hookrightarrow\dot{J}$ are isometric. 
 Recall that to take the classical limit of objects, we extend these bundles to the base spaces $\dot{I}$ and $\dot{J}$, setting
    \begin{align}
K_{0_I} = \{ a\in\mathfrak{A}\ |\ \lim_{\hbar\to 0_I}\norm{\phi_{\hbar}(a)}_{\hbar} = 0\}\\
K_{0_J} = \{ b\in\mathfrak{B}\ |\ \lim_{\hbar\to 0_J}\norm{\psi_{\hbar}(b)}_{\hbar} = 0\},
\end{align}
and defining $\mathfrak{A}_{0_I} = \mathfrak{A}/K_{0_I}$ and $\mathfrak{B}_{0_J} = \mathfrak{B}/K_{0_J}$.

Suppose now that we have a continuous proper map $\alpha: I\to J$, and for some $\hbar_1\in I$, we have a Hilbert bimodule
\begin{align}
\mathfrak{A}_{\hbar_1}\rightarrowtail\mathcal{E}_{\hbar_1}\leftarrowtail\mathfrak{B}_{\alpha(\hbar_1)}
\end{align}
between the corresponding fiber algebras.  To take the classical limit of the morphism given by $\mathcal{E}_{\hbar_1}$, we will first lift it to obtain a morphism between the algebras of uniformly continuous sections $\mathfrak{A}$ and $\mathfrak{B}$ and then factor through the quotient to obtain a morphism  between the limiting fiber algebras $\mathfrak{A}_{0_I}$ and $\mathfrak{B}_{0_J}$.  With that classical Hilbert bimodule in hand, we will proceed in the next section to construct a symplectic dual pair between the Gelfand dual spaces of those classical algebras.

Here, we first describe the process of lifting the morphism given by $\mathcal{E}_{\hbar_1}$ to a morphism that is a Hilbert bimodule between the algebras of uniformly continuous sections $\mathfrak{A}$ and $\mathfrak{B}$:

\begin{itemize}
\item In the case where the continuous bundles of C*-algebras are generated by strict deformation quantizations, one can take a single Hilbert bimodule $\mathcal{E}_{\hbar_1}$ and use the quantization maps to generate a family of Hilbert bimodules $(\mathcal{E}_\hbar)_{\hbar\in I}$ as follows.  We associate with each strict deformation quantization $(\mathcal{Q}_\hbar)_{\hbar\in I}$ on a domain $\mathcal{P}$ a collection of \emph{rescaling maps} $R^\mathcal{Q}_{\hbar\to\hbar'}: \mathcal{Q}_\hbar(\mathcal{P})\to\mathcal{Q}_{\hbar'}(\mathcal{P})$ defined by
\begin{align}
R^\mathcal{Q}_{\hbar\to\hbar'} = \mathcal{Q}_{\hbar'}\circ (\mathcal{Q}_\hbar)^{-1}
\end{align}
for any $\hbar,\hbar'\in I$.  Suppose one has two strict deformation quantizations $(\mathfrak{A}_\hbar,\mathcal{Q}_\hbar)_{\hbar\in I}$ and $(\mathfrak{B}_\hbar,\mathcal{Q}'_\hbar)_{\hbar\in J}$ generating the uniformly continuous bundles $\mathfrak{A}$ and $\mathfrak{B}$.   For arbitrary $\hbar\in I$, consider the vector space $\tilde{\mathcal{E}}_\hbar = \mathcal{E}_{\hbar_1}$ with the left $\mathfrak{A}_\hbar$-action and right $\mathfrak{B}_{\alpha(\hbar)}$-action on $\tilde{\mathcal{E}}_{\hbar}$ given by
\begin{align}
a_\hbar\cdot \varphi &= R^\mathcal{Q}_{\hbar\to\hbar_1}(a_\hbar)\cdot \varphi\nonumber\\
\varphi\cdot b_{\alpha(\hbar)} &= \varphi\cdot R^{\mathcal{Q}'}_{\alpha(\hbar)\to\alpha(\hbar_1)}(b_{\alpha(\hbar)})
\end{align}
for all $a_\hbar\in\mathfrak{A}_\hbar$, $b_{\alpha(\hbar)}\in\mathfrak{B}_{\alpha(\hbar)}$, and $\varphi\in\tilde{\mathcal{E}}_{\hbar}$. 
If the function $\inner{\cdot}{\cdot}_{\mathfrak{B}_{\alpha(\hbar)}}$ on $\tilde{\mathcal{E}}_\hbar$ given by
\begin{align}
\inner{\varphi}{\eta}_{\mathfrak{B}_{\alpha(\hbar)}} = R^{\mathcal{Q}'}_{\alpha(\hbar_1)\to\alpha(\hbar)}\big(\inner{\varphi}{\eta}_{\mathfrak{B}_{\alpha(\hbar_1)}}\big)
\end{align}
for all $\varphi,\eta\in\tilde{\mathcal{E}}_\hbar$ defines a $\mathfrak{B}_{\alpha(\hbar)}$-valued inner product, and if with $\mathcal{E}_\hbar = \overline{\tilde{\mathcal{E}}_\hbar}$ the norm completion of $\tilde{\mathcal{E}}_\hbar$ one has that the structure defined for each $\hbar\in I$ by
\begin{align}
\label{eq:fibermod}
\mathfrak{A}_\hbar\rightarrowtail\mathcal{E}_\hbar\leftarrowtail\mathfrak{B}_{\alpha(\hbar)}
\end{align}
is a Hilbert bimodule, then we say that the original Hilbert bimodule $\mathcal{E}_{\hbar_1}$ is \emph{scaling}.  

\item Suppose now that one has a continuous field of Banach spaces $(\mathcal{E}_\hbar)_{\hbar\in I}$, where each $\mathcal{E}_\hbar$ is as in Eq. (\ref{eq:fibermod}) and with a space of continuous vector fields given by a subspace $\Gamma\subseteq \prod_{\hbar\in I}\mathcal{E}_\hbar$ (cf. \citep[][p. 211, Def. 10.1.2]{Di77}).  Define the vector space $\mathcal{E}$ as the collection of vector fields $\varphi\in\Gamma$ satisfying
\begin{align}
[\hbar\mapsto \inner{\varphi(\hbar)}{\varphi(\hbar)}_{\mathfrak{B}_{\alpha(\hbar)}}]\in \mathfrak{B}.
\end{align}
and
\begin{align}
[\hbar\mapsto \inner{\phi_\hbar(a)\cdot\varphi(\hbar)}{\phi_\hbar(a)\cdot\varphi(\hbar)}_{\mathfrak{B}_{\alpha(\hbar)}}]\in\mathfrak{B}
\end{align}
for all $a\in\mathfrak{A}$.
In this case, one has a left $\mathfrak{A}$-action and a right $\mathfrak{B}$-action on $\mathcal{E}$ given by
\begin{align}
(a\cdot \varphi)(\hbar) &= \phi_\hbar(a)\cdot \varphi(\hbar)\nonumber\\
(\varphi\cdot b)(\hbar) &= \varphi(\hbar)\cdot \psi_{\alpha(\hbar)}(b)
\end{align}
for all $a\in\mathfrak{A}$, $b\in\mathfrak{B}$, $\varphi\in\mathcal{E}$, and $\hbar\in I$.  If the function $\inner{\cdot}{\cdot}_{\mathfrak{B}}$ on $\mathcal{E}$ given by
\begin{align}
\psi_\hbar\big(\inner{\varphi}{\eta}_{\mathfrak{B}}\big) = \inner{\varphi(\hbar)}{\eta(\hbar)}_{\mathfrak{B}_{\alpha(\hbar)}}
\end{align}
defines a $\mathfrak{B}$-valued inner product for all $\varphi,\eta\in\mathcal{E}$ so that 
\begin{align}
\label{eq:lift}
\mathfrak{A}\rightarrowtail\mathcal{E}\leftarrowtail\mathfrak{B}
\end{align}
forms a Hilbert bimodule, then we say $\mathcal{E}$ is a \emph{lift} of the family $(\mathcal{E}_\hbar)_{\hbar\in I}$ to a morphism between the algebras of uniformly continuous sections $\mathfrak{A}$ and $\mathfrak{B}$.  Note that a lift is taken relative to the continuous field of Banach spaces $\Gamma$.  In concrete examples, an appropriate collection of continuous vector fields $\Gamma$ may be defined based on the context.

\item Finally, we say that the original Hilbert bimodule $\mathcal{E}_{\hbar_1}$ is \emph{continuously scaling} if $\mathcal{E}_{\hbar_1}$ is scaling, so that it defines a family of Hilbert bimodules $(\mathcal{E}_\hbar)_{\hbar\in I}$ as in Eq. (\ref{eq:fibermod}), and if this family of bimodules furthermore has a lift to a Hilbert bimodule $\mathcal{E}$ between $\mathfrak{A}$ and $\mathfrak{B}$ as in Eq. (\ref{eq:lift}).
\end{itemize} 

\begin{example}
\label{ex:bimodfam}
Suppose $\beta: \mathfrak{A}\to\mathfrak{B}$ is a *-homomorphism between the C*-algebras of uniformly continuous sections that is non-degenerate in the sense that $\beta(\mathfrak{A})\mathfrak{B}$ is dense in $\mathfrak{B}$, and also satisfies $\beta(K_\hbar)\subseteq K_{\alpha(\hbar)}$ for each $\hbar\in I$.  It follows \citep[][p. 11, Lemma 4.3]{StFe21} that for each $\hbar\in I$, there is a unique non-degenerate *-homomorphism $\beta_\hbar: \mathfrak{A}_\hbar\to\mathfrak{B}_{\alpha(\hbar)}$ satisfying $\beta_\hbar\circ \phi_\hbar = \psi_{\alpha(\hbar)}\circ \beta$.  At any $\hbar\in I$, consider the Hilbert bimodule 
\begin{align}
\label{eq:1bimod}
\mathfrak{A}_\hbar\rightarrowtail\mathfrak{B}_{\alpha(\hbar)}\leftarrowtail\mathfrak{B}_{\alpha(\hbar)}
\end{align}
associated with $\beta_\hbar$ from Ex. \ref{ex:*hombimod}.  The Hilbert bimodule in Eq. (\ref{eq:1bimod}) is scaling: it associates to each $\hbar'\neq \hbar\in I$ the Hilbert bimodule
\begin{align}
\label{eq:bimodfam}
\mathfrak{A}_{\hbar'}\rightarrowtail\mathfrak{B}_{\alpha(\hbar')}\leftarrowtail\mathfrak{B}_{\alpha(\hbar')}
\end{align}
corresponding to $\beta_{\hbar'}$ through the construction in  Ex. \ref{ex:*hombimod}.  With $\Gamma = \mathfrak{B}$ the space of continuous vector fields, the collection of vector spaces $(\mathfrak{B}_{\alpha(\hbar)})_{\hbar\in I}$ forms a continuous field of Banach spaces.  Relative to this choice of $\Gamma$, the family of Hilbert bimodules in Eq. (\ref{eq:bimodfam}) has a lift to the Hilbert bimodule
\begin{align}
\mathfrak{A}\rightarrowtail\mathfrak{B}\leftarrowtail\mathfrak{B}
\end{align}
associated with the original *-homomorphism $\beta$ through the construction in Ex. \ref{ex:*hombimod}.  Hence, the Hilbert bimodule in Eq. (\ref{eq:1bimod}) is continuously scaling.\medskip
\end{example}

Next, we show that whenever one has a Hilbert bimodule $\mathfrak{A}\rightarrowtail\mathcal{E}\leftarrowtail\mathfrak{B}$ between the algebras of uniformly continuous sections satisfying suitable conditions, it induces a Hilbert bimodule $\mathfrak{A}_{0_I}\rightarrowtail\mathcal{E}_{0_I}\leftarrowtail\mathfrak{B}_{0_J}$ between the fiber algebras at $\hbar = 0_I$ and $\hbar = 0_J$, which can be understood as the classical limit of the Hilbert bimodule.  We subsequently show that this Hilbert bimodule can be used to construct a symplectic dual pair.

\subsection{Quotient of a Hilbert Bimodule}
\label{sec:quotmod}
In this section, we follow the strategy given in \citet[][Prop. 3.25, p. 55]{RaWi98} to construct a Hilbert bimodule at $\hbar = 0_I$ via a quotient.  We again consider the situation where we have two uniformly continuous bundles of C*-algebras $((\mathfrak{A}_\hbar,\phi_\hbar)_{\hbar\in I},\mathfrak{A})$ and $((\mathfrak{B}_\hbar,\psi_\hbar)_{\hbar\in J},\mathfrak{B})$ over locally compact metric spaces $I$ and $J$, whose one-point compactifications $\dot{I}$ and $\dot{J}$ are likewise metric spaces with isometric inclusions $I\hookrightarrow\dot{I}$ and $J\hookrightarrow\dot{J}$.  We will refer to the following ideals in $\mathfrak{A}$ by
\begin{align}
\label{eq:idealA}
K_\hbar &= \{ a\in\mathfrak{A}\ |\ \lim_{\hbar'\to \hbar}\norm{\phi_{\hbar'}(a)}_{\hbar'} = 0\}
\end{align}
for each $\hbar\in \dot{I}$, including the point at infinity $\hbar = 0_I$. Likewise, we will refer to the corresponding ideals in $\mathfrak{B}$ by
\begin{align}
\label{eq:idealB}
K_\hbar &= \{ b\in\mathfrak{B}\ |\ \lim_{\hbar'\to \hbar}\norm{\psi_{\hbar'}(b)}_{\hbar'} = 0\}
\end{align}
for each $\hbar\in \dot{J}$, including the point at infinity $\hbar = 0_J$.  We assume the ambient C*-algebra for each ideal---either $\mathfrak{A}$ or $\mathfrak{B}$---will be clear from context.  We denote the canonical quotient mappings by
\begin{align}
\phi_{0_I}&: \mathfrak{A}\to \mathfrak{A}_{0_I}\nonumber\\
\psi_{0_J}&:\mathfrak{B}\to \mathfrak{B}_{0_J}
\end{align}
with $\mathfrak{A}_{0_I} = \mathfrak{A}/K_{0_I}$ and $\mathfrak{B}_{0_J} = \mathfrak{B}/K_{0_J}$.\medskip

\begin{definition}
Suppose that $\alpha: I\to J$ is a continuous proper map and $\mathfrak{A}\rightarrowtail\mathcal{E}\leftarrowtail\mathfrak{B}$ is a Hilbert bimodule.  We say that $\mathcal{E}$ is \emph{strongly non-degenerate} for $\alpha$ at $\hbar \in \dot{I}$ if we have $\overline{K_{\hbar}\cdot\mathcal{E}} \subseteq \overline{\mathcal{E}\cdot K_{\alpha(\hbar)}}$.  In the special case where $I=J$ and $\alpha$ is the identity map, we say $\mathcal{E}$ is strongly non-degenerate at $\hbar \in \dot{I}$.
\end{definition}\medskip

Given a continuous proper map $\alpha:I\to J$ and a strongly non-degenerate Hilbert bimodule $\mathfrak{A}\rightarrowtail\mathcal{E}\leftarrowtail\mathfrak{B}$ for $\alpha$ at $\hbar\in \dot{I}$, we define:
\begin{itemize}
\item a complex vector space
\begin{align}
\label{eq:limvec}
\mathcal{E}_\hbar = \mathcal{E}/\overline{\mathcal{E}\cdot K_{\alpha(\hbar)}}
\end{align}
with the canonical quotient map
\begin{align}
\gamma: \mathcal{E}\to\mathcal{E}_{\hbar}.
\end{align}
\item a left $\mathfrak{A}_{\hbar}$-action given by
\begin{align}
\label{eq:qAaction}
\phi_{\hbar}(a)\gamma(\varphi) = \gamma(a\cdot\varphi)
\end{align}
for all $a\in\mathfrak{A}$ and $\varphi\in\mathcal{E}$.
\item a right $\mathfrak{B}_{\alpha(\hbar)}$-action given by
\begin{align}
\label{eq:qBaction}
\gamma(\varphi)\cdot\psi_{\alpha(\hbar)}(b) = \gamma(\varphi\cdot b)
\end{align}
for all $b\in\mathfrak{B}$ and $\varphi\in\mathcal{E}$.
\item a bilinear $\mathfrak{B}_{\alpha(\hbar)}$-valued function given by
\begin{align}
\label{eq:qinner}
\inner{\gamma(\varphi)}{\gamma(\eta)}_{\mathfrak{B}_{\alpha(\hbar)}} = \psi_{\alpha(\hbar)}\big(\inner{\varphi}{\eta}_{\mathfrak{B}}\big)
\end{align}
for all $\varphi,\eta\in\mathcal{E}$.
\end{itemize}

\begin{theorem}
\label{thm:quotient}
If $\alpha: I\to J$ is a continuous proper map and $\mathfrak{A}\rightarrowtail\mathcal{E}\leftarrowtail\mathfrak{B}$ is a strongly non-degenerate Hilbert bimodule for $\alpha$ at $\hbar \in \dot{I}$, then the structure
\begin{align}
\mathfrak{A}_{\hbar}\rightarrowtail\mathcal{E}_{\hbar}\leftarrowtail\mathfrak{B}_{\alpha(\hbar)}
\end{align}
defined in Eqs. (\ref{eq:limvec})-(\ref{eq:qinner}) is a Hilbert bimodule.  In the case of $\hbar = 0_I$, we refer to $\mathcal{E}_{0_I}$ defined as above as \emph{the (Hilbert) classical limit of} $\mathcal{E}$.
\end{theorem}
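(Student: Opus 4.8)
The plan is to run the standard quotient-module construction (following \citep[][Prop.~3.25]{RaWi98}), where the one genuinely new input is that strong non-degeneracy is exactly what lets the left $\mathfrak{A}$-action descend to $\mathfrak{A}_\hbar$. Abbreviate $K := K_{\alpha(\hbar)}\subseteq\mathfrak{B}$, $\psi := \psi_{\alpha(\hbar)}$, and $\mathcal{N} := \overline{\mathcal{E}\cdot K}$, so that $\mathcal{E}_\hbar = \mathcal{E}/\mathcal{N}$ and $\ker\gamma = \mathcal{N}$. The first step is the key identity $\mathcal{N} = \{\varphi\in\mathcal{E}\ :\ \inner{\varphi}{\varphi}_{\mathfrak{B}}\in K\}$, together with the facts that $\langle\mathcal{N},\mathcal{E}\rangle_{\mathfrak{B}}\cup\langle\mathcal{E},\mathcal{N}\rangle_{\mathfrak{B}}\subseteq K$ and that $\mathcal{N}$ is a closed $\mathfrak{A}$-$\mathfrak{B}$-sub-bimodule of $\mathcal{E}$. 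The easy inclusion and the pairing statements follow from $\inner{\eta\cdot k}{\eta'}_{\mathfrak{B}} = k^*\inner{\eta}{\eta'}_{\mathfrak{B}}\in K$ (and its adjoint) plus closedness of $K$; for the nontrivial inclusion I would fix an approximate unit $(u_\lambda)$ for the C*-algebra $K$ and note $\norm{\varphi-\varphi\cdot u_\lambda}^2 = \norm{(1-u_\lambda)\inner{\varphi}{\varphi}_{\mathfrak{B}}(1-u_\lambda)}\to 0$ whenever $\inner{\varphi}{\varphi}_{\mathfrak{B}}\in K$, so $\varphi = \lim_\lambda \varphi\cdot u_\lambda\in\mathcal{N}$. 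Stability of $\mathcal{N}$ under the right $\mathfrak{B}$-action is clear, and under the left $\mathfrak{A}$-action it uses that adjointable operators are automatically $\mathfrak{B}$-linear and bounded, so $a\cdot(\eta\cdot k) = (a\cdot\eta)\cdot k$.

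With Step~1 in hand I would verify that the form $\inner{\gamma(\varphi)}{\gamma(\eta)}_{\mathfrak{B}_{\alpha(\hbar)}} := \psi(\inner{\varphi}{\eta}_{\mathfrak{B}})$ in Eq.~(\ref{eq:qinner}) is independent of representatives, which by sesquilinearity reduces to $\langle\mathcal{N},\mathcal{E}\rangle_{\mathfrak{B}}\cup\langle\mathcal{E},\mathcal{N}\rangle_{\mathfrak{B}}\subseteq K = \ker\psi$ from Step~1. The resulting form is $\mathfrak{B}_{\alpha(\hbar)}$-valued, sesquilinear (since $\psi$ is linear), positive (since $\psi$ is a $*$-homomorphism), and definite (this is precisely $\psi(\inner{\varphi}{\varphi}_{\mathfrak{B}}) = 0 \iff \varphi\in\mathcal{N}$, the nontrivial inclusion of Step~1), and the bimodule relation $\inner{\gamma(\varphi)}{\gamma(\eta)\cdot\psi(b)}_{\mathfrak{B}_{\alpha(\hbar)}} = \inner{\gamma(\varphi)}{\gamma(\eta)}_{\mathfrak{B}_{\alpha(\hbar)}}\cdot\psi(b)$ follows by applying $\psi$ to its counterpart in $\mathcal{E}$. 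The right action (\ref{eq:qBaction}) is well defined because $\psi$ is surjective and $\mathcal{N}$ is a right submodule. For the left action (\ref{eq:qAaction}), $\mathcal{N}$ being a left submodule lets $\varphi\mapsto\gamma(a\cdot\varphi)$ descend to $\mathcal{E}_\hbar$, and here the hypothesis enters: $K_\hbar\cdot\mathcal{E}\subseteq\overline{K_\hbar\cdot\mathcal{E}}\subseteq\overline{\mathcal{E}\cdot K_{\alpha(\hbar)}} = \ker\gamma$, so $a\mapsto[\gamma(\varphi)\mapsto\gamma(a\cdot\varphi)]$ annihilates $K_\hbar$ and thus factors through $\mathfrak{A}_\hbar = \mathfrak{A}/K_\hbar$. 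A direct computation with (\ref{eq:qinner}) shows $\gamma(\varphi)\mapsto\gamma(a^*\cdot\varphi)$ is adjoint to $\gamma(\varphi)\mapsto\gamma(a\cdot\varphi)$, so the left action is by adjointable operators and $\mathfrak{A}_\hbar\to\mathcal{L}(\mathcal{E}_\hbar)$ is a $*$-homomorphism, hence contractive; nondegeneracy is immediate since $\mathfrak{A}_\hbar\cdot\mathcal{E}_\hbar = \phi_\hbar(\mathfrak{A})\cdot\gamma(\mathcal{E}) = \gamma(\mathfrak{A}\cdot\mathcal{E})$ is dense in $\gamma(\mathcal{E}) = \mathcal{E}_\hbar$ because $\mathfrak{A}\cdot\mathcal{E}$ is dense in $\mathcal{E}$ and $\gamma$ is a continuous surjection.

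It remains to check completeness of $\mathcal{E}_\hbar$ in the norm $\norm{\gamma(\varphi)}^2 = \norm{\psi(\inner{\varphi}{\varphi}_{\mathfrak{B}})}$, which I would do by showing this coincides with the Banach-space quotient norm $d(\varphi,\mathcal{N})$ on $\mathcal{E}/\mathcal{N}$, which is complete. The bound $\norm{\gamma(\varphi)}\le d(\varphi,\mathcal{N})$ holds since $\psi$ is contractive and $\gamma(\varphi) = \gamma(\varphi-\zeta)$ for every $\zeta\in\mathcal{N}$; for the reverse, with the same approximate unit one has $\varphi\cdot u_\lambda\in\mathcal{E}\cdot K$ and $\norm{\varphi-\varphi\cdot u_\lambda}^2 = \norm{(1-u_\lambda)\inner{\varphi}{\varphi}_{\mathfrak{B}}(1-u_\lambda)}\le\norm{\inner{\varphi}{\varphi}_{\mathfrak{B}} - \inner{\varphi}{\varphi}_{\mathfrak{B}}u_\lambda}\to\norm{\inner{\varphi}{\varphi}_{\mathfrak{B}}+K} = \norm{\gamma(\varphi)}^2$, using $\norm{1-u_\lambda}\le 1$ and the classical fact that $\lim_\lambda\norm{x-xu_\lambda} = \norm{x+K}$ for $x\in\mathfrak{B}$, whence $d(\varphi,\mathcal{N})\le\norm{\gamma(\varphi)}$. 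Assembling the steps then verifies every Hilbert-bimodule axiom, and by construction this is the structure of Eqs.~(\ref{eq:limvec})--(\ref{eq:qinner}).

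The delicate points I expect to require care are the ideal characterization of $\mathcal{N}$ in Step~1 and the norm identification in Step~4: both rest on approximate-unit estimates (and, if one prefers a Cauchy--Schwarz route for the pairings, on the hereditariness of closed ideals), and they are exactly what make the quotient inner product positive definite and the quotient complete. The conceptual crux, however, is the descent of the left action: it is the sole place the strong non-degeneracy hypothesis $\overline{K_\hbar\cdot\mathcal{E}}\subseteq\overline{\mathcal{E}\cdot K_{\alpha(\hbar)}}$ is used, and it is the feature distinguishing this construction from the left--right symmetric quotient of Raeburn--Williams. Everything else is bookkeeping: $\psi$ being a $*$-homomorphism transports algebraic identities to the quotient, and $\gamma$ being a continuous surjection transports density.
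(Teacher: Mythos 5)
Your proof is correct, and up through the verification that the two actions and the $\mathfrak{B}_{\alpha(\hbar)}$-valued form descend to the quotient---including the observation that strong non-degeneracy is used exactly once, to let the left action factor through $\mathfrak{A}_\hbar=\mathfrak{A}/K_\hbar$---it coincides with the paper's argument. Where you genuinely diverge is the final step (positive definiteness and completeness). The paper constructs a linking algebra $\mathfrak{L}\subseteq\mathcal{L}(\mathcal{E}\oplus\mathfrak{B})$ with a closed two-sided ideal $\mathfrak{D}$ built from $\mathcal{K}(\mathcal{E},\overline{\mathcal{E}\cdot K_{\alpha(\hbar)}})$, $\overline{\mathcal{E}\cdot K_{\alpha(\hbar)}}$ and $K_{\alpha(\hbar)}$ (Appendix A), and uses the two-sided norm bounds of Lemma \ref{lemma:link} to identify $\norm{\gamma(\varphi)}_{\mathfrak{B}_{\alpha(\hbar)}}$ with the Banach quotient norm $\inf_{\xi\in\overline{\mathcal{E}\cdot K_{\alpha(\hbar)}}}\norm{\varphi+\xi}$, from which completeness and definiteness follow. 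You instead work directly with an approximate unit $(u_\lambda)$ of the ideal $K_{\alpha(\hbar)}$: the identity $\overline{\mathcal{E}\cdot K_{\alpha(\hbar)}}=\{\varphi\,:\,\inner{\varphi}{\varphi}_{\mathfrak{B}}\in K_{\alpha(\hbar)}\}$ gives definiteness, and the standard fact $\lim_\lambda\norm{x-xu_\lambda}=\norm{x+K_{\alpha(\hbar)}}$, combined with contractivity of $\psi_{\alpha(\hbar)}$, gives the same norm identification and hence completeness. Both routes are sound; yours is more elementary and self-contained (and closer in spirit to the original Raeburn--Williams quotient construction), avoiding the appendix machinery entirely, while the paper's linking-algebra approach packages the estimates into a structure ($\mathfrak{L}$, $\mathfrak{D}$) that it can reuse and that makes the compatibility of all the quotient data visible at once. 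The only points in your sketch that deserve explicit care in a write-up are the standard facts you invoke: that adjointable operators are automatically $\mathfrak{B}$-linear (needed for $\mathcal{N}$ to be a left submodule) and the approximate-unit computation $\inner{\varphi-\varphi u_\lambda}{\varphi-\varphi u_\lambda}_{\mathfrak{B}}=(1-u_\lambda)\inner{\varphi}{\varphi}_{\mathfrak{B}}(1-u_\lambda)$ interpreted in the unitization of $\mathfrak{B}$; both are routine.
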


\begin{proof}
\begin{enumerate}
\item First, we establish that the left action of $\mathfrak{A}_{\hbar}$ given in Eq. (\ref{eq:qAaction}) is well-defined.

Suppose that $\varphi,\varphi'\in\mathcal{E}$ are such that $\gamma(\varphi) = \gamma(\varphi')$ so that $\varphi-\varphi'\in\overline{\mathcal{E}\cdot K_{\alpha(\hbar)}}$.  Then for any $a\in \mathfrak{A}$, it follows from the fact that the $\mathfrak{A}$- and $\mathfrak{B}$-actions commute that $a\varphi - a\varphi'\in \overline{\mathcal{E}\cdot K_{\alpha(\hbar)}}$ so that $\gamma(a\varphi) = \gamma(a\varphi')$.

Likewise, suppose that $a,a'\in\mathfrak{A}$ are such that $\phi_{\hbar}(a) = \phi_{\hbar}(a')$ so that $a-a'\in K_{\hbar}$.  Then strong non-degeneracy implies that $a\varphi - a'\varphi = (a-a')\varphi \in \overline{\mathcal{E}\cdot K_{\alpha(\hbar)}}$.  It follows that $\gamma(a\varphi) = \gamma(a'\varphi)$.

\item Second, we establish that the right action of $\mathfrak{B}_{\alpha(\hbar)}$ given in Eq. (\ref{eq:qBaction}) is well-defined.

Suppose that $\varphi,\varphi'\in\mathcal{E}$ are such that $\gamma(\varphi) = \gamma(\varphi')$ so that $\varphi - \varphi' \in \overline{\mathcal{E}\cdot K_{\alpha(\hbar)}}$.  Then for any $b\in \mathfrak{B}$, it follows from the fact that $K_{\alpha(\hbar)}$ is an ideal that $\varphi\cdot b - \varphi'\cdot b\in \overline{\mathcal{E}\cdot K_{\alpha(\hbar)}}$ so that $\gamma(\varphi \cdot b) = \gamma(\varphi'\cdot b)$.

Likewise, suppose that $b,b'\in \mathfrak{B}$ are such that $\psi_{\alpha(\hbar)}(b) = \psi_{\alpha(\hbar)}(b')$ so that $b-b'\in K_{\alpha(\hbar)}$.  Then we have $\varphi\cdot b - \varphi\cdot b' = \varphi \cdot (b-b') \in \overline{\mathcal{E}\cdot K_{\alpha(\hbar)}}$.  It follows that $\gamma(\varphi \cdot b) = \gamma (\varphi\cdot b')$.

\item Third, we establish that the bilinear map given in Eq. (\ref{eq:qinner}) is well-defined.

Suppose that $\varphi,\varphi'\in \mathcal{E}$ are such that $\gamma(\varphi) = \gamma(\varphi')$ so that $\varphi - \varphi'\in \overline{\mathcal{E}\cdot K_{\alpha(\hbar)}}$.  Then since $K_{\alpha(\hbar)}$ is an ideal, we have $\inner{\varphi - \varphi'}{\xi}_{\mathfrak{B}}\in K_{\alpha(\hbar)}$ and $\inner{\xi}{\varphi - \varphi'}_{\mathfrak{B}}\in K_{\alpha(\hbar)}$ for any $\xi\in \mathcal{E}$.  Thus, we have
\begin{align}
    \inner{\gamma(\varphi)}{\gamma(\xi)}_{\mathfrak{B}_{\alpha(\hbar)}} - \inner{\gamma(\varphi')}{\gamma(\xi)}_{\mathfrak{B}_{\alpha(\hbar)}} &= \psi_{\alpha(\hbar)}\big(\inner{\varphi}{\xi}_{\mathfrak{B}}\big) - \psi_{\alpha(\hbar)}\big(\inner{\varphi'}{\xi}_{\mathfrak{B}}\big)\nonumber\\
    &= \psi_{\alpha(\hbar)}\big(\inner{\varphi- \varphi'}{\xi}_{\mathfrak{B}}\big) = 0
\end{align}
so that $\inner{\gamma(\varphi)}{\gamma(\xi)}_{\mathfrak{B}_{\alpha(\hbar)}} = \inner{\gamma(\varphi')}{\gamma(\xi)}_{\mathfrak{B}_{\alpha(\hbar)}}$.  Likewise, we have
\begin{align}
    \inner{\gamma(\xi)}{\gamma(\varphi)}_{\mathfrak{B}_{\alpha(\hbar)}} - \inner{\gamma(\xi)}{(\gamma(\varphi')}_{\mathfrak{B}_{\alpha(\hbar)}} &=\psi_{\alpha(\hbar)}\big(\inner{\xi}{\varphi}_{\mathfrak{B}}\big) - \psi_{\alpha(\hbar)}\big(\inner{\xi}{\varphi'}_{\mathfrak{B}}\big)\nonumber\\
    &=\psi_{\alpha(\hbar)}\big(\inner{\xi}{\varphi-\varphi'}_{\mathfrak{B}}\big) = 0
\end{align}
so that $\inner{\gamma(\xi)}{\gamma(\varphi)}_{\mathfrak{B}_{\alpha(\hbar)}} = \inner{\gamma(\xi)}{(\gamma(\varphi')}_{\mathfrak{B}_{\alpha(\hbar)}}$.

\item Fourth, we establish that the bilinear map given in Eq. (\ref{eq:qinner}) is a positive semi-definite inner product.

Suppose $\varphi,\xi\in\mathcal{E}$.  Then
\begin{align}
    \inner{\gamma(\varphi)}{\gamma(\xi)}_{\mathfrak{B}_{\alpha(\hbar)}}^* = \psi_{\alpha(\hbar)}\big(\inner{\varphi}{\xi}_{\mathfrak{B}}\big)^* = \psi_{\alpha(\hbar)}\big(\inner{\xi}{\varphi}_{\mathfrak{B}}\big) = \inner{\gamma(\xi)}{\gamma(\varphi)}_{\mathfrak{B}_{\alpha(\hbar)}},
\end{align}
so the inner product is Hermitian.  For any $b\in\mathfrak{B}$, we have
\begin{align}
    \inner{\gamma(\varphi)}{\gamma(\xi)\cdot \psi_{\alpha(\hbar)}(b)}_{\mathfrak{B}_{\alpha(\hbar)}} = \psi_{\alpha(\hbar)}\big(\inner{\varphi}{\xi\cdot b}_{\mathfrak{B}}\big) = \psi_{\alpha(\hbar)}\big(\inner{\varphi}{\xi}_{\mathfrak{B}}\cdot b\big) =  \inner{\gamma(\varphi)}{\gamma(\xi)}\cdot \psi_{\alpha(\hbar)}(b),
\end{align}
so the inner product is compatible with the right $\mathfrak{B}_{\alpha(\hbar)}$-action.  And finally, we have 
\begin{align}
    \inner{\gamma(\varphi)}{\gamma(\varphi)}_{\mathfrak{B}_{\alpha(\hbar)}} = \psi_{\alpha(\hbar)}\big(\inner{\varphi}{\varphi}_{\mathfrak{B}}\big)\geq 0,
\end{align}
so the inner product is positive semi-definite.

\item Fifth, we establish that the vector space $\mathcal{E}_{\hbar}$ is complete in the norm induced by the inner product, and that the inner product is positive definite.

Our strategy is to show that the norm induced by the inner product coincides with the quotient norm on $\mathcal{E}_{\hbar}$, in which the space is already complete and which is known to be positive definite.

We employ results about the linking algebra $\mathfrak{L}$ and its two-sided ideal $\mathfrak{D}$ established in Appendix \ref{app:link}.  We have that for any $\varphi\in \mathcal{E}$,
\begin{align}
    \norm{\gamma(\varphi)}_{\mathfrak{B}_{\alpha(\hbar)}} &= \left\Vert\begin{pmatrix}
    0 & \gamma(\varphi)\nonumber\\
    0 & 0
    \end{pmatrix}\right\Vert_{\mathfrak{L}/\mathfrak{D}}\nonumber\\
    &= \inf_{D\in\mathfrak{D}} \left\Vert \begin{pmatrix}
    0 & \varphi\nonumber\\
    0 & 0
    \end{pmatrix} + D\right\Vert\nonumber\\
    &= \inf_{\xi\in\overline{\mathcal{E}\cdot K_{\alpha(\hbar)}}} \left\Vert \begin{pmatrix}
    0 & \varphi\nonumber\\
    0 & 0
    \end{pmatrix} + \begin{pmatrix}
    0 & \xi\nonumber\\
    0 & 0
    \end{pmatrix}\right\Vert\nonumber\\
    &= \inf_{\xi\in \overline{\mathcal{E}\cdot K_{\alpha(\hbar)}}} \left\Vert \varphi + \xi\right\Vert.
\end{align}
The first, third, and fourth equalities follow from the bounds established in Lemma \ref{lemma:link}.  Since the last line is the canonical quotient norm on the vector space $\mathcal{E}_{\hbar}$, which is positive definite, and since the space $\mathcal{E}_{\hbar}$ is complete relative to this norm, this completes the proof.
\end{enumerate}
\end{proof}

Thus, we have shown that from a continuous proper map $\alpha: I\to J$ and a Hilbert bimodule $\mathfrak{A}\rightarrowtail\mathcal{E}\leftarrowtail\mathfrak{B}$ between the algebras of uniformly continuous sections of the bundles that is strongly non-degenerate for $\alpha$ at $\hbar = 0_I$, one can naturally define a Hilbert bimodule between the fibers at $\hbar = 0_I$ and $\hbar = 0_J$ given by the Hilbert classical limit
\begin{align}
\mathfrak{A}_{0_I}\rightarrowtail\mathcal{E}_{0_I}\leftarrowtail\mathfrak{B}_{0_J}.
\end{align}
It may be helpful to illustrate the Hilbert classical limit with our running example.\medskip

\begin{example}
    Suppose again we are in the situation of Ex. \ref{ex:bimodfam}. 
 That is, suppose $\beta: \mathfrak{A}\to\mathfrak{B}$ is a non-degenerate *-homomorphism between the C*-algebras of uniformly continuous sections satisfying $\beta(K_\hbar)\subseteq K_{\alpha(\hbar)}$ for each $\hbar\in I$.  One can take the classical limit of a *-homomorphism directly by the construction of \citet[][p. 14, Prop. 5.2]{StFe21a} by factoring through the quotients on $\mathfrak{A}_{0_I} = \mathfrak{A}/K_{0_I}$ and $\mathfrak{B}_{0_J} = \mathfrak{B}/K_{0_J}$ to obtain the unique non-degenerate *-homomorphism $\beta_{0_I}:\mathfrak{A}_{0_I}\to\mathfrak{B}_{0_J}$ satisfying $\beta_{0_I}\circ \phi_{0_I} = \psi_{0_I}\circ \beta$.  Hence, $\beta_{0_I}$ is associated with a Hilbert bimodule
 \begin{align}
\label{eq:classbimod}\mathfrak{A}_{0_I}\rightarrowtail\mathfrak{B}_{0_J}\leftarrowtail\mathfrak{B}_{0_J}
 \end{align}
 by the construction in Ex. \ref{ex:*hombimod}.

 On the other hand, we have that $\beta$ is directly associated with a Hilbert bimodule
 \begin{align}     \mathfrak{A}\rightarrowtail\mathfrak{B}\leftarrowtail\mathfrak{B},
 \end{align}
 by the construction in Ex. \ref{ex:*hombimod}.  This Hilbert bimodule is strongly non-degenerate at $\hbar = 0_I\in \dot{I}$, so it has a Hilbert classical limit as determined in Thm. \ref{thm:quotient}.  Since $\mathfrak{B}/\overline{\mathfrak{B}\cdot K_{\alpha(\hbar)}} = \mathfrak{B}/K_{\alpha(\hbar)} = \mathfrak{B}_{\alpha(\hbar)}$ for each $\hbar\in \dot{I}$, we have that the Hilbert classical limit at $\hbar = 0_I$ is the bimodule
 \begin{align} \mathfrak{A}_{0_I}\rightarrowtail\mathfrak{B}_{0_J}\leftarrowtail\mathfrak{B}_{0_J},
 \end{align}
 which is equivalent to the bimodule in Eq. (\ref{eq:classbimod}).
 
 Hence, the classical limit of a Hilbert bimodule determined by a *-homomorphism agrees with the Hilbert bimodule determined by the classical limit of a *-homomorphism.  Thus, the Hilbert classical limit of Hilbert bimodules we have defined provides an extension of the previous notion of the classical limit of a *-homomorphism from \citet[][p. 14, Lemma 4.3]{StFe21a}.
\end{example}

\subsection{Functoriality of the Quotient Bimodule}

Next, we will next establish that the correspondence given by the classical limit is functorial in the sense that it respects composition of bimodules by the tensor product.  As in Thm. \ref{thm:quotient} of the previous section, we suppose $\mathfrak{A}$ and $\mathfrak{B}$ are the algebras of uniformly continuous sections of uniformly continuous bundles of C*-algebras, and we suppose $\mathfrak{A}\rightarrowtail\mathcal{E}\leftarrowtail\mathfrak{B}$ is a strongly non-degenerate Hilbert bimodule for $\alpha$ at $0_I\in I$.  This implies that $\mathcal{E}$ has a Hilbert classical limit, which we denote $\mathfrak{A}_{0_I}\rightarrowtail\mathcal{E}_{0_I}\leftarrowtail\mathfrak{B}_{0_J}$.

 Moreover, we suppose now that $((\mathfrak{C}_\hbar,\zeta_\hbar)_{\hbar\in L},\mathfrak{C})$ is a further uniformly continuous bundle of C*-algebras over a locally compact metric space $L$, whose one-point compactification $\dot{L} = L\cup\{0_L\}$ is likewise a metric space with isometric inclusion $L\hookrightarrow \dot{L}$.  We use the notation
\begin{align}
    K_{0_L} = \{c\in\mathfrak{C}\ |\ \lim_{\hbar\to 0_L}\norm{\zeta_\hbar(c)} = 0\}
\end{align}
and we denote the canonical quotient mapping by
\begin{align}
    \zeta_{0_L}: \mathfrak{C}\to \mathfrak{C}_{0_L}
\end{align}
with $\mathfrak{C}_{0_L} = \mathfrak{C}/K_{0_L}$.  We now denote the map $\alpha:I\to J$ from the previous section by $\alpha_{IJ}$.  We suppose further that $\alpha_{JL}: J\to L$ is a continuous proper map and $\mathfrak{B}\rightarrowtail \mathcal{F}\leftarrowtail\mathfrak{C}$ is a strongly non-degenerate Hilbert bimodule for $\alpha_{JL}$ at $\hbar = 0_J$.  We consider the bimodule
\begin{align}
\mathfrak{A}_{0_I}\rightarrowtail(\mathcal{E}\otimes_{\mathfrak{B}}\mathcal{F})/\overline{\mathcal{E}\otimes_{\mathfrak{B}}\mathcal{F}\cdot K_{0_L}}\leftarrowtail \mathfrak{C}_{0_L}
\end{align}
which is obtained by first taking the interior tensor product of the bimodules and afterwards taking the Hilbert classical limit.  We must compare this to the bimodule
\begin{align} \mathfrak{A}_{0_I}\rightarrowtail (\mathcal{E}/\overline{\mathcal{E}\cdot K_{0_J}})\otimes_{\mathfrak{B}_{0_J}}(\mathcal{F}/\overline{\mathcal{F}\cdot K_{0_L}})\leftarrowtail \mathfrak{C}_{0_L},
\end{align}
which is obtained by first taking the Hilbert classical limit of the bimodules and afterward taking the interior tensor product.  We will now show that these bimodules are unitarily equivalent.\medskip

\begin{theorem}
Suppose that $\alpha_{IJ}: I\to J$ and $\alpha_{JL}: J\to L$ are continuous proper maps, and that $\mathfrak{A}\rightarrowtail\mathcal{E}\leftarrowtail\mathfrak{B}$ and $\mathfrak{B}\rightarrowtail\mathcal{F}\leftarrowtail\mathfrak{C}$ are strongly non-degenerate Hilbert bimodules for $\alpha$ and $\beta$ at $\hbar = 0_I$ and $\hbar = 0_J$.  Then the bimodules 
\begin{align}
    \mathfrak{A}_{0_I}\rightarrowtail (\mathcal{E}\otimes_{\mathfrak{B}}\mathcal{F})/\overline{\mathcal{E}\otimes_{\mathfrak{B}}\mathcal{F}\cdot K_{0_L}}\leftarrowtail \mathfrak{C}_{0_L},
\end{align}
and 
\begin{align}
    \mathfrak{A}_{0_I}\rightarrowtail (\mathcal{E}/\overline{\mathcal{E}\cdot K_{0_J}})\otimes_{\mathfrak{B}_{0_J}}(\mathcal{F}/\overline{\mathcal{F}\cdot K_{0_L}})\leftarrowtail \mathfrak{C}_{0_L},
\end{align}
are unitarily equivalent.
\end{theorem}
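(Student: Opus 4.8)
The plan is to build the required unitary explicitly on elementary tensors. Write $\gamma_{\mathcal{E}}\colon\mathcal{E}\to\mathcal{E}/\overline{\mathcal{E}\cdot K_{0_J}}$, $\gamma_{\mathcal{F}}\colon\mathcal{F}\to\mathcal{F}/\overline{\mathcal{F}\cdot K_{0_L}}$, and $\gamma_{\otimes}\colon\mathcal{E}\otimes_{\mathfrak{B}}\mathcal{F}\to(\mathcal{E}\otimes_{\mathfrak{B}}\mathcal{F})/\overline{\mathcal{E}\otimes_{\mathfrak{B}}\mathcal{F}\cdot K_{0_L}}$ for the three canonical quotient maps supplied by Thm.~\ref{thm:quotient}, and set $\alpha_{IL}=\alpha_{JL}\circ\alpha_{IJ}$, so that $\alpha_{IL}(0_I)=0_L$ and the first bimodule in the statement is exactly the Hilbert classical limit of $\mathcal{E}\otimes_{\mathfrak{B}}\mathcal{F}$ for the (continuous, proper) map $\alpha_{IL}$ at $0_I$.

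\textbf{Preliminary step.} Before the first bimodule is even defined one must check that $\mathcal{E}\otimes_{\mathfrak{B}}\mathcal{F}$ is strongly non-degenerate for $\alpha_{IL}$ at $0_I$, so that Thm.~\ref{thm:quotient} applies to it. This is the one genuinely analytic point, and I would prove it by a density chase on elementary tensors: given $k\in K_{0_I}$ and $e\otimes f$, strong non-degeneracy of $\mathcal{E}$ lets one approximate $k\cdot e$ by finite sums $\sum_i e_i\cdot b_i$ with $b_i\in K_{0_J}$, whence $(k\cdot e)\otimes f$ is approximated by $\sum_i e_i\otimes(b_i\cdot f)$ in $\mathcal{E}\otimes_{\mathfrak{B}}\mathcal{F}$; strong non-degeneracy of $\mathcal{F}$ then approximates each $b_i\cdot f\in\overline{K_{0_J}\cdot\mathcal{F}}$ by finite sums in $\mathcal{F}\cdot K_{0_L}$, so the whole element lies in $\overline{\mathcal{E}\otimes_{\mathfrak{B}}\mathcal{F}\cdot K_{0_L}}$. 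Boundedness of the maps involved lets the approximations compose, giving $\overline{K_{0_I}\cdot(\mathcal{E}\otimes_{\mathfrak{B}}\mathcal{F})}\subseteq\overline{\mathcal{E}\otimes_{\mathfrak{B}}\mathcal{F}\cdot K_{0_L}}$.

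\textbf{Main construction.} On the algebraic tensor product consider the bilinear map $(e,f)\mapsto\gamma_{\mathcal{E}}(e)\otimes\gamma_{\mathcal{F}}(f)$ into $(\mathcal{E}/\overline{\mathcal{E}\cdot K_{0_J}})\otimes_{\mathfrak{B}_{0_J}}(\mathcal{F}/\overline{\mathcal{F}\cdot K_{0_L}})$. By Eqs.~(\ref{eq:qAaction})--(\ref{eq:qBaction}) applied to $\gamma_{\mathcal{E}}$ and $\gamma_{\mathcal{F}}$ one has $\gamma_{\mathcal{E}}(e\cdot b)=\gamma_{\mathcal{E}}(e)\cdot\psi_{0_J}(b)$ and $\psi_{0_J}(b)\cdot\gamma_{\mathcal{F}}(f)=\gamma_{\mathcal{F}}(b\cdot f)$, so this map is $\mathfrak{B}$-balanced and descends to a linear $T\colon\mathcal{E}\dot{\otimes}_{\mathfrak{B}}\mathcal{F}\to(\mathcal{E}/\overline{\mathcal{E}\cdot K_{0_J}})\otimes_{\mathfrak{B}_{0_J}}(\mathcal{F}/\overline{\mathcal{F}\cdot K_{0_L}})$. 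The crux is the inner-product identity: applying in turn Eq.~(\ref{eq:qinner}) to $\gamma_{\mathcal{E}}$, the left $\mathfrak{B}_{0_J}$-action on $\mathcal{F}/\overline{\mathcal{F}\cdot K_{0_L}}$, Eq.~(\ref{eq:qinner}) to $\gamma_{\mathcal{F}}$, and the definition of the interior tensor product inner product yields, for $x,y$ in the algebraic tensor product,
\begin{align}
\inner{T(x)}{T(y)}_{\mathfrak{C}_{0_L}} = \zeta_{0_L}\!\big(\inner{x}{y}_{\mathfrak{C}}\big) = \inner{\gamma_{\otimes}(x)}{\gamma_{\otimes}(y)}_{\mathfrak{C}_{0_L}},
\end{align}
the last equality being Eq.~(\ref{eq:qinner}) applied to $\gamma_{\otimes}$. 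In particular $\norm{T(x)}=\norm{\gamma_{\otimes}(x)}$, so $T$ factors through $\gamma_{\otimes}$ to give a linear map on the dense subspace $\gamma_{\otimes}(\mathcal{E}\dot{\otimes}_{\mathfrak{B}}\mathcal{F})$ of $(\mathcal{E}\otimes_{\mathfrak{B}}\mathcal{F})/\overline{\mathcal{E}\otimes_{\mathfrak{B}}\mathcal{F}\cdot K_{0_L}}$ that is isometric and (by the display, using joint norm-continuity of the inner product) preserves the $\mathfrak{C}_{0_L}$-valued inner product; extend it by continuity to a map $\bar{T}$ on the whole quotient. Its range is closed, being the isometric image of a complete space, and contains every elementary tensor $\xi\otimes\eta$ since $\gamma_{\mathcal{E}}$ and $\gamma_{\mathcal{F}}$ are surjective, hence is dense, hence is all of $(\mathcal{E}/\overline{\mathcal{E}\cdot K_{0_J}})\otimes_{\mathfrak{B}_{0_J}}(\mathcal{F}/\overline{\mathcal{F}\cdot K_{0_L}})$.

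\textbf{Intertwining and conclusion.} It remains to check that $\bar{T}$ intertwines the left $\mathfrak{A}_{0_I}$- and right $\mathfrak{C}_{0_L}$-actions. Using the action formulas of Thm.~\ref{thm:quotient} together with $a\cdot(e\otimes f)=(a\cdot e)\otimes f$ and $(e\otimes f)\cdot c=e\otimes(f\cdot c)$ gives $\bar{T}\big(\phi_{0_I}(a)\cdot\gamma_{\otimes}(e\otimes f)\big)=\gamma_{\mathcal{E}}(a\cdot e)\otimes\gamma_{\mathcal{F}}(f)=\phi_{0_I}(a)\cdot\bar{T}\big(\gamma_{\otimes}(e\otimes f)\big)$ and, via $\gamma_{\mathcal{F}}(f\cdot c)=\gamma_{\mathcal{F}}(f)\cdot\zeta_{0_L}(c)$, $\bar{T}\big(\gamma_{\otimes}(e\otimes f)\cdot\zeta_{0_L}(c)\big)=\bar{T}\big(\gamma_{\otimes}(e\otimes f)\big)\cdot\zeta_{0_L}(c)$; density extends both to the whole quotient. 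Being a surjective module map that preserves the $\mathfrak{C}_{0_L}$-valued inner product, $\bar{T}$ is a unitary of Hilbert bimodules, so it witnesses the claimed unitary equivalence. I expect the preliminary step—inheritance of strong non-degeneracy by the interior tensor product—to be the main obstacle, since every other step is a formal unwinding of the quotient actions and inner products from Thm.~\ref{thm:quotient}, whereas that step requires a genuine approximation argument and a check that the two strong non-degeneracy hypotheses combine correctly across the balancing of the interior tensor product.
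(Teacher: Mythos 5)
Your construction is correct, and at its core it is the same map the paper uses: the canonical assignment $\gamma_{\otimes}(\varphi\otimes_{\mathfrak{B}}\psi)\mapsto\gamma_{\mathcal{E}}(\varphi)\otimes_{\mathfrak{B}_{0_J}}\gamma_{\mathcal{F}}(\psi)$ on elementary tensors. Where you diverge is in how unitarity is established. The paper writes down both $U$ and $U^*$ explicitly and then proves each is well defined by an algebraic manipulation with representatives $\varphi-\varphi'=\sum_i\xi_i b_i$, $\psi-\psi'=\sum_j\eta_j c_j$, invoking strong non-degeneracy of $\mathcal{F}$ to rewrite $\sum_i b_i\psi'$ as an element of $\mathcal{F}\cdot K_{0_L}$; the inverse is then exhibited directly. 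You instead prove the single inner-product identity $\inner{T(x)}{T(y)}_{\mathfrak{C}_{0_L}}=\zeta_{0_L}(\inner{x}{y}_{\mathfrak{C}})=\inner{\gamma_{\otimes}(x)}{\gamma_{\otimes}(y)}_{\mathfrak{C}_{0_L}}$ on the algebraic tensor product, and let it do all the work: well-definedness through $\gamma_{\otimes}$ and injectivity come from the resulting isometry, and surjectivity from closed range plus density of the image of elementary tensors. This buys you a tighter argument at exactly the point where the paper is loosest (its well-definedness computation really only addresses componentwise-equal representatives), and it makes the role of strong non-degeneracy transparent: it enters only through the quotient module structures supplied by Thm.~\ref{thm:quotient}, in particular through the left $\mathfrak{B}_{0_J}$-action on $\mathcal{F}/\overline{\mathcal{F}\cdot K_{0_L}}$ used in the middle step of the identity. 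Your preliminary step is also a genuine addition rather than redundancy: the paper never verifies that $\mathcal{E}\otimes_{\mathfrak{B}}\mathcal{F}$ is strongly non-degenerate for $\alpha_{JL}\circ\alpha_{IJ}$ at $0_I$, yet without it the left $\mathfrak{A}_{0_I}$-action on the first quotient in the statement is not licensed by Thm.~\ref{thm:quotient}; your density chase (using boundedness of $e\mapsto e\otimes f$ and $f\mapsto e\otimes f$ and the balancing relation $(e\cdot b)\otimes f=e\otimes(b\cdot f)$) closes that gap correctly. You also check the intertwining of the $\mathfrak{A}_{0_I}$- and $\mathfrak{C}_{0_L}$-actions explicitly, which the paper leaves implicit. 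In short: same unitary, but your route is cleaner on injectivity/surjectivity and more careful about the hypotheses needed for the quotient bimodule structures to exist.
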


\begin{proof}
We will explicitly define unitary maps
\begin{align}
    U: (\mathcal{E}\otimes_{\mathfrak{B}}\mathcal{F})/\overline{\mathcal{E}\otimes_{\mathfrak{B}}\mathcal{F}\cdot K_{0_L}}\to (\mathcal{E}/\overline{\mathcal{E}\cdot K_{0_J}})\otimes_{\mathfrak{B}_{0_J}}(\mathcal{F}/\overline{\mathcal{F}\cdot K_{0_L}})\nonumber\\
    U^*: (\mathcal{E}/\overline{\mathcal{E}\cdot K_{0_J}})\otimes_{\mathfrak{B}_{0_J}}(\mathcal{F}/\overline{\mathcal{F}\cdot K_{0_L}})\to (\mathcal{E}\otimes_{\mathfrak{B}}\mathcal{F})/\overline{\mathcal{E}\otimes_{\mathfrak{B}}\mathcal{F}\cdot K_{0_L}}.
\end{align}
by the continuous linear extension of the assignments
\begin{align}
U&\Big(\varphi \otimes_{\mathfrak{B}} \psi  + \overline{\mathcal{E}\otimes_{\mathfrak{B}}\mathcal{F}\cdot K_{0_L}}\Big) = (\varphi + \overline{\mathcal{E}\cdot K_{0_J}})\otimes_{\mathfrak{B}_{0_J}} (\psi + \overline{\mathcal{F}\cdot K_{0_L}})\nonumber\\
U^*&\Big((\varphi + \overline{\mathcal{E}\cdot K_{0_J}})\otimes_{\mathfrak{B}_{0_J}} (\psi + \overline{\mathcal{F}\cdot K_{0_L}}) \Big) = \varphi\otimes_{\mathfrak{B}}\psi + \overline{\mathcal{E}\otimes_{\mathfrak{B}}\mathcal{F}\cdot K_{0_L}}
\end{align}
for all $\varphi\in\mathcal{E}$ and $\psi\in\mathcal{F}$.

First, note that these maps are well-defined.  Suppose we  have $\varphi,\varphi'\in\mathcal{E}$ with $\varphi - \varphi' = \sum_i\xi_i\cdot b_i$ for some $b_i\in K_{0_J}$ and $\xi_i\in \mathcal{E}$.  Suppose also we have $\psi,\psi'\in\mathcal{F}$ with $\psi - \psi' = \sum_j\eta_j\cdot c_j$ for some $c_j\in K_{0_L}$ and $\eta_j\in \mathcal{F}$.  It follows from strong non-degeneracy that $\sum_i b_i\psi' = \sum_k \chi_k c'_k$ for some $\chi_k\in \mathcal{F}$ and $c'_k\in K_{0_L}$.  Hence,
\begin{align}
\varphi &\otimes_{\mathfrak{B}_{0_J}} \psi = \left(\varphi' + \sum_i\xi_i\cdot b_i\right)\otimes_{\mathfrak{B}_{0_J}} \left(\psi' + \sum_j\eta_j\cdot c_j\right)\nonumber\\
&= \varphi'\otimes_{\mathfrak{B}_{0_J}} \psi' + \left(\varphi' + \sum_i\xi_i\cdot b_i\right)\otimes _{\mathfrak{B}_{0_J}}\left(\sum_j\eta_j\cdot c_j\right) + \left(\sum_i\xi_i\cdot b_i\right) \otimes_{\mathfrak{B}_{0_J}} \psi'\nonumber\\
&= \varphi'\otimes_{\mathfrak{B}_{0_J}} \psi' + \left(\varphi' + \sum_i\xi_i\cdot b_i\right)\otimes _{\mathfrak{B}_{0_J}}\left(\sum_j\eta_j\cdot c_j\right) + \sum_i\xi_i\otimes _{\mathfrak{B}_{0_J}}\left(b_i\cdot \psi'\right)\nonumber\\
&=\varphi'\otimes_{\mathfrak{B}_{0_J}} \psi' + \left(\varphi' + \sum_i\xi_i\cdot b_i\right)\otimes _{\mathfrak{B}_{0_J}}\left(\sum_j\eta_j\cdot c_j\right) + \sum_i\xi_i\otimes_{\mathfrak{B}_{0_J}}\left(\sum_k\chi_k\cdot c'_k\right).
\end{align}

Note that
\begin{align}
    \left(\varphi' + \sum_i\xi_i\cdot b_i\right)\otimes _{\mathfrak{B}_{0_J}}\left(\sum_j\eta_j\cdot c_j\right)&= 0\nonumber\\
    \sum_i\xi_i\otimes_{\mathfrak{B}_{0_J}}\left(\sum_k\chi_k\cdot c'_k\right) &= 0
    \end{align}
in $(\mathcal{E}/\overline{\mathcal{E}\cdot K_{0_J}})\otimes_{\mathfrak{B}_{0_J}}(\mathcal{F}/\overline{\mathcal{F}\cdot K_{0_L}})$ so that
\begin{align}
U\Big(\varphi\otimes_{\mathfrak{B}}\psi + \overline{\mathcal{E}\otimes_{\mathfrak{B}}\mathcal{F}\cdot K_{0_L}}\Big) &= U\Big(\varphi'\otimes_{\mathfrak{B}}\psi'+ \overline{\mathcal{E}\otimes_{\mathfrak{B}}\mathcal{F}\cdot K_{0_L}}\Big)\nonumber\\
U^*\Big((\varphi + \overline{\mathcal{E}\cdot K_{0_J}})\otimes_{\mathfrak{B}_{0_J}}(\psi + \overline{\mathcal{F}\cdot K_{0_L}})\Big) &= U^*\Big((\varphi' + \overline{\mathcal{E}\cdot K_{0_J}})\otimes_{\mathfrak{B}_{0_J}}(\psi' + \overline{\mathcal{F}\cdot K_{0_L}})\Big).
\end{align}

Next, we show that $U$ and $U^*$ are unitary.  We denote the quotient maps for the vector spaces by
\begin{align}
\gamma_I&: \mathcal{E}\to \mathcal{E}/\overline{\mathcal{E}\cdot K_{0_J}}\nonumber\\
\gamma_J&: \mathcal{F}\to\mathcal{F}/\overline{\mathcal{F}\cdot K_{0_L}}\nonumber\\
\gamma_{IJ}&: \mathcal{E}\otimes_{\mathfrak{B}}\mathcal{F}\to (\mathcal{E}\otimes_{\mathfrak{B}}\mathcal{F})/\overline{\mathcal{E}\otimes_{\mathfrak{B}}\mathcal{F}\cdot K_{0_L}}.
\end{align}
Clearly, from what has already been shown, we have that for any $\varphi\in\mathcal{E}$ and $\psi\in\mathcal{F}$ that $U\big(\gamma_{IJ}(\varphi\otimes_{\mathfrak{B}}\psi)\big) = \gamma_I(\varphi)\otimes_{\mathfrak{B}_{0_J}}\gamma_J(\psi)$.

Then for any $\varphi,\varphi'\in\mathcal{E}$ and $\psi,\psi'\in\mathcal{F}$, we have
\begin{align}
\left \langle U(\gamma_{IJ}(\varphi\otimes_{\mathfrak{B}}\psi), U(\gamma_{IJ}(\varphi'\otimes_{\mathfrak{B}}\psi'))\right \rangle_{\mathfrak{C}_{0_L}} & = \left \langle\gamma_J(\psi), \inner{\gamma_I(\varphi)}{\gamma_I(\varphi')}_{\mathfrak{B}_{0_J}}\cdot \gamma_J(\psi')\right \rangle_{\mathfrak{C}_{0_L}}\nonumber\\
&= \left \langle \gamma_J(\psi), \psi_{0_J}\big(\inner{\varphi}{\varphi'}_{\mathfrak{B}}\big)\cdot \gamma_J(\psi')\right \rangle_{\mathfrak{C}_{0_L}}\nonumber\\
& = \left \langle \gamma_J(\psi), \gamma_J\big(\inner{\varphi}{\varphi'}_{\mathfrak{B}}\cdot \psi'\big)\right \rangle_{\mathfrak{C}_{0_L}}\nonumber\\
& = \zeta_{0_L}\Big(\left \langle \psi, \inner{\varphi}{\varphi'}_{\mathfrak{B}}\cdot \psi'\right \rangle_{\mathfrak{C}}\Big)\nonumber\\
&= \zeta_{0_L}\Big( \inner{\varphi\otimes_{\mathfrak{B}}\psi}{\varphi'\otimes_{\mathfrak{B}}{\psi'}}_{\mathfrak{C}}\Big)\nonumber\\
&= \left \langle \gamma_{IJ}(\varphi\otimes_{\mathfrak{B}}\psi), \gamma_{IJ}(\varphi'\otimes_{\mathfrak{B}}\psi')\right\rangle_{\mathfrak{C}_{0_L}}
\end{align}
and
\begin{align}
\left\langle U^*\big(\gamma_I(\varphi)\otimes_{\mathfrak{B}_{0_J}}\gamma_J(\psi)\big), U^*\big(\gamma_I(\varphi')\otimes_{\mathfrak{B}_{0_J}}\gamma_J(\psi')\big)\right \rangle_{\mathfrak{C}_{0_L}} &= \left\langle \gamma_{IJ}(\varphi\otimes_{\mathfrak{B}}\psi), \gamma_{IJ}(\varphi'\otimes_{\mathfrak{B}}\psi')\right\rangle_{\mathfrak{C}_{0_L}}\nonumber\\
&= \zeta_{0_L}\Big ( \left \langle \varphi\otimes_{\mathfrak{B}}\psi, \varphi'\otimes_{\mathfrak{B}}\psi' \right \rangle_{\mathfrak{C}} \Big)\nonumber\\
&= \zeta_{0_L}\Big( \left \langle \psi, \inner{\varphi}{\varphi'}_{\mathfrak{B}}\cdot \psi'\right \rangle_{\mathfrak{C}} \Big)\nonumber\\
& = \left \langle \gamma_{J}(\psi), \gamma_J\big( \inner{\varphi}{\varphi'}_{\mathfrak{B}}\cdot \psi'\big) \right \rangle_{\mathfrak{C}_{0_L}}\nonumber\\
& = \left \langle \gamma_{J}(\psi), \psi_{0_J}\big( \inner{\varphi}{\varphi'}_{\mathfrak{B}}\big)\cdot \gamma_J(\psi') \right \rangle_{\mathfrak{C}_{0_L}}\nonumber\\
&= \left \langle \gamma_{J}(\psi), \inner{\gamma_I(\varphi)}{\gamma_I(\varphi')}_{\mathfrak{B}_{0_J}}\cdot \gamma_J(\psi') \right \rangle_{\mathfrak{C}_{0_L}}\nonumber\\
&= \left \langle \gamma_I(\varphi)\otimes_{\mathfrak{B}_{0_J}}\gamma_J(\psi), \gamma_I(\varphi')\otimes_{\mathfrak{B}_{0_J}}\gamma_J(\psi')\right \rangle_{\mathfrak{C}_{0_L}}.
\end{align}
Moreover, $U$ and $U^*$ are clearly inverses, which completes the proof.
\end{proof}

This establishes the functoriality of the classical limit Hilbert bimodule.  Next, we show that once one has a classical limit Hilbert bimodule, one can construct a corresponding symplectic dual pair.

\section{Symplectic Dual Pair of a Classical Hilbert Bimodule}
\label{sec:dual}

We now fix the base space for simplicity by supposing that $I=J= (0,1]$.  Suppose that we have two strict deformation quantizations of Poisson manifolds $M$ and $N$.  We will denote by $(\mathfrak{A}_\hbar,\mathcal{Q}_\hbar)_{\hbar\in (0,1]}$ the strict deformation quantization of $\mathcal{P}_M = C_c^\infty(M)$, and we denote by $(\mathfrak{B}_\hbar,\mathcal{Q}'_\hbar)_{\hbar\in (0,1]}$ the strict deformation quantization of $\mathcal{P}_N = C_c^\infty(N)$.  We suppose that each of these strict deformation quantizations generates a corresponding uniformly continuous bundle of C*-algebras, denoted respectively by $((\mathfrak{A}_\hbar,\phi_\hbar)_{\hbar\in (0,1]},\mathfrak{A})$ and $((\mathfrak{B}_\hbar,\psi_\hbar)_{\hbar\in (0,1]},\mathfrak{B})$.  Suppose further that we are given a Hilbert bimodule
\begin{align}
\mathfrak{A}\rightarrowtail\mathcal{E}\leftarrowtail\mathfrak{B}
\end{align}
that is strongly non-degenerate at every $\hbar\in [0,1]$.  Then Thm. \ref{thm:quotient} establishes that $\mathcal{E}$ has a Hilbert classical limit, which we denote by
\begin{align}
\mathfrak{A}_0\rightarrowtail\mathcal{E}_0\leftarrowtail\mathfrak{B}_0.
\end{align}
We now proceed to construct a symplectic dual pair from this Hilbert bimodule $\mathcal{E}_0$ between the abelian C*-algebras $\mathfrak{A}_0$ and $\mathfrak{B}_0$.

\subsection{Construction of a Symplectic Space}
\label{sec:symplim}

We begin with a few preliminaries.  Recall that since $\mathcal{E}$ is assumed to be strongly non-degenerate at each $\hbar\in [0,1]$, it determines through Thm. \ref{thm:quotient} not only a Hilbert classical limit $\mathcal{E}_0$, but also a Hilbert bimodule $\mathcal{E}_\hbar$ between the fiber C*-algebras $\mathfrak{A}_\hbar$ and $\mathfrak{B}_\hbar$ at each value $\hbar\in[0,1]$.  It will be useful for us to rewrite the left $\mathfrak{A}_\hbar$-action and right $\mathfrak{B}_\hbar$-action on $\mathcal{E}_\hbar$ in terms of representations $\pi^{\mathcal{E}_\hbar}_{\mathfrak{A_\hbar}}: \mathfrak{A}_\hbar\to \mathcal{B}(\mathcal{E}_\hbar)$ and $\pi^\mathcal{E_\hbar}_{\mathfrak{B_\hbar}}: \mathfrak{B}_\hbar\to\mathcal{B}(\mathcal{E}_\hbar)$ as bounded operators on the Banach space $\mathcal{E}_\hbar$ defined by
\begin{align}
\pi^{\mathcal{E}_\hbar}_{\mathfrak{A}_\hbar}(a)\varphi &= a\cdot \varphi\nonumber\\
\pi^{\mathcal{E}_\hbar}_{\mathfrak{B}_\hbar}(b)\varphi &= \varphi\cdot b
\end{align}
for all $a\in\mathfrak{A}_\hbar$, $b\in\mathfrak{B}_\hbar$, and $\varphi\in\mathcal{E}_\hbar$.  The representations $\pi^{\mathcal{E}_\hbar}_{\mathfrak{A}_\hbar}$ and $\pi^{\mathcal{E}_\hbar}_{\mathfrak{B}_\hbar}$ are clearly defined for every value $\hbar\in [0,1]$.  Now, consider the tensor product C*-algebra $\mathfrak{A}_\hbar\otimes\mathfrak{B}_\hbar$, defined as the completion of the algebraic tensor product in the maximal C*-norm.  (In what follows, we always employ the maximal C*-norm for tensor product C*-algebras.)  The fact that $\pi_{\mathfrak{A}_\hbar}^{\mathcal{E}_\hbar}(\mathfrak{A}_\hbar)$ commutes with $\pi_{\mathfrak{B}_\hbar}^{\mathcal{E}_\hbar}(\mathfrak{B}_\hbar)$ ensures that the representation $\pi^{\mathcal{E}_\hbar}_{\mathfrak{A}_\hbar}\otimes \pi^{\mathcal{E}_\hbar}_{\mathfrak{B}_\hbar}$ of the tensor product algebra $\mathfrak{A}_\hbar\otimes\mathfrak{B}_\hbar$ is a *-homomorphism.  The tensor product algebra and this tensor product representation will be important in what follows.

Further, we denote the Gelfand duals (i.e., pure state spaces) of $\mathfrak{A}$ and $\mathfrak{B}$ by
\begin{align}
\label{eq:gelfand}
M &\cong \mathcal{P}(\mathfrak{A}_0)\nonumber\\
N &\cong \mathcal{P}(\mathfrak{B}_0).
\end{align}
The Gelfand dual of the C*-tensor product $\mathfrak{A}_0\otimes\mathfrak{B}_0$ is the product space
\begin{align}
\mathcal{P}(\mathfrak{A}_0\otimes\mathfrak{B}_0) \cong M\times N.
\end{align}
This is well known when $\mathfrak{A}_0$ and $\mathfrak{B}_0$ are unital C*-algebras, but it also generalizes to non-unital C*-algebras, as established in Appendix \ref{app:prod}.

Recall again from \S\ref{sec:back} that following \citet[][p. 34-35]{Ne03}, we define the smooth envelope of an algebra $\mathcal{P}$ of functions on a space $X$ as the algebra of all functions on the same space of the form
\begin{align}
g(f_1,...,f_n)(x) = g(f_1(x),...,f_n(x))
\end{align}
for all $x\in X$, where $f_1,...,f_n\in\mathcal{P}$ and $g\in C^\infty(\R^n)$.  We denote the smooth envelope of $\mathcal{P}$ by $\overline{\mathcal{P}}$.  Then for the smooth manifolds $M$ and $N$, we have $\overline{C_c^\infty(M)}\cong C^\infty(M)$ and $\overline{C_c^\infty(N)}\cong C^\infty(N)$.

This allows us to define the following structures:
\begin{itemize}
\item Define the C*-algebra
\begin{align}
\label{eq:dualpairalgebra}
\mathfrak{A}_0\circledast_\mathcal{E}\mathfrak{B}_0 = (\mathfrak{A}_0\otimes\mathfrak{B}_0)/\ker(\pi^{\mathcal{E}_0}_{\mathfrak{A}_0}\otimes\pi^{\mathcal{E}_0}_{\mathfrak{B}_0})
\end{align}
with canonical quotient map
\begin{align}
\rho: \mathfrak{A}_0\otimes\mathfrak{B}_0\to\mathfrak{A}_0\circledast_\mathcal{E}\mathfrak{B}_0.
\end{align}
We define a topological space $S$ as the Gelfand dual
\begin{align}
\label{eq:dualpairspace}
S = \mathcal{P}\big(\mathfrak{A}_0\circledast_\mathcal{E}\mathfrak{B}_0\big)
\end{align}
with the weak* topology.  Hence, we also have a continuous dual map
\begin{align}
\hat{\rho}: S \to M\times N.
\end{align}
Since $\rho$ is a surjection, it follows that $\hat{\rho}$ is an injection.
\item Define the algebra
\begin{align}
\label{eq:smooth}
\mathcal{P}_S = \hat{\rho}^*\Big[\overline{\overline{\mathcal{P}_M}\otimes \overline{\mathcal{P}_N}}\Big].
\end{align}
Note that this is well-defined because $\overline{\overline{\mathcal{P}_M}\otimes\overline{\mathcal{P}_N}}\cong C^\infty(M\times N)$  \citep[][Prop. 4.29, p. 49]{Ne03}.

In what follows, we suppose that $\mathcal{P}_S$ satisfies the known conditions provided by Nestruev of being smooth \citep[][p. 37]{Ne03}, geometric \citep[][p. 23]{Ne03}, and complete \citep[][p. 31]{Ne03}.  As mentioned in \S\ref{sec:back}, Nestruev shows that these are necessary and sufficient conditions to guarantee that $S$ is a smooth manifold  \citep[][p. 77, Thm. 7.2 and p. 79, Thm. 7.7]{Ne03}.  By construction, $\hat{\rho}$ is then a smooth map between manifolds.

\item  We define maps $J_M: S\to M$ and $J_N: S\to N$ given by
\begin{align}
\label{eq:proj}
J_M(s)(a) &= \lim_\delta (s\circ\rho)(a\otimes I_\delta^\mathfrak{B})\nonumber\\
J_N(s)(b)&= \lim_\delta (s\circ\rho)(I_\delta^\mathfrak{A}\otimes b)
\end{align}
for all $s\in S$, $a\in\mathfrak{A}$, and $b\in\mathfrak{B}$, where $(I^\mathfrak{A}_\delta)_{\delta\in \Delta}$ and $(I^\mathfrak{B}_\delta)_{\delta\in \Delta'}$ are approximate identities for $\mathfrak{A}_0$ and $\mathfrak{B}_0$, respectively.  Denoting the smooth projections of the product manifold by $pr_M: M\times N\to M$ and $pr_N: M\times N\to N$, we have $J_M = pr_M\circ \hat{\rho}$ and $J_N = pr_N\circ\hat{\rho}$.  It follows that $J_M$ and $J_N$ are smooth maps.

\item The Poisson bracket on $\mathcal{P}_S$ is defined in three steps.  Recall that the Poisson brackets on each of $\mathcal{P}_M$ and $\mathcal{P}_N$, denoted by $\{\cdot,\cdot\}_M$ and $\{\cdot,\cdot\}_N$, respectively, are determined from the strict quantization by Dirac's condition in Def. \ref{def:strquant}.

We proceed from this starting point.

\begin{enumerate}
\item First, we define Poisson brackets on each of $\overline{\mathcal{P}_M}$  and $\overline{\mathcal{P}_N}$ by extending the Poisson brackets locally on each of $M$ and $N$.  Suppose we are given arbitrary $\tilde{a}_1,\tilde{a}_2\in \overline{\mathcal{P}_M} = C^\infty(M)$ and $\tilde{b}_1,\tilde{b}_2\in\overline{\mathcal{P}_N} = C^\infty(N)$.  Then for any points $m\in M$ and $n\in N$, we know that there are neighborhoods $U$ of $m$ and $V$ of $n$ and compactly supported functions $a_i\in \mathcal{P}_M = C_c^\infty(M)$ and $b_j\in\mathcal{P}_N = C_c^\infty(N)$ such that $(\tilde{a}_i)_{|U} = (a_i)_{|U}$ and $(\tilde{b}_j)_{|V} = (b_j)_{|V}$.  So we can define
\begin{align}
\label{eq:poiss}
\{\tilde{a}_1,\tilde{a}_2\}_M(m) &= \{a_1,a_2\}_M(m)\nonumber\\
\{\tilde{b}_1,\tilde{b}_2\}_N(n) &= \{b_1,b_2\}_N(n).
\end{align}

\item Second, we define a bracket on $\overline{\mathcal{P}_M}\otimes\overline{\mathcal{P}_N}$ by the standard construction of a product Poisson structure.  For any $a_1,a_2\in\overline{\mathcal{P}_M}$ and $b_1,b_2\in\overline{\mathcal{P}_N}$, we define the Poisson bracket as the linear extension of the assignment
\begin{align}
\big\{a_1\otimes b_1,a_2\otimes b_2\big\}_{M\times N} = \{a_1,a_2\}_M \otimes b_1b_2 + a_1a_2 \otimes \{b_1,b_2\}_N.
\end{align}

\item Third, we extend the Poisson bracket to the smooth envelope $\overline{\overline{\mathcal{P}_M}\otimes\overline{\mathcal{P}_N}}\cong C^\infty(M\times N)$ so that the Poisson bracket on $\mathcal{P}_S$ may be defined by the pull-back via $\hat{\rho}$.  Consider any two functions
\begin{align}
f = g(f_1,...,f_k) && f' = g'(f'_1,...f'_{k'})
\end{align}
in the smooth envelope $\overline{\overline{\mathcal{P}_M}\otimes\overline{\mathcal{P}_N}}$, i.e., we have $f_1,...,f_k, f'_1,...,f'_{k'}\in \overline{\mathcal{P}_M}\otimes\overline{\mathcal{P}_N}$, $g\in C^\infty(\R^k)$, and $g'\in C^\infty(\R^{k'})$.  Now, we can use the chain rule as a stipulation that fully defines the Poisson bracket of $f$ and $f'$. 
 Define for each $x\in M\times N$,
\begin{align}
\{f,f'\}_{M\times N}(x) = \sum_{i=1}^k\sum_{j=1}^{k'} \frac{\partial g}{\partial x_i}_{|(f_1(x),...,f_k(x))} \frac{\partial g'}{\partial x_j}_{|(f'_1(x),...,f'_{k'}(x))} \{f_i,f'_j\}_{M\times N}(x).
\end{align}
It is straightforward to check that this extension of the Poisson bracket is anti-symmetric and bi-linear, and that it satisfies the Leibniz rule and the Jacobi identity.  Finally, we define the Poisson bracket on $S$ by
\begin{align}
\label{eq:prodpoiss}
\{f\circ\hat{\rho},f'\circ\hat{\rho}\}_S = \{f,f'\}_{M\times N}\circ\hat{\rho}
\end{align}
for all $f,f'\in C^\infty(M\times N)$
\end{enumerate}
\end{itemize}

We will need one further condition to ensure that $S$ becomes a symplectic manifold with the Poisson structure just defined.  In order to state the condition, we extend the evaluation maps of the uniformly continuous bundles of C*-algebras to the C*-tensor product by defining
$\Phi_\hbar: \mathfrak{A}\otimes\mathfrak{B}\to \mathfrak{A}_\hbar\otimes\mathfrak{B}_\hbar$ as the continuous linear extension of
\begin{align}
\Phi_\hbar = \phi_\hbar\otimes\psi_\hbar
\end{align}
for all $a\in\mathfrak{A}$ and $b\in\mathfrak{B}$ at each $\hbar\in [0,1]$.  We similarly define a representation of the C*-tensor product $\pi^{\mathcal{E}_\hbar}_{\mathfrak{A}_\hbar\otimes\mathfrak{B}_\hbar}: \mathfrak{A}_\hbar\otimes\mathfrak{B}_\hbar\to \mathcal{B}(\mathcal{E}_\hbar)$ given by the continuous linear extension of
\begin{align}
\pi^{\mathcal{E}_\hbar}_{\mathfrak{A}_\hbar\otimes\mathfrak{B}_\hbar} = \pi^{\mathcal{E}_\hbar}_{\mathfrak{A_\hbar}}\otimes \pi^{\mathcal{E}_\hbar}_{\mathfrak{B}_\hbar}
\end{align}
at each $\hbar\in [0,1]$.\\

\begin{definition}
    Suppose $\mathfrak{A}\rightarrowtail\mathcal{E}\leftarrowtail\mathfrak{B}$ is a strongly non-degenerate Hilbert bimodule at each $\hbar\in [0,1]$. 
    \begin{itemize}
    \item An element $c\in \mathfrak{A}\otimes\mathfrak{B}$ is in the \emph{asymptotic center} relative to $\mathcal{E}$ if for all $c'\in\mathfrak{A}\otimes\mathfrak{B}$,
    \begin{align}
    \lim_{\hbar\to 0}\left\Vert \frac{i}{\hbar}\Big[\pi^{\mathcal{E}_\hbar}_{\mathfrak{A}_\hbar\otimes\mathfrak{B}_\hbar}\big(\Phi_\hbar(c)\big),\pi^{\mathcal{E}_\hbar}_{\mathfrak{A}_\hbar\otimes\mathfrak{B}_\hbar}\big(\Phi_\hbar(c')\big)\Big] \right \Vert_\hbar = 0,
    \end{align}
    where the norm on operators on $\mathcal{E}_\hbar$ is the operator norm on $\mathcal{B}(\mathcal{E}_\hbar)$, taken relative to the $\mathfrak{B}_\hbar$-valued inner product on $\mathcal{E}_\hbar$.
    \item The Hilbert bimodule $\mathcal{E}$ is \emph{asymptotically irreducible} if whenever $c\in \mathfrak{A}\otimes\mathfrak{B}$ is in the asymptotic center relative to $\mathcal{E}$, it follows that $\pi^{\mathcal{E}_0}_{\mathfrak{A}_0\otimes\mathfrak{B}_0}(\Phi_0(c))$ is a scalar multiple of the identity.
    \end{itemize}
\end{definition}

Now we prove our general result that under suitable conditions, the classical limit of a Hilbert bimodule is a symplectic dual pair.  Recall that we assume the situation where we begin with two uniformly continuous bundles of C*-algebras generated by strict deformation quantizations.  We take as given a strongly non-degenerate Hilbert bimodule between the C*-algebras of uniformly continuous sections.  We further assume that the commutative algebra $\mathcal{P}_S$ defined in Eq. (\ref{eq:smooth}) from the Hilbert classical limit satisfies the known conditions provided by Nestruev of being smooth \citep[][p. 37]{Ne03}, geometric \citep[][p. 23]{Ne03}, and complete \citep[][p. 31]{Ne03}, which is equivalent to $S$ being a smooth manifold  \citep[][p. 77, Thm. 7.2 and p. 79, Thm. 7.7]{Ne03}.  What remains is to establish that the Poisson structure on $S$ comes from a symplectic form.  We now show that when such a strongly non-degenerate Hilbert bimodule is asymptotically irreducible, the induced Poisson bracket comes from a symplectic form, which implies $S$ forms the middle space of a symplectic dual pair.

We note that it would be interesting to search for further conditions on the Hilbert bimodules at values $\hbar>0$ that might imply $S$ is a smooth manifold.  For example, one might hope to find analogues to Nestruev's conditions of being  smooth \citep[][p. 37]{Ne03}, geometric \citep[][p. 23]{Ne03}, and complete \citep[][p. 31]{Ne03} that apply for non-commutative algebras at $\hbar>0$ yet guarantee these conditions are satisfied in the classical $\hbar\to 0$ limit.  We lack the space here to undertake this investigation of induced smooth structure from properties at $\hbar>0$.  We aim here only to prove that if $S$ has an appropriate smooth manifold structure induced from the algebra $\mathcal{P}_S$, then asymptotic irreducibility guarantees this manifold is canonically a symplectic manifold.\\

\begin{theorem}
\label{thm:symplim}
Suppose $\mathfrak{A}\rightarrowtail\mathcal{E}\leftarrowtail\mathfrak{B}$ is a strongly non-degenerate Hilbert bimodule at each $\hbar\in[0,1]$.  The Hilbert classical limit of $\mathcal{E}$ is the Hilbert bimodule denoted by $\mathfrak{A}_0\rightarrowtail\mathcal{E}_0\leftarrowtail\mathfrak{B}_0$.  Suppose further that:
\begin{enumerate}[(i)]
\item $\mathcal{P}_S$ as defined in Eq. (\ref{eq:smooth}) is a smooth, geometric, and complete algebra; and
\item the Hilbert bimodule $\mathcal{E}$ is asymptotically irreducible.
\end{enumerate}
Then the structure
\begin{align}
M\leftarrow S\rightarrow N
\end{align}
defined in Eqs. (\ref{eq:gelfand})-(\ref{eq:proj}) is a symplectic dual pair.  In this case, we refer to $S$ as \emph{the (symplectic) classical limit} of $\mathcal{E}$.
\end{theorem}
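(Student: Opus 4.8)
The plan is to split the statement into two parts: (I) that $M\leftarrow S\rightarrow N$ is a dual pair of smooth Poisson manifolds---that is, $J_M$ and $J_N$ are smooth Poisson maps (with $N$ carrying minus its Poisson bracket, as in the definition of a dual pair) satisfying $\{J_M^*f,J_N^*g\}_S=0$---and (II) that $S$ carries a symplectic form inducing the Poisson structure. Part (I) is mostly bookkeeping on top of the construction of \S\ref{sec:symplim}, whereas part (II) is where asymptotic irreducibility is used.

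For part (I): by hypothesis (i) and Nestruev's reconstruction theorem, $S$ is a smooth manifold with $C^\infty(S)\cong\mathcal{P}_S$, and since $\mathcal{P}_S=\hat\rho^*\big[C^\infty(M\times N)\big]$ the dual map $\hat\rho\colon S\to M\times N$ is a smooth embedding; hence $J_M=pr_M\circ\hat\rho$ and $J_N=pr_N\circ\hat\rho$ are smooth. I would next verify that the assignment in Eq.~(\ref{eq:prodpoiss}) really defines a bracket on $\mathcal{P}_S$, which comes down to showing that $\ker\hat\rho^*$ is a Poisson ideal of $\big(C^\infty(M\times N),\{\cdot,\cdot\}_{M\times N}\big)$---equivalently, that $\hat\rho(S)$ is a union of symplectic leaves of the product manifold. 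This I would extract from the fact that $\mathcal{E}$ is a strongly non-degenerate Hilbert bimodule at \emph{every} $\hbar\in[0,1]$, together with Dirac's condition: the kernel of $\pi^{\mathcal{E}_0}_{\mathfrak{A}_0\otimes\mathfrak{B}_0}$ is invariant under the $\hbar\to0$ limits of the Hamiltonian flows of quantized observables, which is exactly leaf-invariance. Granting this, $J_M$ and $J_N$ are Poisson and $\{J_M^*f,J_N^*g\}_S=0$ follows immediately: the product bracket on $M\times N$ is block diagonal, so $pr_M$ and $pr_N$ are Poisson and $\{pr_M^*f,pr_N^*g\}_{M\times N}=0$, and $\{\cdot,\cdot\}_S$ is by construction its pullback along $\hat\rho$.

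For part (II): the goal is to show the Poisson bivector $\Pi_S$ on $S$ is non-degenerate at every point. The bridge to hypothesis (ii) is an identity coming from von Neumann's and Dirac's conditions for both quantizations, applied to the tensor-product quantization $\hbar\mapsto\mathcal{Q}_\hbar\otimes\mathcal{Q}'_\hbar$ and then carried through the representations $\pi^{\mathcal{E}_\hbar}$: for $c,c'$ with $\Phi_\hbar(c)=\mathcal{Q}_\hbar(f)\otimes\mathcal{Q}'_\hbar(g)$ and $\Phi_\hbar(c')=\mathcal{Q}_\hbar(f')\otimes\mathcal{Q}'_\hbar(g')$ one has $\lim_{\hbar\to0}\norm{\tfrac{i}{\hbar}\big[\pi^{\mathcal{E}_\hbar}_{\mathfrak{A}_\hbar\otimes\mathfrak{B}_\hbar}(\Phi_\hbar(c)),\pi^{\mathcal{E}_\hbar}_{\mathfrak{A}_\hbar\otimes\mathfrak{B}_\hbar}(\Phi_\hbar(c'))\big]}_\hbar=\sup_{S}\big|\{f\otimes g,\,f'\otimes g'\}_{M\times N}\circ\hat\rho\big|$, where the right-hand side is the sup-norm of the corresponding element of $\mathfrak{A}_0\circledast_\mathcal{E}\mathfrak{B}_0$ computed via Gelfand duality, which by definition of $\{\cdot,\cdot\}_S$ equals $\sup_S\big|\{J_M^*f\cdot J_N^*g,\,J_M^*f'\cdot J_N^*g'\}_S\big|$. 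Because functions $(f\otimes g)\circ\hat\rho$ span a subalgebra whose smooth envelope is all of $\mathcal{P}_S$, and the bracket obeys the chain rule on $\mathcal{P}_S$, this shows $c$ lies in the asymptotic center relative to $\mathcal{E}$ if and only if the function $\widehat{\Phi_0(c)}\circ\hat\rho$ Poisson-commutes with every element of $\mathcal{P}_S\cong C^\infty(S)$, i.e.\ is a Casimir. Asymptotic irreducibility then forces every Casimir of $\big(C^\infty(S),\{\cdot,\cdot\}_S\big)$ to be locally constant. What remains is to deduce from this that $\Pi_S$ is non-degenerate; together with part (I) that yields the symplectic dual pair $M\leftarrow S\rightarrow N$.

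I expect the main obstacle to be precisely that last deduction---upgrading ``the only Casimirs are locally constant'' to ``$\Pi_S$ is everywhere non-degenerate.'' In general a connected Poisson manifold may have a trivial Casimir algebra yet be degenerate (for instance $\R^2$ with $\{x,y\}=x$), because the rank of the bivector can drop on a hypersurface without being detected by any global function and the leaf space can be non-Hausdorff. To close this gap I would use part (I): the symplectic leaves of $S$ are exactly the products $L_M(m)\times L_N(n)$ contained in $\hat\rho(S)$, so non-degeneracy is equivalent to $\dim L_M(m)+\dim L_N(n)$ being constant along $\hat\rho(S)$; I expect hypotheses (i) and (ii), together with this concrete presentation of $S$ as a Poisson submanifold of the block-diagonal product, to be needed in tandem to rule out such a rank drop, i.e.\ to show the leaf space is genuinely discrete. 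The remaining ingredients---the translation identity above, the approximate-identity description of $J_M$ and $J_N$ in Eq.~(\ref{eq:proj}), the non-unital Gelfand duality of Appendix~\ref{app:prod}, and the smooth-envelope density arguments---should all be routine.
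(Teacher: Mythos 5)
Your skeleton for part (II) starts out parallel to the paper: take an $f$ whose pullback Poisson-commutes with all of $\mathcal{P}_S$, use Dirac's condition for the tensor-product quantization $\mathcal{Q}_\hbar\otimes\mathcal{Q}'_\hbar$ plus density of the quantized algebra to place the corresponding section in the asymptotic center, and invoke asymptotic irreducibility to conclude that $f\circ\hat{\rho}$ is a scalar. But the step you yourself flag as the main obstacle---upgrading ``the only Casimirs are locally constant'' to ``the bivector is non-degenerate at every point''---is a genuine gap, and your proposed repair is not carried out. Your example ($\mathbb{R}^2$ with $\{x,y\}=x$) correctly shows the implication is false for a general Poisson manifold, and the fix you suggest (constancy of $\dim L_M(m)+\dim L_N(n)$ along $\hat{\rho}(S)$, ruling out rank drop via the leaf structure of the product) is stated as an expectation, never derived from hypotheses (i) and (ii). Since non-degeneracy of the Poisson structure on $S$ is precisely what makes $M\leftarrow S\rightarrow N$ a \emph{symplectic} dual pair, the proposal as written does not prove the theorem. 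Two subsidiary claims are also left unproven: your part (I) assertion that $\ker\hat{\rho}^*$ is a Poisson ideal via ``invariance under limits of Hamiltonian flows,'' and your two-sided ``translation identity'' equating the limit of rescaled commutator norms with $\sup_S$ of the bracket; the paper uses only the one-sided estimate that Dirac's condition directly supplies.

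The paper closes the crucial step by a different route from your leaf-dimension strategy: it localizes the irreducibility argument rather than reasoning about the global Casimir algebra or the leaf space. For each $f\in\mathcal{P}_M\otimes\mathcal{P}_N$ it shows that if $B(d(f\circ\hat{\rho})\wedge\eta)=0$ for all covector fields $\eta$, then $f\circ\hat{\rho}$ is constant, hence $d(f\circ\hat{\rho})=0$; observing that this statement is local, it concludes that at each point $s\in S$ any covector of the form $d(f\circ\hat{\rho})_{|s}$ lying in the kernel of $B_{|s}$ must vanish. It then propagates this from $\mathcal{P}_M\otimes\mathcal{P}_N$ to $\overline{\mathcal{P}_M}\otimes\overline{\mathcal{P}_N}$ by local agreement with compactly supported representatives, and to the full smooth envelope by the chain rule built into the definition of $\{\cdot,\cdot\}_{M\times N}$, so that the relevant differentials exhaust $T^*_sS$ (the smooth structure on $S$ being the one determined by $\mathcal{P}_S$ under hypothesis (i)); non-degeneracy of $B_{|s}$ at every $s$ follows with no appeal to rank constancy or discreteness of the leaf space. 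To complete your proposal you would need either to reproduce this function-by-function, pointwise bootstrapping, or to actually prove the rank-constancy claim you conjecture---as it stands, condition (ii) is never converted into non-degeneracy of the symplectic form.
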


\begin{proof}
\begin{enumerate}
    \item First, we establish that $S$ is symplectic with respect to the Poisson bracket defined in Eqs. (\ref{eq:poiss})-(\ref{eq:prodpoiss}).
    Suppose $f = f_M\otimes f_N \in \mathcal{P}_M\otimes \mathcal{P}_N$ is such that
\begin{align}
    \{f\circ \hat{\rho}, g\}_S = 0
\end{align}
for all $g\in \mathcal{P}_S$.
Define the map $\tilde{\mathcal{Q}}_\hbar: \mathfrak{A}_0\otimes\mathfrak{B}_0\to\mathfrak{A}_\hbar\otimes\mathfrak{B}_\hbar$ by
\begin{align}
    \tilde{\mathcal{Q}}_\hbar = \mathcal{Q}_\hbar\otimes\mathcal{Q}'_\hbar
\end{align}
Then for every $g = g_M\otimes g_N\in \mathcal{P}_M\otimes\mathcal{P}_N$, we know that
\begin{align}
    \lim_{\hbar\to 0} \left\Vert \frac{i}{\hbar}\Big[\pi_{\mathfrak{A}_\hbar\otimes\mathfrak{B}_\hbar}^{\mathcal{E}_\hbar}(\tilde{\mathcal{Q}}_\hbar(f)),\pi_{\mathfrak{A}_\hbar\otimes\mathfrak{B}_\hbar}^{\mathcal{E}_\hbar}(\tilde{\mathcal{Q}}_\hbar(g))\Big]\right\Vert_\hbar = 0.
\end{align}
Hence, since $\tilde{\mathcal{Q}}_\hbar[\mathcal{P}_M\otimes\mathcal{P}_N]$ is dense in $\mathfrak{A}_\hbar\otimes\mathfrak{B}_\hbar$, we have that for all $c\in \mathfrak{A}_\hbar\otimes\mathfrak{B}_\hbar$,
\begin{align}
    \lim_{\hbar\to 0} \left\Vert \frac{i}{\hbar}\Big[\pi_{\mathfrak{A}_\hbar\otimes\mathfrak{B}_\hbar}^{\mathcal{E}_\hbar}(\tilde{\mathcal{Q}}_\hbar(f)),\pi_{\mathfrak{A}_\hbar\otimes\mathfrak{B}_\hbar}^{\mathcal{E}_\hbar}(c)\Big]\right\Vert_\hbar = 0.
\end{align}
Since $\mathcal{E}$ is asymptotically irreducible, it follows that $\pi_{\mathfrak{A}_0\otimes\mathfrak{B}_0}^{\mathcal{E}_0}(f) = f\circ\hat{\rho} = \alpha I$ for $\alpha\in\mathbb{C}$.

Now, we reformulate what we have just shown in terms of the Poisson bivector $B$ defined by
\begin{align}
    B(dh_1\wedge dh_2) = \{h_1,h_2\}_S
\end{align}
for all $h_1,h_2\in \mathcal{P}_S$.  We have established that for any $f\in \mathcal{P}_M\otimes \mathcal{P}_N$, whenever
\begin{align}
    B(d(f\circ\hat{\rho})\wedge \eta) = 0
\end{align}
for all covector fields $\eta$ on $S$, it follows that $f\circ\hat{\rho}$ is a constant scalar field on $S$, and hence, $d(f\circ \hat{\rho}) = 0$.  Since this equation is local, we have that at any point $s\in S$, if
\begin{align}
    B_{|s}(d(f\circ\hat{\rho})_{|s}\wedge \eta_{|s}) = 0
\end{align}
for all covectors $\eta_{|s}$ at $s$, then $d(f\circ\hat{\rho})_{|s} = 0$.

Similarly, if $\tilde{f}\in \overline{\mathcal{P}_M}\otimes\overline{\mathcal{P}_N}$, then at each point $s\in S$, there is some neighborhood $U$ of $s$ and some $f\in \mathcal{P}_M\otimes\mathcal{P}_N$ such that $(\tilde{f}\circ \hat{\rho})_{|U} = (f\circ \hat{\rho})_{|U}$.  Hence, $d(f\circ \hat{\rho})_{|s} = d(\tilde{f}\circ \hat{\rho})_{|s}$.  It follows that if
\begin{align}
    B_{|s}(d(\tilde{f}\circ \hat{\rho})_{|s}\wedge \eta_{|s}) = B_{|s}(d(f\circ \hat{\rho})_{|s}\wedge \eta_{|s}) = 0
\end{align}
for all covectors $\eta_{|s}$ at $s\in S$, then $d(\tilde{f}\circ \hat{\rho})_{|s} = d(f\circ \hat{\rho})_{|s} = 0.$

Finally, we must consider the case of $f\in \overline{\overline{\mathcal{P}_M}\otimes\overline{\mathcal{P}_N}}$.  Then $f = g(f_1,...,f_k)$ for some $f_1,...,f_k\in \overline{\mathcal{P}_M}\otimes\overline{\mathcal{P}_N}$, and so for any $f'\in \mathcal{P}_S$, we have
\begin{align}
    B_{|s}(d(f\circ\hat{\rho})_{|s}\wedge df'_{|s}) = \{f\circ\hat{\rho},f'\}_S(s) &= \sum_{i=1}^k \frac{\partial g}{\partial x_i}_{|(f_1(s),...,f_k(s))} \{f_i,f'\}_{M\times N}(s)\nonumber\\
    &= \sum_{i=1}^k \frac{\partial g}{\partial x_i}_{|(f_1(s),...,f_k(s))} B_{|s}(d(f_i\circ \hat{\rho})_{|s} \wedge df'_{|s})
\end{align}
at each $s\in S$.  Since $\frac{\partial g}{\partial x_i}$ is independent of $f'$, it follows that whenever
\begin{align}
    B_{|s}(d(f\circ\hat{\rho})_{|s}\wedge \eta_{|s}) = 0
\end{align}
for all covectors $\eta_{|s}$ at $s\in S$, we have
\begin{align}
    B_{|s}(d(f_i\circ\hat{\rho})_{|s}\wedge \eta_{|s}) = 0
\end{align}
for all $i=1,...,k$, which implies $d(f_i\circ \hat{\rho})_{|s} = 0$ for all $i=1,...,k$ and thus, $d(f\circ\hat{\rho})_{|s} = 0.$

We conclude that $B_{|s}$ is non-degenerate for each $s\in S$, and hence $S$ is symplectic.

    \item Second, we establish the compatibility conditon for smooth functions on $M$ and $N$ by the pull-back to $S$.

    Consider $a\in\mathcal{P}_M$ and $b\in\mathcal{P}_N$.  We have 
    \begin{align}
        \{J_M^*(a),J_N^*(b)\}_S = \{(a\otimes I_N),(I_M\otimes b)\}_{M\times N} = \{a,I_M\}_M\otimes b + a\otimes\{I_N,b\} = 0,
    \end{align}
    where $I_M$ and $I_N$ are the constant identity functions on $M$ and $N$, respectively.
    
    Since the Poisson bracket is local, it follows that for any $a\in\overline{\mathcal{P}}_M$ and $b\in\overline{\mathcal{P}}_N$,
    \begin{align}
    \{J_M^*(a),J_N^*(b)\}_S = 0.
    \end{align}
\end{enumerate}
\end{proof}

\begin{cor}
\label{cor:reglim}
Suppose that $M\leftarrow S\rightarrow N$ is the symplectic classical limit of the Hilbert bimodule $\mathfrak{A}\rightarrowtail\mathcal{E}\leftarrowtail\mathfrak{B}$.  Suppose further that:
\begin{enumerate}[(i)]
\item the maps $J_M$ and $J_N$ defined in Eqs. (\ref{eq:proj}) are complete relative to the Poisson bracket; and
\item the map $J_N$ is a surjective submersion.
\end{enumerate}
Then $S$ is a weakly regular symplectic dual pair.\medskip
\end{cor}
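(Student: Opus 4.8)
The plan is to apply Theorem~\ref{thm:symplim} and then simply unwind the definition of a weakly regular symplectic dual pair, so there is essentially nothing to prove beyond matching hypotheses to definitions. First I would invoke Theorem~\ref{thm:symplim}: by assumption $M\leftarrow S\rightarrow N$ is \emph{the} symplectic classical limit of $\mathcal{E}$, which means precisely that $\mathcal{E}$ is a strongly non-degenerate Hilbert bimodule at each $\hbar\in[0,1]$ that is asymptotically irreducible and for which $\mathcal{P}_S$ is smooth, geometric, and complete. Hence Theorem~\ref{thm:symplim} already gives that the structure $M\leftarrow S\rightarrow N$, with $J_M: S\to M$ and $J_N: S\to N$ the maps of Eq.~(\ref{eq:proj}), is a symplectic dual pair: $S$ is symplectic, $J_M$ and $J_N$ are smooth Poisson maps (with $N$ carrying the opposite bracket), and $\{J_M^*f,J_N^*g\}_S=0$ for all $f\in C^\infty(M)$ and $g\in C^\infty(N)$.

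Next I would recall that a symplectic dual pair is \emph{weakly regular} exactly when, in addition, $J_M$ and $J_N$ are complete and $J_N$ is a surjective submersion. Hypothesis (i) of the corollary asserts that $J_M$ and $J_N$ are complete relative to the Poisson bracket constructed in Eqs.~(\ref{eq:poiss})--(\ref{eq:prodpoiss}); by Step~1 of the proof of Theorem~\ref{thm:symplim} this bracket is non-degenerate, so it is the one induced by the symplectic form on $S$, and thus completeness here is completeness in the sense required by the definition. Hypothesis (ii) asserts that $J_N$ is a surjective submersion. Combining the symplectic dual pair structure supplied by Theorem~\ref{thm:symplim} with hypotheses (i) and (ii) then yields immediately that $M\leftarrow S\rightarrow N$ is a weakly regular symplectic dual pair.

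Because the argument is purely a matter of reading the definitions, I expect no substantive obstacle; the one point to state carefully is that the maps $J_M$ and $J_N$ referred to in hypotheses (i)--(ii) are literally the maps produced by Theorem~\ref{thm:symplim}, which is guaranteed by the phrasing that $M\leftarrow S\rightarrow N$ is the symplectic classical limit of $\mathcal{E}$. I would therefore write the proof as a short paragraph that cites Theorem~\ref{thm:symplim} for the dual pair structure and then observes that (i) and (ii) are exactly the extra clauses in the definition of weak regularity.
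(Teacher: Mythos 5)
Your proposal is correct and matches the paper's treatment: the paper states this corollary without a separate proof precisely because, as you observe, Theorem~\ref{thm:symplim} supplies the symplectic dual pair structure and hypotheses (i)--(ii) are verbatim the extra clauses in the definition of weak regularity. Your one careful point---that completeness in (i) is completeness with respect to the bracket shown non-degenerate in Step~1 of Theorem~\ref{thm:symplim}, hence the bracket of the symplectic form on $S$---is exactly the right detail to make explicit.
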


Thus, we have shown that if one has a Hilbert bimodule $\mathfrak{A}_{0}\rightarrowtail\mathcal{E}_{0}\leftarrowtail\mathfrak{B}_{0}$ between the fibers at $\hbar=0$ that is the classical limit of a Hilbert bimodule between bundles generated by strict deformation quantizations, then under the conditions of Thm. \ref{thm:symplim}, one can naturally define a symplectic dual pair
\begin{align}
    M\leftarrow S \rightarrow N
\end{align}
that is the symplectic classical limit of the Hilbert bimodule.  It may be helpful to illustrate the symplectic classical limit with our running example.\medskip

\begin{example}
    Suppose $\beta_0: \mathfrak{A}_0\to\mathfrak{B}_0$ is a non-degenerate *-homomorphism satisfying $\beta(\mathcal{P}_M)\subseteq \mathcal{P}_N$ and preserving the Poisson bracket when restricted to a map from $\mathcal{P}_M$ to $\mathcal{P}_N$.  (This map $\beta_0$ may arise as the classical limit of what \citet[][p. 18, Def. 5.8]{StFe21a} call a \emph{smooth, second-order morphism of post-quantization bundles}.)  It follows that the dual map $\hat{\beta}_0: N\to M$ defined by $\beta_0(f)(n) = f(\hat{\beta}_0(n))$ for each $f\in \mathcal{P}_M$ and $n\in N$ is a smooth Poisson map between the Poisson manifolds $N$ and $M$.  Suppose further that $N$ is a symplectic manifold and $\hat{\beta}_0$ is an injective immersion.  Then the construction in Ex. \ref{ex:pmorphdualpair} yields from $\hat{\beta}_0$ an associated symplectic dual pair
    \begin{align}
    \label{eq:classdualpair}
        M\leftarrow \hat{\beta}_0(N)\times N\rightarrow N.
    \end{align}

    On the other hand, the construction in Ex. \ref{ex:*hombimod} yields from $\beta_0$ an associated Hilbert bimodule
    \begin{align}
    \label{eq:classbimod'}\mathfrak{A}_0\rightarrowtail\mathfrak{B}_0\leftarrowtail\mathfrak{B}_0
    \end{align}
    with $\pi_{\mathfrak{A}_0}^{\mathfrak{B}_0}(a)(b) = \beta_0(a)b$ and $\pi_{\mathfrak{B}_0}^{\mathfrak{B}_0}(b')b = bb'$ for all $a\in\mathfrak{A}_0$ and $b,b'\in\mathfrak{B}_0$.  We have the canonical *-isomorphism $\mathfrak{A}_0\otimes\mathfrak{B}_0\cong C_0(M\times N)$.  Furthermore, one can check that under this canonical *-isomorphism, we have
    \begin{align}
    \ker(\pi_{\mathfrak{A}_0}^{\mathfrak{B}_0}\otimes \pi_{\mathfrak{B}_0}^{\mathfrak{B}_0}) \cong \big\{ f\in C_0(M \times N)\ |\ f(p) = 0\text{ for all }p\in \hat{\beta}_0(N)\times N\big\},
    \end{align}
    which implies
    \begin{align}(\mathfrak{A}_0\otimes\mathfrak{B}_0)/\ker(\pi_{\mathfrak{A}_0}^{\mathfrak{B}_0}\otimes \pi_{\mathfrak{B}_0}^{\mathfrak{B}_0})\cong C_0\big(\hat{\beta}_0(N)\times N)\big).
    \end{align}
    It now follows according to Eqs. (\ref{eq:dualpairalgebra})-(\ref{eq:dualpairspace}), as in Thm. \ref{thm:symplim}, that the Hilbert bimodule in Eq. (\ref{eq:classbimod'}) is associated with the symplectic dual pair
    \begin{align}
        M\leftarrow \hat{\beta}_0(N)\times N\rightarrow N,
    \end{align}
    which is equivalent to the symplectic dual pair in Eq. (\ref{eq:classdualpair}).
    
    Hence, if a Hilbert bimodule is determined by a *-homomorphism whose classical limit corresponds to a smooth, injective Poisson immersion between symplectic manifolds, then the symplectic classical limit of that Hilbert bimodule agrees with the symplectic dual pair determined by that smooth, Poisson map.  Thus, the symplectic classical limit of Hilbert bimodules we have defined is an extension of the classical limit of *-homomorphisms from Steeger and Feintzeig \citep[][p. 19, Prop. 5.5]{StFe21a}.
\end{example}

\subsection{Functoriality of the Dual Symplectic Space}

Next, we must establish that this correspondence is functorial in the sense that it respects composition of symplectic dual pairs by the tensor product.  Again, we suppose as in Thm. \ref{thm:quotient} that $\mathfrak{A}$ and $\mathfrak{B}$ are the algebras of uniformly continuous sections of uniformly continuous bundles of C*-algebras over $[0,1]$, and we suppose $\mathfrak{A}\rightarrowtail\mathcal{E}\leftarrowtail\mathfrak{B}$ is a strongly non-degenerate Hilbert bimodule for $\alpha$ at $0\in [0,1]$.  We suppose further that the conditions of Thm. \ref{thm:symplim} are satisfied so that $\mathcal{E}$ has a symplectic classical limit $M\leftarrow S\rightarrow N$.

We will need to consider a further symplectic dual pair, so we now use the notation $S_1 = S$ for the middle space of the symplectic dual pair between $M$ and $N$ with projection maps $\overset{1}{J}_M = J_M$ and $\overset{1}{J}_N = J_N$.  So we will from now on denote this dual pair by
\begin{align}
    M\leftarrow S_1\rightarrow N.
\end{align}
We will next consider a further symplectic dual pair that it can be composed with.

Suppose now that $P$ is a further Poisson manifold with a strict deformation quantization $(\mathfrak{C}_\hbar,\mathcal{Q}_\hbar'')_{\hbar\in (0,1]}$ of $\mathcal{P}_P = C_c^\infty(P)$, which generates a uniformly continuous bundle of C*-algebras $((\mathfrak{C}_\hbar,\zeta_\hbar)_{\hbar\in(0,1]},\mathfrak{C})$.  Suppose further that we have a Hilbert bimodule
\begin{align}
    \mathfrak{B}\rightarrowtail \mathcal{F}\leftarrowtail\mathfrak{C}
\end{align}
that satisfies the assumptions of Thm. \ref{thm:symplim}, so that it has a Hilbert classical limit \emph{and} a corresponding symplectic classical limit.  We denote the Hilbert classical limit of $\mathcal{F}$ by
\begin{align} \mathfrak{B}_0\rightarrowtail\mathcal{F}_0\leftarrowtail\mathfrak{C}_0,
\end{align}
and we denote the symplectic classical limit of $\mathcal{F}$ by
\begin{align}
    N\leftarrow S_2\rightarrow P
\end{align}
with projection maps $\overset{2}{J}_N$ and $\overset{2}{J}_P$.
We shall further suppose that the assumptions of Cor. \ref{cor:reglim} are satisfied by $\mathcal{E}$ and $\mathcal{F}$, so that $S_1$ and $S_2$ are each weakly regular.  This ensures that their composition $S_1\circledcirc S_2$ is well-defined.

We consider the symplectic dual pair
\begin{align}
    M\leftarrow \mathcal{P}(\mathfrak{A}_0\circledast_{\mathcal{E}\otimes_{\mathfrak{B}}\mathcal{F}}\mathfrak{C}_0)\rightarrow P,
\end{align}
which is obtained by first taking the tensor product of the Hilbert bimodules and afterwards taking the symplectic classical limit.  We must compare this to the symplectic dual pair
\begin{align}
    M\leftarrow S_1\circledcirc_N S_2\rightarrow P,
\end{align}
which is obtained by first taking the symplectic classical limit of the bimodules and afterward taking the tensor product.  We will now show that these dual pairs are isomorphic.  We will need a number of preliminary lemmas.

First, we establish that on the algebraic side, the construction of the tensor product Hilbert bimodule arises essentially in three steps: taking the tensor product algebra, quotienting by an ideal, and then considering a subalgebra of the resulting quotient algebra.\medskip

\begin{lemma}
    \label{lem:tensprod}
    Let $\dot{\mathfrak{B}}_0 = \mathfrak{B}_0\oplus \mathbb{C} I_{\mathfrak{B}_0}$ denote the unitization of $\mathfrak{B}_0$.  Each representation of $\mathfrak{B}_0$ has a unique linear extension to the unitization $\dot{\mathfrak{B}}_0$, which we employ without change of notation. 
 Denote 
    \begin{align}
        \mathfrak{D}_0 = (\mathfrak{A}_0\circledast_{\mathcal{E}}\dot{\mathfrak{B}}_0)\otimes (\dot{\mathfrak{B}}_0\circledast_{\mathcal{F}}\mathfrak{C}_0).
    \end{align}
    Define $L_{\mathfrak{B}_0} = \overline{\text{\emph{span}} \big\{(a\otimes b_1)\otimes (b_2\otimes c)\in \mathfrak{D}_0\ |\ b_1\otimes b_2 \in \ker (\pi^{\mathcal{E}_0}_{\mathfrak{B}_0}\otimes \pi^{\mathcal{F}_0}_{\mathfrak{B}_0})\big\}}$ and
    \begin{align}
    \mathfrak{D}_0^\mathfrak{B} = \overline{\text{\emph{span}} \Big\{(a\otimes b_1) \otimes (b_2\otimes c) + L_{\mathfrak{B}_0} \in \mathfrak{D}_0/L_{\mathfrak{B}_0}\ |\ (b_1\otimes b_2 - I_{\mathfrak{B}_0}\otimes I_{\mathfrak{B}_0})\in \ker(\pi^{\mathcal{E}_0}_{\mathfrak{B}_0}\otimes\pi_{\mathfrak{B}_0}^{\mathcal{F}_0})\Big\}}.
    \end{align}
    Then
\begin{align}
\mathfrak{A}_0\circledast_{\mathcal{E}\otimes_{\mathfrak{B}}\mathcal{F}}\mathfrak{C}_0 \cong \mathfrak{D}_0^\mathfrak{B}.
\end{align}
\end{lemma}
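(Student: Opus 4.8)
The plan is to realise both C*-algebras concretely as operator algebras on the interior tensor product Hilbert module $\mathcal{E}_0\otimes_{\mathfrak{B}_0}\mathcal{F}_0$, and then produce a $*$-isomorphism between the two realisations. For the left-hand side I would first invoke the functoriality theorem for the Hilbert classical limit established in \S\ref{sec:classlim}, which gives a unitary equivalence between the Hilbert classical limit of $\mathcal{E}\otimes_{\mathfrak{B}}\mathcal{F}$ and $\mathcal{E}_0\otimes_{\mathfrak{B}_0}\mathcal{F}_0$; feeding this into Eq.~(\ref{eq:dualpairalgebra}) applied to the bimodule $\mathcal{E}\otimes_{\mathfrak{B}}\mathcal{F}$ identifies $\mathfrak{A}_0\circledast_{\mathcal{E}\otimes_{\mathfrak{B}}\mathcal{F}}\mathfrak{C}_0$ with $(\mathfrak{A}_0\otimes\mathfrak{C}_0)/\ker\Pi$, where $\Pi$ is the representation of $\mathfrak{A}_0\otimes\mathfrak{C}_0$ on $\mathcal{E}_0\otimes_{\mathfrak{B}_0}\mathcal{F}_0$ determined by $\Pi(a\otimes c)(\varphi\otimes\psi)=(a\varphi)\otimes(\psi c)$.

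For the right-hand side, the left $\mathfrak{A}_0$-action together with the unitised right $\dot{\mathfrak{B}}_0$-action on $\mathcal{E}_0$ descend to commuting operators on $\mathcal{E}_0\otimes_{\mathfrak{B}_0}\mathcal{F}_0$ (the $\dot{\mathfrak{B}}_0$-part through the balancing relation $\varphi b\otimes\psi=\varphi\otimes b\psi$), yielding a $*$-representation of $\Lambda_1:=\mathfrak{A}_0\circledast_{\mathcal{E}}\dot{\mathfrak{B}}_0$; symmetrically one gets a representation of $\Lambda_2:=\dot{\mathfrak{B}}_0\circledast_{\mathcal{F}}\mathfrak{C}_0$, and since $\mathfrak{B}_0$ is abelian the two commute and assemble into a $*$-representation $\Theta$ of $\mathfrak{D}_0=\Lambda_1\otimes\Lambda_2$ on $\mathcal{E}_0\otimes_{\mathfrak{B}_0}\mathcal{F}_0$ with $\Theta\big((a\otimes b_1)\otimes(b_2\otimes c)\big)(\varphi\otimes\psi)=(a\varphi)\otimes(b_1 b_2\psi c)$, the two copies of $\dot{\mathfrak{B}}_0$ having been collected onto the $\mathcal{F}_0$-factor by the balancing relation. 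From this formula $\Theta$ annihilates the generators of $L_{\mathfrak{B}_0}$ (for them $b_1\otimes b_2\in\ker(\pi^{\mathcal{E}_0}_{\mathfrak{B}_0}\otimes\pi^{\mathcal{F}_0}_{\mathfrak{B}_0})$, so $b_1 b_2$ acts as zero on $\mathcal{E}_0\otimes_{\mathfrak{B}_0}\mathcal{F}_0$), hence $\Theta$ descends to $\bar{\Theta}$ on $\mathfrak{D}_0/L_{\mathfrak{B}_0}$; and for the generators of $\mathfrak{D}_0^{\mathfrak{B}}$ one has instead $b_1\otimes b_2-I_{\mathfrak{B}_0}\otimes I_{\mathfrak{B}_0}\in\ker(\pi^{\mathcal{E}_0}_{\mathfrak{B}_0}\otimes\pi^{\mathcal{F}_0}_{\mathfrak{B}_0})$, so $b_1 b_2$ acts as the identity and $\bar{\Theta}$ sends such a generator to $\Pi(a\otimes c)$. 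Thus $\bar{\Theta}$ restricts to a surjective $*$-homomorphism $\mathfrak{D}_0^{\mathfrak{B}}\to\Pi(\mathfrak{A}_0\otimes\mathfrak{C}_0)\cong\mathfrak{A}_0\circledast_{\mathcal{E}\otimes_{\mathfrak{B}}\mathcal{F}}\mathfrak{C}_0$.

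Next I would set up the candidate inverse. The assignment $a\otimes c\mapsto (a\otimes I_{\mathfrak{B}_0})\otimes(I_{\mathfrak{B}_0}\otimes c)+L_{\mathfrak{B}_0}$ extends to a $*$-homomorphism $\iota\colon\mathfrak{A}_0\otimes\mathfrak{C}_0\to\mathfrak{D}_0^{\mathfrak{B}}$ (it is multiplicative and $*$-preserving, and lands in $\mathfrak{D}_0^{\mathfrak{B}}$ since $b_1=b_2=I_{\mathfrak{B}_0}$ satisfies the defining condition). A generator computation — rewriting $(a\otimes b_1)\otimes(b_2\otimes c)$ as $\big((a\otimes I_{\mathfrak{B}_0})\otimes(I_{\mathfrak{B}_0}\otimes c)\big)\big((I_{\mathfrak{B}_0}\otimes b_1)\otimes(b_2\otimes I_{\mathfrak{B}_0})\big)$ and using the defining $\ker$-condition together with the balancing relation to absorb the second factor into $I\otimes I$ modulo $L_{\mathfrak{B}_0}$ — shows $\iota$ is onto $\mathfrak{D}_0^{\mathfrak{B}}$; and directly $\bar{\Theta}\circ\iota$ coincides with the canonical quotient map $q\colon\mathfrak{A}_0\otimes\mathfrak{C}_0\to\mathfrak{A}_0\circledast_{\mathcal{E}\otimes_{\mathfrak{B}}\mathcal{F}}\mathfrak{C}_0$. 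Hence $\ker\iota\subseteq\ker q=\ker\Pi$ and $\ker\bar{\Theta}=\iota(\ker\Pi)$, so the lemma reduces to the single claim $\ker\iota=\ker\Pi$, equivalently that $\bar{\Theta}$ is faithful on $\mathfrak{D}_0^{\mathfrak{B}}$, equivalently that $L_{\mathfrak{B}_0}$ already exhausts $\ker\Theta$ on the range of $\iota$ before the quotient.

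The hard part is this last step, and I expect to handle it by Gelfand duality. Since $\mathfrak{A}_0,\mathfrak{B}_0,\mathfrak{C}_0$ are abelian, so are $\Lambda_1,\Lambda_2$, and $\mathfrak{D}_0\cong C_0(Z_1\times Z_2)$ for suitable locally compact spaces $Z_1,Z_2$; the ideal $L_{\mathfrak{B}_0}$ and the ideal $\ker\Theta$ then correspond to open subsets of $Z_1\times Z_2$ cut out by support conditions involving the unitised middle $\dot{\mathfrak{B}}_0$-factors, and one must verify that these two open sets coincide. Concretely this amounts to checking, fibrewise over $N$, that the balancing over $\mathfrak{B}_0$ defining $\mathcal{E}_0\otimes_{\mathfrak{B}_0}\mathcal{F}_0$ is implemented on the algebra side exactly by the combination ``quotient by $L_{\mathfrak{B}_0}$, then restrict to the subalgebra $\mathfrak{D}_0^{\mathfrak{B}}$'' — i.e.\ that the passage from $\Lambda_1\otimes\Lambda_2$ to $\mathfrak{D}_0^{\mathfrak{B}}$ converts the external tensor product of modules into the balanced one. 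This support-matching computation is the technically substantive piece (the kind of statement one would naturally isolate as an auxiliary lemma). Granting it, $\iota$ factors through an isomorphism $(\mathfrak{A}_0\otimes\mathfrak{C}_0)/\ker\Pi\cong\mathfrak{D}_0^{\mathfrak{B}}$, which is the asserted $\mathfrak{A}_0\circledast_{\mathcal{E}\otimes_{\mathfrak{B}}\mathcal{F}}\mathfrak{C}_0\cong\mathfrak{D}_0^{\mathfrak{B}}$.
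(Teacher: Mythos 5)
Your setup is sound, and in fact it mirrors the paper's proof more closely than you may realize: your $\iota$ is exactly the paper's map $\tau_1$ (defined before passing to the quotient by $K=\ker(\pi^{\mathcal{E}_0\otimes_{\mathfrak{B}_0}\mathcal{F}_0}_{\mathfrak{A}_0}\otimes\pi^{\mathcal{E}_0\otimes_{\mathfrak{B}_0}\mathcal{F}_0}_{\mathfrak{C}_0})$), and your surjectivity computation --- absorbing $(I_{\mathfrak{B}_0}\otimes b_1)\otimes(b_2\otimes I_{\mathfrak{B}_0})$ into the identity modulo $L_{\mathfrak{B}_0}$ --- is the paper's surjectivity step. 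Your use of the representation $\Theta$ on $\mathcal{E}_0\otimes_{\mathfrak{B}_0}\mathcal{F}_0$ even delivers the inclusion $\ker\iota\subseteq K$ (the paper's injectivity of $\tau_1$) rather cleanly, via $\bar\Theta\circ\iota=q$. But there is a genuine gap: the reverse inclusion $K\subseteq\ker\iota$ --- i.e.\ that $\sum_i(a_i\otimes I_{\mathfrak{B}_0})\otimes(I_{\mathfrak{B}_0}\otimes c_i)$ lies in $L_{\mathfrak{B}_0}$ whenever $\sum_i a_i\otimes c_i$ is annihilated by the representation on $\mathcal{E}_0\otimes_{\mathfrak{B}_0}\mathcal{F}_0$, equivalently the faithfulness of $\bar\Theta$ on $\mathfrak{D}_0^{\mathfrak{B}}$ --- is exactly the substantive content of the lemma, and you do not prove it. You name it, call it ``the technically substantive piece,'' and then grant it, gesturing at a Gelfand-duality ``support-matching computation'' that is never carried out. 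What you have actually verified yields only a surjective *-homomorphism from $\mathfrak{D}_0^{\mathfrak{B}}$ onto $\mathfrak{A}_0\circledast_{\mathcal{E}\otimes_{\mathfrak{B}}\mathcal{F}}\mathfrak{C}_0$, which is strictly weaker than the asserted isomorphism; the missing inclusion is precisely where the balancing over $\mathfrak{B}_0$ in $\mathcal{E}_0\otimes_{\mathfrak{B}_0}\mathcal{F}_0$ has to be matched against the ideal $L_{\mathfrak{B}_0}$ defined through $\ker(\pi^{\mathcal{E}_0}_{\mathfrak{B}_0}\otimes\pi^{\mathcal{F}_0}_{\mathfrak{B}_0})$, and nothing in your write-up shows the two open sets (in your Gelfand picture) actually coincide.

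For comparison, this is the point the paper addresses in the well-definedness step for $\tau_1$: for $a\otimes c\in K$ it argues that $(a\otimes I_{\mathfrak{B}_0})\otimes(I_{\mathfrak{B}_0}\otimes c)$ lies in the kernel of the four-fold representation $(\pi^{\mathcal{E}_0\otimes_{\mathfrak{B}_0}\mathcal{F}_0}_{\mathfrak{A}_0}\otimes\pi^{\mathcal{E}_0\otimes_{\mathfrak{B}_0}\mathcal{F}_0}_{\mathfrak{B}_0})\otimes(\pi^{\mathcal{E}_0\otimes_{\mathfrak{B}_0}\mathcal{F}_0}_{\mathfrak{B}_0}\otimes\pi^{\mathcal{E}_0\otimes_{\mathfrak{B}_0}\mathcal{F}_0}_{\mathfrak{C}_0})$ and hence in $L_{\mathfrak{B}_0}$, and in the converse direction it exploits that $I_{\mathfrak{B}_0}\otimes I_{\mathfrak{B}_0}\notin\ker(\pi^{\mathcal{E}_0}_{\mathfrak{B}_0}\otimes\pi^{\mathcal{F}_0}_{\mathfrak{B}_0})$. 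To complete your version you must either supply an argument of that kind or genuinely carry out the fibrewise identification of $\ker\Theta$ with $L_{\mathfrak{B}_0}$ on the range of $\iota$; as written, the proposal establishes only one of the two inclusions needed.
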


\begin{proof}
    Let $K = \ker(\pi^{\mathcal{E}_0\otimes_{\mathfrak{B}_0}\mathcal{F}_0}_{\mathfrak{A}_0}\otimes \pi^{\mathcal{E}_0\otimes_{\mathfrak{B}_0}\mathcal{F}_0}_{\mathfrak{C}_0})$.  Define a map $\tau_1: \mathfrak{A}_0\circledast_{\mathcal{E}\otimes_{\mathfrak{B}}\mathcal{F}}\mathfrak{C}_0\to \mathfrak{D}_0^\mathfrak{B}$ by the continuous linear extension of the assignment
    \begin{align}
    \label{eq:t1}
        \tau_1(a\otimes c + K) = (a\otimes I_{\mathfrak{B}_0})\otimes (I_{\mathfrak{B}_0}\otimes c) + L_{\mathfrak{B}_0}.
    \end{align}
for all $a\in\mathfrak{A}_0$ and $c\in\mathfrak{C}_0$. 
 First, we must show this map is well-defined.  Consider $a\otimes c\in K$.  Then clearly 
\begin{align}
(a\otimes I_{\mathfrak{B}_0})\otimes (I_{\mathfrak{B}_0}\otimes c) \in \ker\Big((\pi^{\mathcal{E}_0\otimes_{\mathfrak{B}_0}\mathcal{F}_0}_{\mathfrak{A}_0}\otimes \pi^{\mathcal{E}_0\otimes_{\mathfrak{B}_0}\mathcal{F}_0}_{\mathfrak{B}_0})\otimes (\pi^{\mathcal{E}_0\otimes_{\mathfrak{B}_0}\mathcal{F}_0}_{\mathfrak{B}_0}\otimes \pi^{\mathcal{E}_0\otimes_{\mathfrak{B}_0}\mathcal{F}_0}_{\mathfrak{C}_0})\Big).
\end{align}
Hence, $\tau_1(a\otimes c) \in L_{\mathfrak{B}_0}$.

Notice further that $\tau_1$ is clearly a *-homomorphism.

Second, we show that $\tau_1$ is injective.  Suppose that $\tau_1(\sum_i a_i\otimes c_i) = 0$.  Then, since $I_{\mathfrak{B}_0}\otimes I_{\mathfrak{B}_0}\notin \ker(\pi_{\mathfrak{B}_0}^{\mathcal{E}_0}\otimes \pi_{\mathfrak{B}_0}^{\mathcal{F}_0})$, we must have
\begin{align}
\sum_i(a_i\otimes I_{\mathfrak{B}_0})\otimes (I_{\mathfrak{B}_0}\otimes c_i) \in \ker\Big((\pi^{\mathcal{E}_0\otimes_{\mathfrak{B}_0}\mathcal{F}_0}_{\mathfrak{A}_0}\otimes \pi^{\mathcal{E}_0\otimes_{\mathfrak{B}_0}\mathcal{F}_0}_{\mathfrak{B}_0})\otimes (\pi^{\mathcal{E}_0\otimes_{\mathfrak{B}_0}\mathcal{F}_0}_{\mathfrak{B}_0}\otimes \pi^{\mathcal{E}_0\otimes_{\mathfrak{B}_0}\mathcal{F}_0}_{\mathfrak{C}_0})\Big).
\end{align}
It follows from this and the definition of $\mathcal{E}\otimes_{\mathfrak{B}_0}\mathcal{F}_0$ that $\sum_i a_i\otimes c_i\in K$.

Finally, we show that $\tau_1$ is surjective.  Consider an arbitrary
$(a\otimes b_1)\otimes (b_2\otimes c)\in \mathfrak{D}_0^\mathfrak{B}$.  Then it follows that
\begin{align}
    (a\otimes b_1)\otimes (b_2\otimes c) - (a\otimes I_{\mathfrak{B}_0})\otimes (I_{\mathfrak{B}_0}\otimes c) \in L_{\mathfrak{B}}.
\end{align}
Hence $\tau_1(a\otimes c + K) = (a\otimes b_1)\otimes (b_2\otimes c)$.

We conclude that $\tau_1$ is a *-isomorphism.
\end{proof}

Next, we establish that the steps outlined in the previous lemma for the algebraic construction of the tensor product Hilbert bimodule have analog or dual steps in the geometric construction of a tensor product dual pair.  In particular, we know first of all that taking the tensor product algebra corresponds to taking a Cartesian product $S_1\times S_2$.  The following lemma shows that the next step of quotienting by an ideal corresponds to looking at the subset $S_1\times_N S_2$ of the Cartesian product space $S_1\times S_2$.\medskip

\begin{lemma}
\label{lem:sympprod}
    Let $L_N  = \big\{f\in C_0(S_1\times S_2)\ |\ f(s_1,s_2) = 0\text{ whenever }\overset{1}{J}_N(s_1) = \overset{2}{J}_N(s_2)\big\}$.  Then
    \begin{align}
        C_0(S_1\times_N S_2) \cong C_0(S_1\times S_2)/L_N.
    \end{align}
\end{lemma}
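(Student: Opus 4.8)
The plan is to recognize this as an instance of the standard Gelfand-duality correspondence: for a locally compact Hausdorff space $X$ and a closed subset $C\subseteq X$, restriction of functions induces a $*$-isomorphism $C_0(X)/\{f\in C_0(X)\mid f|_C=0\}\cong C_0(C)$. Here $X=S_1\times S_2$ and $C=S_1\times_N S_2$, and by the very definition of $L_N$, the ideal $\{f\in C_0(S_1\times S_2)\mid f|_{S_1\times_N S_2}=0\}$ is exactly $L_N$. So the content of the lemma is precisely this standard fact, and the two points I would actually spell out are that $S_1\times_N S_2$ is a closed subset and that the relevant restriction map is surjective.

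For closedness: $S_1$ and $S_2$ are smooth Hausdorff manifolds, so $S_1\times S_2$ is locally compact Hausdorff, and $S_1\times_N S_2$ is the preimage of the diagonal $\Delta_N\subseteq N\times N$ under the continuous map $(s_1,s_2)\mapsto(\overset{1}{J}_N(s_1),\overset{2}{J}_N(s_2))$; since $N$ is Hausdorff, $\Delta_N$ is closed, hence so is $S_1\times_N S_2$. I would then introduce the restriction map $r\colon C_0(S_1\times S_2)\to C_0(S_1\times_N S_2)$, $r(f)=f|_{S_1\times_N S_2}$. It is well-defined because for each $\varepsilon>0$ the set $\{x\in S_1\times_N S_2\mid |f(x)|\geq\varepsilon\}$ is the intersection of the closed set $S_1\times_N S_2$ with the compact set $\{|f|\geq\varepsilon\}$, hence compact; it is evidently a $*$-homomorphism; and $\ker r=L_N$. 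Since a $*$-homomorphism of C*-algebras has closed range, the first isomorphism theorem gives $C_0(S_1\times S_2)/L_N\cong\operatorname{ran}r$, so it remains only to show $r$ is onto.

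The surjectivity of $r$ is the one substantive point, and it is the Tietze extension theorem in the locally compact setting. I would pass to the one-point compactification $\dot X=(S_1\times S_2)\cup\{\infty\}$, which is compact Hausdorff --- hence normal --- since $S_1\times S_2$ is locally compact Hausdorff. Let $\overline{C}$ be the closure of $C=S_1\times_N S_2$ in $\dot X$. If $C$ is noncompact, then $\overline C=C\cup\{\infty\}$, and a given $g\in C_0(C)$ extends canonically to $\tilde g\in C(\overline C)$ with $\tilde g(\infty)=0$; applying the classical Tietze theorem to the real and imaginary parts of $\tilde g$ on the normal space $\dot X$ produces $\hat g\in C(\dot X)$ with $\hat g|_{\overline C}=\tilde g$, and then $\hat g(\infty)=\tilde g(\infty)=0$ automatically. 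If $C$ is compact, then $\overline C=C$, one Tietze-extends $g$ to some $\hat g\in C(\dot X)$ and multiplies by $\chi\in C(\dot X)$ with $\chi\equiv 1$ on $C$ and $\chi(\infty)=0$ obtained from Urysohn's lemma. In either case the restriction to $S_1\times S_2$ of $\hat g$ (resp.\ $\chi\hat g$) lies in $C_0(S_1\times S_2)$ and maps to $g$ under $r$, so $r$ is surjective, and combining with the previous paragraph yields $C_0(S_1\times_N S_2)\cong C_0(S_1\times S_2)/L_N$.

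I expect the closedness claim and all the algebraic bookkeeping to be completely routine; the only step that calls for care is the extension argument, but the one-point-compactification packaging reduces it directly to the classical Tietze and Urysohn theorems on a compact Hausdorff (hence normal) space, so there is no real subtlety.
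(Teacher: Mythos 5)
Your proof is correct and follows essentially the same route as the paper: both identify $L_N$ as the vanishing ideal of $S_1\times_N S_2$, use the restriction $*$-homomorphism whose kernel is $L_N$, and obtain surjectivity from the Tietze extension theorem. The only difference is that you spell out two points the paper takes for granted---the closedness of $S_1\times_N S_2$ (as the preimage of the diagonal $\Delta_N$) and the vanishing-at-infinity form of Tietze via the one-point compactification---which is welcome extra care but not a change of method.
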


\begin{proof}
    Let $\iota_{S_1\times_N S_2}$ be the identical embedding of $S_1\times_N S_2$ in $S_1\times S_2$.  Define a map $\tau_2: C_0(S_1\times S_2)/L_N\to C_0(S_1\times_N S_2)$ by
    \begin{align}
    \label{eq:t2}
    \tau_2(f + L_N) = f\circ \iota_{S_1\times_N S_2}
    \end{align}
    for all $f\in C_0(S_1\times_N S_2)$.  Note that $\tau_2$ is well-defined.  If $f + L_N = g + L_N$ for $f,g\in C_0(S_1\times S_2)$, then $(f-g)(s) = 0$ for all $s\in S_1\times_N S_2$, and hence $f\circ \iota_{S_1\times_N S_2} = g\circ \iota_{S_1\times_N S_2}$.  Moreover, clearly $\tau_2$ is a *-homomorphism.
    
    Similarly, $\tau_2$ is injective, for whenever $\tau_2(f + L_N) = f\circ \iota_{S_1\times_N S_2} = 0$, it follows that $f\in L_N$ so $f+L_N = 0 + L_N$.
    
    To see that $\tau_2$ is surjective, consider an arbitrary $f\in C_0(S_1\times_N S_2)$.  Since $S_1\times_N S_2$ is closed in $S_1\times S_2$, the Tietze extension theorem implies that there is a function $\tilde{f}\in C_0(S_1\times S_2)$ such that $\tilde{f}\circ \iota_{S_1\times_N S_2} = f$, and hence $\tau_2(\tilde{f} + L_N) = f$.

    We conclude that $\tau_2$ is a *-isomorphism.
\end{proof}

The following lemma shows that in the final step, considering a certain subalgebra of functions on $S_1\times_N S_2$ corresponds to taking a quotient of the space $S_1\times_N S_2$ by looking at the space of leaves $S_1\circledcirc S_2$ of the foliation by $\mathcal{N}_{S_1\times_N S_2}$.\medskip

\begin{lemma}
\label{lem:sympfol}
     \begin{align}
        C_0(S_1\circledcirc_N S_2)\cong \Big\{f\in C_0(S_1\times_N S_2)\ |\ \xi(f) = 0\text{ whenever } \xi\in\mathcal{N}_{S_1\times_N S_2}\Big\}.
     \end{align}
\end{lemma}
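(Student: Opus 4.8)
The plan is to realize the isomorphism as pullback along the leaf projection, in direct analogy with Lemma~\ref{lem:sympprod} (where the $*$-isomorphism was pullback along the inclusion of the closed submanifold $S_1\times_N S_2$, with surjectivity supplied by the Tietze extension theorem). Since by hypothesis $S_1\circledcirc_N S_2 = (S_1\times_N S_2)/\mathcal{N}_{S_1\times_N S_2}$ is a symplectic manifold, the null distribution $\mathcal{N}_{S_1\times_N S_2}$ has constant rank and is involutive (it is the kernel of the constant-rank closed two-form obtained by restricting the product symplectic form to $S_1\times_N S_2$), and the passage to the space of leaves is realized by a surjective submersion $q\colon S_1\times_N S_2 \to S_1\circledcirc_N S_2$ whose fibers are exactly the connected leaves. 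I would then define
\begin{align}
\tau_3\colon C_0(S_1\circledcirc_N S_2)\to C_0(S_1\times_N S_2), \qquad \tau_3(g) = g\circ q,
\end{align}
and show it is a $*$-isomorphism onto the subalgebra in the statement.

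The key steps are as follows. First, one checks that $\tau_3$ is a well-defined $*$-homomorphism whose image lies inside $C_0(S_1\times_N S_2)$; since $q$ is surjective, $\tau_3$ is automatically injective. Second, the image is contained in the displayed subalgebra: for $g\in C_0(S_1\circledcirc_N S_2)$ the function $g\circ q$ is constant along the fibers of $q$, i.e.\ along the leaves of $\mathcal{N}_{S_1\times_N S_2}$, so $\xi(g\circ q)=0$ for every $\xi\in\mathcal{N}_{S_1\times_N S_2}$. Third, for the reverse inclusion I would take $f\in C_0(S_1\times_N S_2)$ with $\xi(f)=0$ for all $\xi\in\mathcal{N}_{S_1\times_N S_2}$; by the Frobenius theorem together with connectedness of the leaves, $f$ is constant on each leaf, so there is a unique function $\bar f$ on $S_1\circledcirc_N S_2$ with $f = \bar f\circ q$. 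Since $q$ is an open surjection, $\bar f$ is continuous, and since $\{\,\lvert\bar f\rvert\geq\epsilon\,\} = q(\{\,\lvert f\rvert\geq\epsilon\,\})$ is the continuous image of the compact set $\{\,\lvert f\rvert\geq\epsilon\,\}$, we get $\bar f\in C_0(S_1\circledcirc_N S_2)$ and $\tau_3(\bar f)=f$. Together with the elementary observation that $\tau_3$ and $f\mapsto\bar f$ are mutually inverse $*$-homomorphisms, this completes the argument.

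The main obstacle is the first step: showing that $g\circ q$ really vanishes at infinity, equivalently that $q$ is a proper map, equivalently that the leaves of the null foliation are compact. This is the analog of the role played by closedness of $S_1\times_N S_2$ (and the Tietze theorem) in the proof of Lemma~\ref{lem:sympprod}, and it is where the standing hypotheses must be invoked: we are in the setting of Corollary~\ref{cor:reglim}, so $S_1$ and $S_2$ are weakly regular symplectic classical limits, with $\overset{1}{J}_N$ and $\overset{2}{J}_N$ surjective submersions and all four moment maps complete. I would exploit this by noting that the induced maps $J_M$ and $J_P$ on $S_1\times_N S_2$ descend through $q$ (they are the components of the composed dual pair $M\leftarrow S_1\circledcirc_N S_2\to P$), so every leaf lies inside a fiber of $(J_M,J_P)\colon S_1\times_N S_2\to M\times N$-and-$P$; one then wants to use completeness of $J_M$ and $J_P$ and the submersion property of $J_N$ to control these fibers and conclude that $q$ factors through a proper map, so that $q^{-1}$ of a compact set is compact. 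Making this rigorous is the substantive work. A secondary, purely bookkeeping point is the precise reading of the condition ``$\xi(f)=0$'': one should understand the displayed set as the closed $*$-subalgebra of $C_0(S_1\times_N S_2)$ of functions that are locally constant along $\mathcal{N}_{S_1\times_N S_2}$, which for $C^1$ functions is exactly the differential condition and is manifestly closed under uniform limits; with that reading, the identification with $q^*\bigl(C_0(S_1\circledcirc_N S_2)\bigr)$ is the content established above.
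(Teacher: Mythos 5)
Your proposal follows essentially the same route as the paper's proof: both define $\tau_3(g)=g\circ q$ where $q$ is the projection onto the leaf space, check that $\tau_3$ is a $*$-homomorphism whose image consists of functions annihilated by $\mathcal{N}_{S_1\times_N S_2}$, get injectivity from surjectivity of $q$, and obtain surjectivity by descending a leaf-constant $f$ to a function $\bar f$ on $S_1\circledcirc_N S_2$ with $\tau_3(\bar f)=f$. Where you differ is only in level of care: you explicitly verify continuity of $\bar f$ (openness of the submersion $q$) and address vanishing at infinity, whereas the paper's proof asserts membership in $C_0$ on both sides without comment. The one step you leave open --- that $g\circ q$ vanishes at infinity, i.e.\ that $q$ is proper, equivalently that the null leaves are compact --- is a genuine issue, but it is not resolved in the paper either; the published argument simply does not engage with it, so your proposal is, if anything, a more honest version of the same proof rather than one containing an additional gap. (Your suggested use of the weak regularity and completeness hypotheses of Corollary~\ref{cor:reglim}, together with the description of the leaves as the sets $O_{m,p}$ from Lemma~\ref{lem:leaves}, is the natural place to look for such a properness argument, but note that neither that corollary nor any compactness assumption is invoked in the paper's proof of this lemma; as stated, the identification should really be read either under an implicit properness assumption or with the $C_0$ conditions relaxed.) Your bookkeeping remark on interpreting $\xi(f)=0$ for merely continuous $f$ also matches the paper's implicit usage, which passes directly from the differential condition to constancy on (connected) leaves.
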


\begin{proof}
   Let $q_{\mathcal{N}_{S_1\times_N S_2}}: S_1\times_N S_2\to S_1\circledcirc_N S_2$ denote the map that sends each point $(s_1,s_2) \in S_1\times_N S_2$ to the leaf $[(s_1,s_2)]$ of the foliation by $\mathcal{N}_{S_1\times_N S_2}$ that it belongs to.  Define a map $\tau_3: C_0(S_1\circledcirc_N S_2)\to C_0(S_1\times_N S_2)$ by
    \begin{align}
    \label{eq:t3}
        \tau_3(f) = f\circ q_{\mathcal{N}_{S_1\times_N S_2}}
    \end{align}
for all $f\in C_0(S_1\circledcirc_N S_2)$.  Clearly, $\tau_3$ is a *-homomorphism.  Moreover, if $f\in C_0(S_1\circledcirc_N S_2)$, then $\tau_3(f)$ is constant on leaves of the foliation by $\mathcal{N}_{S_1\times_N S_2}$, so that $\xi(\tau_3(f)) = 0$ for all $\xi\in \mathcal{N}_{S_1\times_N S_2}$, which implies $\tau_3$ has the appropriate range.

The map $\tau_3$ is injective because $\tau_3(f) = 0$ on $S_1\times _N S_2$ implies that $f$ vanishes on each leaf, which means that $f=0$.

Finally, $\tau_3$ is surjective because whenever $f\in C_0(S_1\times_N S_2)$ satisfies $\xi(f) = 0$ for all $\xi \in \mathcal{N}_{S_1\times_N S_2}$, it follows that $f$ is constant on each leaf of the foliation by $\mathcal{N}_{S_1\times_N S_2}$.  Hence, the function $\tilde{f}\in C_0(S_1\circledcirc_N S_2)$ given by $\tilde{f}([(s_1,s_2)]) = f(s_1,s_2)$ is well-defined and satisfies $\tau_3(\tilde{f}) = f$.

We conclude that $\tau_3$ is a *-isomorphism.
\end{proof}

We will also need the following result characterizing the leaves of $S_1\times_N S_2$.\medskip

\begin{lemma}
\label{lem:leaves}
    Suppose $\gamma$ is a smooth curve in $S_1\times_N S_2$.  The tangent vectors to $\gamma$ lie in $\mathcal{N}_{S_1\times_N S_2}$ if and only if for any two points $\gamma(t_1) = (s_1,s_2)$ and $\gamma(t_2) = (s_1',s_2')$, we have
    \begin{align}
        \overset{1}{J}_M(s_1) = \overset{1}{J}_M(s_1') && \overset{2}{J}_P(s_2) = \overset{2}{J}_P(s_2').
    \end{align}
    Hence, the leaves of the foliation determined by $\mathcal{N}_{S_1\times_N S_2}$ have the form
    \begin{align}
        O_{m,p} = \Big\{(s_1,s_2)\in S_1\times_N S_2\ |\ \overset{1}{J}_M(s_1) = m,\ \overset{2}{J}_P(s_2) = p\Big\}
    \end{align}
    for some fixed $m\in M$ and $p\in P$.
\end{lemma}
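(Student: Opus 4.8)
The plan is to obtain an explicit description of $\mathcal{N}_{S_1\times_N S_2}$ and then read off both directions of the equivalence directly from the two compatibility conditions. First I would record that $C:=S_1\times_N S_2$ is coisotropic in the symplectic manifold $(S_1\times S_2,\omega_1\oplus\omega_2)$, where $\omega_i$ denotes the symplectic form on $S_i$: it is the preimage $\Phi^{-1}(\Delta_N)$ of the diagonal under the Poisson map $\Phi:=\overset{1}{J}_N\times\overset{2}{J}_N:S_1\times S_2\to N^-\times N$ (Poisson since $\overset{1}{J}_N$ is Poisson into $N^-$ and $\overset{2}{J}_N$ into $N$, writing $N^-$ for $N$ with minus its bracket), the diagonal $\Delta_N$ is coisotropic in $N^-\times N$ (it is the graph of a Poisson map), and $\Phi$ is a submersion transverse to $\Delta_N$ because $\overset{1}{J}_N$ is a surjective submersion. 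Hence $\mathcal{N}_{S_1\times_N S_2}=(TC)^{\omega_1\oplus\omega_2}$ is the (involutive) characteristic distribution of $C$, and since $C$ is locally the common zero set of the functions $F_g:=(\overset{1}{J}_N)^*g-(\overset{2}{J}_N)^*g$ for $g\in C^\infty(N)$, whose differentials span the conormal bundle of $C$, I obtain at each $(s_1,s_2)\in C$
\begin{align}
\mathcal{N}_{S_1\times_N S_2}=\operatorname{span}\Big\{\big(X^{S_1}_{(\overset{1}{J}_N)^*g}(s_1),\,-X^{S_2}_{(\overset{2}{J}_N)^*g}(s_2)\big):g\in C^\infty(N)\Big\}.
\end{align}

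For the forward implication, suppose $\gamma=(\gamma_1,\gamma_2)$ has all tangent vectors in $\mathcal{N}_{S_1\times_N S_2}$. By the formula above each $\dot\gamma_1(t)$ is a combination of the vectors $X^{S_1}_{(\overset{1}{J}_N)^*g}$, and the compatibility condition $\{(\overset{1}{J}_M)^*f,(\overset{1}{J}_N)^*g\}_{S_1}=0$ for all $f\in C^\infty(M)$ is equivalent to $d\overset{1}{J}_M$ annihilating each $X^{S_1}_{(\overset{1}{J}_N)^*g}$; hence $d\overset{1}{J}_M(\dot\gamma_1(t))=0$ and $\overset{1}{J}_M\circ\gamma_1$ is constant. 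The symmetric argument with $\{(\overset{2}{J}_N)^*g,(\overset{2}{J}_P)^*h\}_{S_2}=0$ shows $\overset{2}{J}_P\circ\gamma_2$ is constant, and connectedness of $\gamma$ yields the asserted equalities for any two of its points.

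For the converse, fix $t$, set $(s_1,s_2)=\gamma(t)$, $n=\overset{1}{J}_N(s_1)=\overset{2}{J}_N(s_2)$, $w_i=\dot\gamma_i(t)$, and assume $d\overset{1}{J}_M(w_1)=0$ and $d\overset{2}{J}_P(w_2)=0$; then also $d\overset{1}{J}_N(w_1)=d\overset{2}{J}_N(w_2)$ since $\dot\gamma(t)\in TC$. Invoking the structural regularity of the weakly regular dual pairs---namely $\ker d\overset{1}{J}_M=(\ker d\overset{1}{J}_N)^{\omega_1}$ and $\ker d\overset{2}{J}_P=(\ker d\overset{2}{J}_N)^{\omega_2}$, together with $\overset{1}{J}_N$ being a submersion and $\overset{2}{J}_N$ being submersive onto the symplectic leaf through $n$---I can write $w_1=X^{S_1}_{(\overset{1}{J}_N)^*g_1}(s_1)$ and $w_2=X^{S_2}_{(\overset{2}{J}_N)^*g_2}(s_2)$ for suitable $g_1,g_2\in C^\infty(N)$. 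Applying $d\overset{1}{J}_N$ (which is anti-Poisson) and $d\overset{2}{J}_N$ (Poisson) and using $d\overset{1}{J}_N(w_1)=d\overset{2}{J}_N(w_2)$ forces $X^N_{g_1+g_2}(n)=0$, which via the remaining freedom in the choice of $g_2$ lets me take $g_2=-g_1$ near $n$ without changing $w_2$. Then $\dot\gamma(t)=\big(X^{S_1}_{(\overset{1}{J}_N)^*g_1}(s_1),-X^{S_2}_{(\overset{2}{J}_N)^*g_1}(s_2)\big)\in\mathcal{N}_{S_1\times_N S_2}$ by the displayed formula.

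The ``hence'' then follows: the curves in $C$ tangent to $\mathcal{N}_{S_1\times_N S_2}$ are exactly those along which $\overset{1}{J}_M$ and $\overset{2}{J}_P$ (composed with the projections to $S_1$ and $S_2$) are constant, so each leaf of the foliation lies in some $O_{m,p}$ and, being a maximal connected integral manifold, exhausts it---completeness of the four maps ensuring that $O_{m,p}$ is not a proper union of leaves. I expect the converse implication to be the main obstacle: simultaneously realizing $w_1$ and $-w_2$ as the Hamiltonian vectors of the pullbacks of a \emph{single} function on $N$. The forward direction and the coisotropy of $C$ are formal consequences of the compatibility conditions, but this matching step genuinely uses the full dual-pair regularity---the bare compatibility condition only yields the inclusion $(\ker dJ_N)^{\omega}\subseteq\ker dJ_M$, not the equality needed here.
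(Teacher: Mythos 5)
Your route is genuinely different from the paper's: you describe $\mathcal{N}_{S_1\times_N S_2}$ as the characteristic distribution of the coisotropic submanifold cut out by the first-class constraints $F_g=(\overset{1}{J}_N\circ pr_{S_1})^*g-(\overset{2}{J}_N\circ pr_{S_2})^*g$, whereas the paper computes the product symplectic pairing directly, using the splittings $\overset{1}{B}=\overset{M}{B}-\overset{N}{B}$, $\overset{2}{B}=\overset{N}{B}-\overset{P}{B}$ and the symplectic orthogonality of the $\overset{1}{J}_M$- and $\overset{1}{J}_N$-fibers to show the $N$-contributions cancel on the fiber product. Your forward implication is correct and arguably cleaner: it needs only the commutation condition $\{J_M^*f,J_N^*g\}=0$ together with submersivity of $\overset{1}{J}_N$ (for the spanning of the conormal bundle by the $dF_g$), both of which are available.

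The converse, however, has a genuine gap, and it sits exactly where you predicted. You invoke two structural identities as consequences of weak regularity, and the second one is neither granted nor true in general: weak regularity of $N\leftarrow S_2\to P$ makes $\overset{2}{J}_P$ (not $\overset{2}{J}_N$) a surjective submersion onto $P$, and it does not make $\overset{2}{J}_N$ ``submersive onto the symplectic leaf through $n$''. For the paradigm momentum-map pairs (e.g.\ $\mathfrak{g}^*\leftarrow T^*G\to\mathfrak{g}^*$, the model behind Ex.~\ref{ex:classYMdualpair}), $\mathrm{im}\,d\overset{2}{J}_N$ is all of $T_nN$ while the leaf is a coadjoint orbit. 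This is fatal to the matching step: $w_1$ pins down $dg_1|_n$ on $\mathrm{im}\,d\overset{1}{J}_N$ (all of $T_nN$, since $\overset{1}{J}_N$ is a submersion) and $w_2$ pins down $dg_2|_n$ on $\mathrm{im}\,d\overset{2}{J}_N$, whereas the only relation you extract, $X^N_{g_1+g_2}(n)=0$, controls $d(g_1+g_2)|_n$ only on the leaf tangent $\mathrm{im}\,\pi_N^\sharp|_n$; so you cannot in general replace $g_2$ by $-g_1$ ``without changing $w_2$''. Indeed, in the $T^*G$ example at a point with nontrivial stabilizer, the space of pairs $(w_1,w_2)$ satisfying your hypotheses is strictly larger than $\mathcal{N}_{S_1\times_N S_2}$, so the converse cannot be closed from the stated inputs. (Your other identity, $\ker d\overset{1}{J}_M=(\ker d\overset{1}{J}_N)^{\omega_1}$, is also not part of the paper's definition, but there you are on the same footing as the paper, whose own proof implicitly assumes the splitting $T_{s_1}S_1\cong(\overset{1}{J}_M)_*(T_{s_1}S_1)\oplus T_nN$ and the bivector decompositions above.) The honest repair is to state such structural hypotheses explicitly---they are plausible for the dual pairs produced by Thm.~\ref{thm:symplim}, where $S$ embeds in $M\times N$ with the pulled-back bracket---rather than to derive them from weak regularity; as written, the key step of realizing $(w_1,-w_2)$ via a single $g$ does not go through.
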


\begin{proof}
We begin with some preliminaries.  Denote the projections $pr_{S_1}: S_1\times_N S_2\to S_1$ and $pr_{S_2}: S_1\times_N S_2\to S_2$.  Suppose $\gamma$ is a smooth curve in $S_1\times_N S_2$ and denote the tangent vector at $\gamma(t)$ by $\vec{\gamma}_{| t}$.

The definition of $S_1\times_N S_2$ yields
\begin{align}
    (\overset{1}{J}_N\circ pr_{S_1})(s) = (\overset{2}{J}_N\circ pr_{S_2})(s)
\end{align}
for all $s\in S_1\times_N S_2$, so that for any vector $\xi\in T(S_1\times_N S_2)$ at $\gamma(t)$ and any $f\in C^\infty(N)$, we have
\begin{align}
    (\overset{1}{J}_N\circ pr_{S_1})_*(\xi)(f) = \xi(f\circ \overset{1}{J}_N\circ pr_{S_1}) = \xi(f\circ \overset{2}{J}_N\circ pr_{S_2}) = (\overset{2}{J}_N\circ pr_{S_2})_*(\xi)(f).
\end{align}

Hence, denoting by $\omega_1$ and $\omega_2$ the symplectic forms of $S_1$ and $S_2$ at $s_1 = pr_{S_1}(\gamma(t))$ and $s_2 = pr_{S_2}(\gamma(t))$, respectively, we have from the definition of the product symplectic form that
\begin{align}
\label{eq:o12}
    (\omega_1\oplus \omega_2)(\vec{\gamma}_{|t},\xi) &= pr_{S_1}^*(\omega_1)(\vec{\gamma}_{|t},\xi) + pr_{S_2}^*(\omega_2)(\vec{\gamma}_{|t},\xi)\nonumber\\
    &= \omega_1\Big((pr_{S_1})_*\vec{\gamma}_{|t},(pr_{S_1})_*\xi\Big) + \omega_2\Big((pr_{S_2})_*\vec{\gamma}_{|t},(pr_{S_2})_*\xi\Big). 
\end{align}
We also know that $\omega_1$ is a restriction of the Poisson structure on $M\times N$ and $\omega_2$ is a restriction of the Poisson structure on $N\times P$.  We use the isomorphisms
\begin{align}
    T_{s_1}S_1 \cong (\overset{1}{J}_M)_*(T_{s_1}S_1)\oplus T_nN && T_{s_2}S_2 \cong (\overset{2}{J}_N)_*(T_{s_2}S_2)\oplus T_pP,
\end{align}
where $n = \overset{1}{J}_N(s_1)$ and $p = \overset{2}{J}_P(s_2)$.  Since $J_M$-fibers and $J_N$-fibers are symplectically orthogonal in $S_1$, it follows that
\begin{align}
\label{eq:o1}
\omega_1\Big((pr_{S_1})_*\vec{\gamma}_{|t},(pr_{S_1})_*\xi\Big) = \omega_1&\Big((\overset{1}{J}_M\circ pr_{S_1})_*\vec{\gamma}_{|t},(\overset{1}{J}_M\circ pr_{S_1})_*\xi\Big)\nonumber\\
& + \omega_1\Big((\overset{1}{J}_N\circ pr_{S_1})_*\vec{\gamma}_{|t},(\overset{1}{J}_N\circ pr_{S_1})_*\xi\Big).
\end{align}
Similarly, since $\overset{2}{J}_N$-fibers and $\overset{2}{J}_P$-fibers are symplectically orthogonal in $S_2$, it follows that
\begin{align}
\label{eq:o2}
\omega_2\Big((pr_{S_2})_*\vec{\gamma}_{|t},(pr_{S_2})_*\xi\Big) = \omega_2&\Big((\overset{2}{J}_N\circ pr_{S_2})_*\vec{\gamma}_{|t},(\overset{2}{J}_N\circ pr_{S_2})_*\xi\Big)\nonumber\\
& + \omega_2\Big((\overset{2}{J}_P\circ pr_{S_2})_*\vec{\gamma}_{|t},(\overset{2}{J}_P\circ pr_{S_2})_*\xi\Big).
\end{align}

Now, denote the Poisson bivectors associated with $\omega_1$ and $\omega_2$ by $\overset{1}{B}$ and $\overset{2}{B}$, respectively.  Since $\omega_1$ and $\omega_2$ are symplectic, we have associated isomorphisms
\begin{align}
    \overset{1}{B}_\#&: TS_1\to T^*S_1\nonumber\\
   \overset{2}{B}_\#&: TS_2\to T^*S_2.
\end{align}
  Denoting by $\overset{M}{B}$, $\overset{N}{B}$, and $\overset{P}{B}$ the Poisson bivectors on $M$, $N$, and $P$, respectively, we have
\begin{align}
    \overset{1}{B} &= \overset{M}{B}  - \overset{N}{B} \nonumber\\
    \overset{2}{B} &= \overset{N}{B} - \overset{P}{B}.
\end{align}
Hence, it follows from Eq. (\ref{eq:o1}) that
\begin{align}
\label{eq:o1'}
\omega_1(&(pr_{S_1})_*\vec{\gamma}_{|t},(pr_{S_1})_*\xi)\nonumber\\
&= \overset{1}{B}\Big(\overset{1}{B}_\#((\overset{1}{J}_M\circ pr_{S_1})_*\vec{\gamma}_{|t}),\overset{1}{B}_\#((\overset{1}{J}_M\circ pr_{S_1})_*\xi)\Big) + \overset{1}{B}\Big(\overset{1}{B}_\#((\overset{1}{J}_N\circ pr_{S_1})_*\vec{\gamma}_{|t}),\overset{1}{B}_\#((\overset{1}{J}_N\circ pr_{S_1})_*\xi)\Big)\nonumber\\
&= \overset{M}{B}\Big(\overset{1}{B}_\#((\overset{1}{J}_M\circ pr_{S_1})_*\vec{\gamma}_{|t}),\overset{1}{B}_\#((\overset{1}{J}_M\circ pr_{S_1})_*\xi)\Big) -\overset{N}{B}\Big(\overset{1}{B}_\#((\overset{1}{J}_N\circ pr_{S_1})_*\vec{\gamma}_{|t}),\overset{1}{B}_\#((\overset{1}{J}_N\circ pr_{S_1})_*\xi)\Big)
\end{align}
and it follows from Eq. (\ref{eq:o2}) that
\begin{align}
\label{eq:o2'}
    \omega_2(&(pr_{S_2})_*\vec{\gamma}_{|t},(pr_{S_2})_*\xi)\nonumber\\
    &= \overset{2}{B}\Big(\overset{2}{B}_\#((\overset{2}{J}_N\circ pr_{S_2})_*\vec{\gamma}_{|t}),\overset{2}{B}_\#((\overset{2}{J}_N\circ pr_{S_2})_*\xi)\Big) + \overset{2}{B}\Big(\overset{2}{B}_\#((\overset{2}{J}_P\circ pr_{S_2})_*\vec{\gamma}_{|t}),\overset{2}{B}_\#((\overset{2}{J}_P\circ pr_{S_2})_*\xi)\Big)\nonumber\\
    &= \overset{N}{B}\Big(\overset{2}{B}_\#((\overset{2}{J}_N\circ pr_{S_2})_*\vec{\gamma}_{|t}),\overset{2}{B}_\#((\overset{2}{J}_N\circ pr_{S_2})_*\xi)\Big) - \overset{P}{B}\Big(\overset{2}{B}_\#((\overset{2}{J}_P\circ pr_{S_2})_*\vec{\gamma}_{|t}),\overset{2}{B}_\#((\overset{2}{J}_P\circ pr_{S_2})_*\xi)\Big).
\end{align}
Therefore, it follows from Eqs. (\ref{eq:o12}) and (\ref{eq:o1'})-(\ref{eq:o2'}) that
\begin{align}
\label{eq:o12'}
(\omega_1\oplus \omega_2)(\vec{\gamma}_{|t},\xi) = \overset{M}{B}\Big(&\overset{1}{B}_\#((\overset{1}{J}_M\circ pr_{S_1})_*\vec{\gamma}_{|t}),\overset{1}{B}_\#((\overset{1}{J}_M\circ pr_{S_1})_*\xi)\Big) \nonumber\\
&- \overset{P}{B}\Big(\overset{2}{B}_\#((\overset{2}{J}_P\circ pr_{S_2})_*\vec{\gamma}_{|t}),\overset{2}{B}_\#((\overset{2}{J}_P\circ pr_{S_2})_*\xi)\Big).
\end{align}

Now, the constraint that for any two points $\gamma(t_1) = (s_1,s_2)$ and $\gamma(t_2) = (s_1',s_2')$,
\begin{align}
    \overset{1}{J}_M(s_1) = \overset{1}{J}_M(s_1') && \overset{2}{J}_P(s_2) = \overset{2}{J}_P(s_2')
\end{align}
holds if and only if for all $h\in C^\infty(M)$ and $g\in C^\infty(P)$, we have
\begin{align}
    (\overset{1}{J}_M\circ pr_{S_1})_*(\vec{\gamma}_{|t})(h) &= \vec{\gamma}_{|t}(h\circ \overset{1}{J}_M\circ pr_{S_1}) = 0\nonumber\\
    (\overset{2}{J}_P\circ pr_{S_2})_*(\vec{\gamma}_{|t})(g) &= \vec{\gamma}_{|t}(g\circ \overset{2}{J}_P\circ pr_{S_2}) = 0,
\end{align}
or equivalently, $(\overset{1}{J}_M\circ pr_{S_1})_*(\vec{\gamma}_{|t}) = 0$ and $(\overset{2}{J}_P\circ pr_{S_2})_*(\vec{\gamma}_{|t}) = 0$.  It follows from Eq. (\ref{eq:o12'}) that this constraint holds if and only if for all $\xi\in T(S_1\times_N S_2)$,
\begin{align}
    (\omega_1\oplus \omega_2)(\vec{\gamma}_{|t},\xi) = 0,
\end{align}
which by definition holds if and only if $\vec{\gamma}_{|t}\in \mathcal{N}_{S_1\times_N S_2}$.
\end{proof}

We can now use the understanding provided by Lemma \ref{lem:tensprod} of the construction of a tensor product Hilbert bimodule and the understanding provided by Lemmas \ref{lem:sympprod} and \ref{lem:sympfol} of the construction of the tensor product dual pair to prove that the classical limit of a Hilbert bimodule is functorial.  Essentially, we show that the three step procedures for constructing the tensor product Hilbert bimodule and the tensor product dual pair correspond with each other.\medskip

\begin{theorem}
\label{thm:funclim}
Suppose $S_1$ and $S_2$ are weakly regular symplectic dual pairs determined by the symplectic classical limit of Hilbert bimodules $\mathfrak{A}\rightarrowtail\mathcal{E}\leftarrowtail\mathfrak{B}$ and $\mathfrak{B}\rightarrowtail\mathcal{F}\leftarrowtail\mathfrak{C}$, respectively.  Then the symplectic dual pairs
\begin{align}
    M\leftarrow \mathcal{P}(\mathfrak{A}_0\circledast_{\mathcal{E}\otimes_{\mathfrak{B}}\mathcal{F}}\mathfrak{C}_0)\rightarrow P,
\end{align}
and 
\begin{align}
    M\leftarrow S_1\circledcirc_N S_2\rightarrow P,
\end{align}
are isomorphic.
\end{theorem}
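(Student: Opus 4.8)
The plan is to prove the isomorphism by running both constructions through three parallel steps and matching them via Gelfand duality, using Lemmas \ref{lem:tensprod}, \ref{lem:sympprod}, \ref{lem:sympfol} and \ref{lem:leaves}. By Lemma \ref{lem:tensprod}, $\mathfrak{A}_0\circledast_{\mathcal{E}\otimes_{\mathfrak{B}}\mathcal{F}}\mathfrak{C}_0\cong\mathfrak{D}_0^{\mathfrak{B}}$, so $\mathcal{P}(\mathfrak{A}_0\circledast_{\mathcal{E}\otimes_{\mathfrak{B}}\mathcal{F}}\mathfrak{C}_0)\cong\mathcal{P}(\mathfrak{D}_0^{\mathfrak{B}})$, and it suffices to produce a $*$-isomorphism $\mathfrak{D}_0^{\mathfrak{B}}\cong C_0(S_1\circledcirc_N S_2)$ whose Gelfand dual intertwines the maps into $M$ and $P$ and the Poisson brackets. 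First I would identify $\mathfrak{D}_0=(\mathfrak{A}_0\circledast_{\mathcal{E}}\dot{\mathfrak{B}}_0)\otimes(\dot{\mathfrak{B}}_0\circledast_{\mathcal{F}}\mathfrak{C}_0)\cong C_0(T_1\times T_2)$, where $T_1=\mathcal{P}(\mathfrak{A}_0\circledast_{\mathcal{E}}\dot{\mathfrak{B}}_0)$ and $T_2=\mathcal{P}(\dot{\mathfrak{B}}_0\circledast_{\mathcal{F}}\mathfrak{C}_0)$ (using the product-spectrum fact of Appendix \ref{app:prod}); each contains $S_1$, respectively $S_2$, as an open dense subset, with the extra points lying over the point at infinity of the spectrum $\dot N$ of $\dot{\mathfrak{B}}_0$, and each carries a map to $\dot N$ extending $\overset{1}{J}_N$, respectively $\overset{2}{J}_N$.

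Second, I would show that quotienting $\mathfrak{D}_0$ by $L_{\mathfrak{B}_0}$ recovers $S_1\times_N S_2$. The key computation is that $\ker(\pi^{\mathcal{E}_0}_{\mathfrak{B}_0}\otimes\pi^{\mathcal{F}_0}_{\mathfrak{B}_0})$, viewed inside $\dot{\mathfrak{B}}_0\otimes\dot{\mathfrak{B}}_0\cong C(\dot N\times\dot N)$, is the ideal of functions vanishing on the diagonal $\{(n,n):n\in N\}$; hence the generators of $L_{\mathfrak{B}_0}$, read as functions on $T_1\times T_2$, are precisely those vanishing wherever the two $\dot N$-coordinates agree and lie in $N$, and since such agreement forces both points into $S_1$ and $S_2$ this gives $\mathfrak{D}_0/L_{\mathfrak{B}_0}\cong C_0(S_1\times_N S_2)$ — the analogue, modulo the harmless extra points, of Lemma \ref{lem:sympprod}. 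In the third step, passing to $\mathfrak{D}_0^{\mathfrak{B}}$, i.e.\ to the closed span of those $(a\otimes b_1)\otimes(b_2\otimes c)+L_{\mathfrak{B}_0}$ with $b_1\otimes b_2-I_{\mathfrak{B}_0}\otimes I_{\mathfrak{B}_0}\in\ker(\pi^{\mathcal{E}_0}_{\mathfrak{B}_0}\otimes\pi^{\mathcal{F}_0}_{\mathfrak{B}_0})$, imposes $(b_1b_2)(n)=1$ on the diagonal, so that on $S_1\times_N S_2$ such an element is the function $(s_1,s_2)\mapsto a(\overset{1}{J}_M(s_1))\,c(\overset{2}{J}_P(s_2))$. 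Thus $\mathfrak{D}_0^{\mathfrak{B}}$ is the closed span of the functions depending only on $(\overset{1}{J}_M(s_1),\overset{2}{J}_P(s_2))$; by Lemma \ref{lem:leaves} these are exactly the functions constant on the leaves $O_{m,p}$, and a Stone--Weierstrass argument — using that $(\overset{1}{J}_M,\overset{2}{J}_P)$ descends to an injection on the leaf space — identifies this algebra with $C_0(S_1\circledcirc_N S_2)$ through Lemma \ref{lem:sympfol}. Composing these identifications with Lemma \ref{lem:tensprod} and taking spectra produces a homeomorphism $u\colon S_1\circledcirc_N S_2\to\mathcal{P}(\mathfrak{A}_0\circledast_{\mathcal{E}\otimes_{\mathfrak{B}}\mathcal{F}}\mathfrak{C}_0)$.

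Third, I would upgrade $u$ to an isomorphism of symplectic dual pairs. Unwinding the definition of $\tau_1$, the dual of the quotient map $\rho\colon\mathfrak{A}_0\otimes\mathfrak{C}_0\to\mathfrak{A}_0\circledast_{\mathcal{E}\otimes_{\mathfrak{B}}\mathcal{F}}\mathfrak{C}_0$ corresponds under $u$ to the reduced map $(\overline{J}_M,\overline{J}_P)\colon S_1\circledcirc_N S_2\to M\times P$, so $u$ intertwines $J_M$ and $J_P$ with the two legs of $S_1\circledcirc_N S_2$; since on both sides this map into $M\times P$ is an injective smooth map with the same image, $u$ is a diffeomorphism for the two smooth structures, each of which is characterized by having the pulled-back copy of $C^\infty(M\times P)$ as its smooth function algebra. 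Finally, the Poisson bracket on the left-hand space is by construction (Eqs. (\ref{eq:poiss})--(\ref{eq:prodpoiss})) the pull-back along $\hat\rho$ of the product Poisson structure on $M\times P$, while Eq. (\ref{eq:o12'}) in the proof of Lemma \ref{lem:leaves} shows that the reduced symplectic form on $S_1\circledcirc_N S_2$ is likewise the pull-back along $(\overline{J}_M,\overline{J}_P)$ of that same product structure; hence $u$ is a symplectomorphism intertwining the legs, which is precisely an isomorphism of symplectic dual pairs.

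The step I expect to be the main obstacle is the bookkeeping forced by the unitization $\dot{\mathfrak{B}}_0$: one must verify that the extra points of the spectra $T_1$ and $T_2$ genuinely lie over the point at infinity of $\dot N$ and are annihilated both in the quotient by $L_{\mathfrak{B}_0}$ and in the passage to $\mathfrak{D}_0^{\mathfrak{B}}$, so that the geometric side really sees $S_1\times_N S_2$ and then $S_1\circledcirc_N S_2$ rather than some larger space. A closely related subtlety is the explicit determination of $\ker(\pi^{\mathcal{E}_0}_{\mathfrak{B}_0}\otimes\pi^{\mathcal{F}_0}_{\mathfrak{B}_0})$ and the Stone--Weierstrass identification in the third step, both of which rely on weak regularity of $S_1$ and $S_2$ to guarantee that the fiber product and the leaf space are well-behaved manifolds and that $(\overline{J}_M,\overline{J}_P)$ is an embedding.
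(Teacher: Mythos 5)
Your overall strategy---identify $\mathfrak{A}_0\circledast_{\mathcal{E}\otimes_{\mathfrak{B}}\mathcal{F}}\mathfrak{C}_0$ with $C_0(S_1\circledcirc_N S_2)$ as C*-algebras, dualize, and then check the legs and the Poisson brackets---is the paper's strategy, and you invoke the same four lemmas. The execution differs, though: the paper never identifies $\mathfrak{D}_0$ with $C_0(T_1\times T_2)$, nor does it compute $\mathfrak{D}_0/L_{\mathfrak{B}_0}$; instead it writes down the *-isomorphism directly on elementary tensors, $u(a\otimes c+K)\big([(s_1,s_2)]\big)=a\big(\overset{1}{J}_M(s_1)\big)\,c\big(\overset{2}{J}_P(s_2)\big)$, using $\tau_1$ of Lemma \ref{lem:tensprod} for well-definedness and injectivity, Lemma \ref{lem:leaves} for constancy on leaves, and Lemmas \ref{lem:sympprod} and \ref{lem:sympfol} together with a Tietze extension and the factorization $C_0(S_1\times S_2)\cong C_0(S_1)\otimes C_0(S_2)$ for surjectivity. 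Your stage-by-stage dictionary is a reasonable alternative, but it obliges you to prove intermediate identifications the paper never needs, and one of them is where your argument breaks.

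The gap is your ``key computation'' that $\ker(\pi^{\mathcal{E}_0}_{\mathfrak{B}_0}\otimes\pi^{\mathcal{F}_0}_{\mathfrak{B}_0})$, viewed in $C(\dot{N}\times\dot{N})$, is the ideal of functions vanishing on the diagonal. As the notation stands (and as it enters the definition of $L_{\mathfrak{B}_0}$ in Lemma \ref{lem:tensprod}), $\pi^{\mathcal{E}_0}_{\mathfrak{B}_0}\otimes\pi^{\mathcal{F}_0}_{\mathfrak{B}_0}$ is a representation on the external tensor product $\mathcal{E}_0\otimes\mathcal{F}_0$, and for such a product of representations the kernel is the ideal of functions vanishing on a product set $Y_1\times Y_2$, where $Y_1,Y_2\subseteq\dot{N}$ are the closed supports of the two $\mathfrak{B}_0$-actions---not on the diagonal. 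For instance, in the running example with $\mathcal{E}_0=\mathfrak{B}_0$ and $\mathcal{F}_0=\mathfrak{C}_0$ acted on via *-homomorphisms, $b_1\otimes b_2$ acts on $\mathcal{E}_0\otimes\mathcal{F}_0\cong C_0(N\times P)$ as multiplication by $b_1(n)\,b_2(\sigma(p))$, with $\sigma:P\to N$ dual to the homomorphism $\mathfrak{B}_0\to\mathfrak{C}_0$, and its vanishing imposes no relation between the two $\dot{N}$-coordinates. Consequently, with that reading, $\mathfrak{D}_0/L_{\mathfrak{B}_0}$ does not cut out the fiber product, and your chain of identifications never produces the constraint $\overset{1}{J}_N(s_1)=\overset{2}{J}_N(s_2)$ on which everything downstream (Lemma \ref{lem:leaves}, the leaf space, the Stone--Weierstrass step) depends. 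The diagonal description becomes true only if you let $b_1\otimes b_2$ act on the balanced module $\mathcal{E}_0\otimes_{\mathfrak{B}_0}\mathcal{F}_0$, where commutativity of $\mathfrak{B}_0$ gives $\xi b_1\otimes b_2\eta=\xi b_1b_2\otimes\eta$, so the kernel is pulled back along the multiplication map (dual to the diagonal embedding), and you additionally use surjectivity of $\overset{2}{J}_N$ from weak regularity to see that the relevant support is the whole diagonal. Your proposal should state and prove that version explicitly, or obtain the fiber-product constraint by another route---which is what the paper does by defining $u$ on the leaf space from the outset and extracting the geometry from Lemma \ref{lem:leaves} rather than from $L_{\mathfrak{B}_0}$. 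The unitization bookkeeping you flag (the extra points of $T_1$ and $T_2$ over the point at infinity of $\dot{N}$) is a real but secondary chore; the kernel issue is the one that would sink the argument as written.
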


\begin{proof}
We will show that for the tensor product dual pair $S_1\circledcirc_N S_2$, we have an isomorphism
    \begin{align}
        C_0(S_1\circledcirc_N S_2) \cong \mathfrak{A}_0\circledast_{\mathcal{E}\otimes_{\mathfrak{B}}\mathcal{F}}\mathfrak{C}_0.
    \end{align}
Hence, Gelfand duality implies
\begin{align}
    S_1\circledcirc_N S_2\cong \mathcal{P}(\mathfrak{A}_0\circledast_{\mathcal{E}\otimes_{\mathfrak{B}}\mathcal{F}}\mathfrak{C}_0).
\end{align}

We once again denote $K= \ker\big(\pi_{\mathfrak{A}_0}^{\mathcal{E}_0\otimes_{\mathfrak{B}_0}\mathcal{F}_0} \otimes\pi_{\mathfrak{C}_0}^{\mathcal{E}_0\otimes_{\mathfrak{B}_0}\mathcal{F}_0}\big)$, as above.  Now we explicitly define an isomorphism $u: \mathfrak{A}_0\circledast_{\mathcal{E}\otimes_{\mathfrak{B}}\mathcal{F}}\mathfrak{C}_0\to C_0(S_1\circledcirc_N S_2)$ by
\begin{align}
    u(a\otimes c + K)\big([(s_1,s_2)]\big) = a\big(\overset{1}{J}_M(s_1)\big)\cdot c\big(\overset{2}{J}_P(s_2)\big)
\end{align}
for any $a\in\mathfrak{A}_0 = C_0(M)$ and $c\in\mathfrak{C}_0= C_0(P)$ at all points $s_1\in S_1$ and $s_2\in S_2$.  Here, $[(s_1,s_2)]$ is the leaf of the foliation by $\mathcal{N}_{S_1\times_N S_2}$ through the point $(s_1,s_2)\in S_1\times_N S_2$.

We must first show $u$ is well-defined. 
 Notice that it follows from Lemma \ref{lem:leaves} that the right hand side is indeed constant on symplectic leaves. 
 Also, consider $a\otimes c\in K$.  Then it follows from Eq. (\ref{eq:t1}) that $\tau_1(a\otimes c)\in L_{\mathfrak{B}}$.  It follows that 
 \begin{align*}    
 (a\otimes I_{\mathfrak{B}_0})(s_1)\cdot (I_{\mathfrak{B}_0}\otimes c)(s_2) = 0
 \end{align*}
 for all $s_1\in S_1$ and $s_2\in S_2$.  This implies $u(a\otimes c + K) = 0$.  This establishes that $u$ is well-defined, and clearly $u$ is a *-homomorphism.

 Next, we show $u$ is injective.  If $u(a\otimes c + K) = 0$, then it follows that $\tau_1(a\otimes c) \in L_{\mathfrak{B}}$, so the injectivity of $\tau_1$ implies $a\otimes c\in K$.

 Finally, we show that $u$ is surjective.  Consider an arbitrary $f\in C_0(S_1\circledcirc_N S_2)$.  Then, according to Eqs. (\ref{eq:t2}) and (\ref{eq:t3}), we have $\tau_3(f) = \tau_2(\tilde{f} + L_N)$ for some $\tilde{f}\in C_0(S_1\times S_2)$ since $\tau_2$ is surjective.  But we know $C_0(S_1\times S_2)\cong C_0(S_1)\otimes C_0(S_2)$, so there are sequences $a_n\in\mathfrak{A}_0$, $b_m,b_j\in\mathfrak{B}_0$ and $c_k\in\mathfrak{C}_0$ such that
 \begin{align}
     \tilde{f}(s_1,s_2) &= \lim_{n,m,j,k}\Big(\sum_{n,m} \pi_{\mathfrak{A}_0}^{\mathcal{E}_0}(a_n)\otimes \pi_{\mathfrak{B}_0}^{\mathcal{E}_0}(b_m)\Big)(s_1) \otimes \Big(\sum_{j,k} \pi_{\mathfrak{B}_0}^{\mathcal{F}_0}(b_j)\otimes \pi_{\mathfrak{C}_0}^{\mathcal{F}_0}(c_k)(s_2)\Big)\nonumber\\
     &= \lim_{n,m,j,k}\Big(\sum_{n,m} a_n\big(\overset{1}{J}_M(s_1)\big)\cdot b_m\big(\overset{1}{J}_N(s_1)\big)\Big)\otimes \Big(\sum_{j,k} b_j\big(\overset{2}{J}_N(s_2)\big)\cdot c_k\big(\overset{2}{J}_P(s_2)\big)\Big)\nonumber\\
     &= C \lim_{n,k}\sum_{n,k} a_n\big(\overset{1}{J}_M(s_1)\big)\cdot c_k\big(\overset{2}{J}_P(s_2)\big), 
 \end{align}
where $C = \sum_{m,j} b_m\big(\overset{1}{J}_N(s_1)\big)\cdot b_j\big(\overset{2}{J}_N(s_2)\big)$ is constant since $\tau_3(f)$ is constant on leaves.
Now, we have
\begin{align}
u\Big(C\lim_{n,k}\sum_{n,k}(a_n\otimes c_k) + K\Big)\big([(s_1,s_2)]\big) &= C\lim_{n,k}\sum_{n,k} a_n\big(\overset{1}{J}_M(s_1)\big)\cdot c_k\big(\overset{2}{J}_P(s_2)\big)\nonumber\\
&=\tilde{f}(s_1,s_2)\nonumber\\
&=\tau_2(\tilde{f} + L_N)(s_1,s_2)\nonumber\\
&=\tau_3(f)(s_1,s_2)\nonumber\\
&= f\big([(s_1,s_2)]\big).
\end{align}
This establishes that $u$ is surjective, and hence a *-isomorphism.

It is now straightforward to see that  by construction, the Gelfand dual $\hat{u}: S_1\circledcirc_N S_2\to \mathcal{P}\big(\mathfrak{A}_0\circledast_{\mathcal{E}\otimes_{\mathfrak{B}}\mathcal{F}}\mathfrak{C}_0\big)$ preserves the Poisson bracket, and hence the symplectic form, as well as the projection maps.  It follows that $\hat{u}$ realizes an isomorphism of symplectic dual pairs.
\end{proof}

This establishes the functoriality of the symplectic classical limit.

\section{Conclusion}
\label{sec:con}

In this paper, we have provided a two-step procedure for taking the classical limit of a Hilbert bimodule, understood as a morphism between C*-algebras representing quantum theories in the framework of strict deformation quantization.  One starts with a Hilbert bimodule between algebras of uniformly continuous sections of uniformly continuous bundles of C*-algebras, whenever the Hilbert bimodule satisfies the condition of continuous scaling.  In the first step of our procedure, we constructed a quotient Hilbert bimodule that serves as a morphism between the abelian C*-algebras that are the classical limit of the non-commutative algebras.  In the second step of the procedure, we constructed a symplectic space that is dual to that quotient Hilbert bimodule, which serves as a symplectic dual pair between the corresponding classical Poisson manifolds.  We showed that each of these steps is functorial.  In the first step, we showed that composition of Hilbert bimodules between the typically non-commutative C*-algebras at $\hbar>0$ is preserved as composition of Hilbert bimodules between commutative C*-algebras at $\hbar = 0$ through the classical limit.  In the second step, we showed that composition of Hilbert bimodules between commutative C*-algebras at $\hbar = 0$ is preserved as composition of symplectic dual pairs between classical Poisson manifolds whenever those dual pairs are weakly regular so that their composition is defined.  Putting this together, we have shown that the classical limit is functorial for Hilbert bimodules and symplectic dual pairs, when composition is understood via the appropriate tensor product in each case.

We must make a remark concerning the second step of our procedure in which one employs Gelfand duality to construct the middle space of a symplectic dual pair from a Hilbert bimodule.  The assumption we made to guarantee that this space is a symplectic manifold---in particular, the hypothesis of asymptotic irreducibility---is quite strong.  The condition of asymptotic irreducibility certainly holds in the simplest case when the resulting dual pair arises from a symplectomorphism as in Ex. \ref{ex:pmorphdualpair}.  In more general cases, the assumption of asymptotic irreducibility may fail and one may require a more general construction to obtain a symplectic manifold to serve as the middle space of a dual pair.  In such a case, we conjecture that one might search for a larger C*-algebra of operators on the Hilbert bimodule that retains the analogous property of asymptotic irreducibility.  We leave this as a topic for future work to study and perhaps generalize our construction to a larger class of Hilbert bimodules.

Among the Hilbert bimodules we would like to consider are the ones constructed by \citet[][p. 106, Lemma 1]{La01a} between Lie groupoid C*-algebras, as in Ex. \ref{ex:quantYMbimod}.  Indeed, \citet{La02a} shows that these arise as the quantization of certain symplectic dual pairs between the duals of Lie algebroids as in Ex. \ref{ex:classYMdualpair}.  Moreover, this quantization is functorial.  In future work, we hope to apply the tools presented here to analyze the classical limits of those Hilbert bimodules between Lie groupoid C*-algebras.  Indeed, we conjecture that our classical limit functor is almost inverse to the quantization functor defined by Landsman in the sense that the classical limit should reconstruct the symplectic dual pairs betweem duals of Lie algebroids that he quantizes.

\section*{Acknowledgments}
The authors thank Kade Cicchella and Michael Clancy for helpful discussion that led to this work.  Both authors were supported by the National Science Foundation under Grant nos. 1846560 and 2043089.

\section*{Data Availability Statement}

Data sharing not applicable to this article as no datasets were generated or analysed during the current study.

\appendix

\section{The Linking Algebra}
\label{app:link}

In this appendix, we establish some technical lemmas necessary for the proof of Thm. \ref{thm:quotient}.  To do so, we construct a linking algebra analogous to that discussed in Raeburn and Williams \citep[][p. 50]{RaWi98}. As in Thm. \ref{thm:quotient}, we suppose $\mathfrak{A}$ and $\mathfrak{B}$ are the total spaces of continuous bundles of C*-algebras over base spaces given by locally compact metric spaces $I$ and $J$, respectively.  We suppose that $\alpha: I\to J$ is a continuous proper map so that we have corresponding ideals $K_\hbar$ in $\mathfrak{A}$ and $K_{\alpha(\hbar)}$ in $\mathfrak{B}$ defined in Eqs. (\ref{eq:idealA})-(\ref{eq:idealB}).  With this in place, we suppose $\mathfrak{A}\rightarrowtail\mathcal{E}\leftarrowtail\mathfrak{B}$ is a strongly non-degenerate Hilbert bimodule for $\alpha$ at $\hbar\in I$.

The results we need involve a so-called linking algebra $\mathfrak{L}$, which we define as follows.  For $\varphi\in\mathcal{E}$, define the map $\flat_\varphi: \mathcal{E}\to\mathfrak{B}$ by
\begin{align}
\flat_\varphi(\eta) = \inner{\varphi}{\eta}_{\mathfrak{B}}
\end{align}
for all $\eta\in\mathcal{E}$.  Then for $R\in\mathcal{K}(\mathcal{E})$, $\varphi,\eta\in\mathcal{E}$, and $b\in\mathfrak{B}$, consider the matrix \begin{align}
L(R,\varphi,\eta,b) = \begin{pmatrix}
R & \eta\\
\flat_\varphi & b
\end{pmatrix}
\end{align}
understood as a linear operator on $ \mathcal{E}\oplus\mathfrak{B}$ acting by
\begin{align}
\begin{pmatrix}
R & \eta\\
\flat_\varphi & b
\end{pmatrix}
\begin{pmatrix}
\xi\\
b'
\end{pmatrix} = \begin{pmatrix}
    R\xi + \eta\cdot b'\\
    \inner{\varphi}{\xi}_{\mathfrak{B}} + bb'
\end{pmatrix}
\end{align}
for all $\xi\in\mathcal{E}$ and $b'\in\mathfrak{B}$.  Let $\mathfrak{L}$ denote the set of all operators of the form $L(R,\varphi,\eta,b)$ for some $R\in\mathcal{K}(\mathcal{E})$, $\varphi,\eta\in\mathcal{E}$, and $b\in\mathfrak{B}$.  Finally, notice that $\mathcal{E}\oplus\mathfrak{B}$ carries a $\mathfrak{B}$-valued inner product given by
\begin{align}
\inner{\varphi\oplus b}{\eta\oplus b'}_\mathfrak{B}  = \inner{\varphi}{\eta}_{\mathfrak{B}} + b^*b'
\end{align}
for all $\varphi,\eta\in\mathcal{E}$ and $b,b'\in\mathfrak{B}$.
Then we have the following result.\medskip

\begin{lemma}
\label{lemma:link}
$\mathfrak{L}$ is a C*-subalgebra of $\mathcal{L}(\mathcal{E}\oplus\mathfrak{B})$, and
\begin{align}
\label{eq:Lbound}
\max\big\{\norm{R},\norm{\varphi}_{\mathfrak{B}},\norm{\eta}_{\mathfrak{B}},\norm{b}\big\} \leq \norm{L(R,\varphi,\eta,b)}\leq \norm{R} + \norm{\varphi}_\mathfrak{B} + \norm{\eta}_{\mathfrak{B}} + \norm{b}.
\end{align}
\end{lemma}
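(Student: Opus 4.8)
The plan is to verify the two assertions in order, first the norm estimate (\ref{eq:Lbound}) and then the algebraic closure properties, using the estimate to dispatch norm-closedness. Throughout I would treat $\mathcal{E}\oplus\mathfrak{B}$ as a Hilbert $\mathfrak{B}$-module under the stated inner product (it is complete since $\mathcal{E}$ and $\mathfrak{B}$ are, the module norm being equivalent to $\max\{\norm{\cdot}_{\mathfrak{B}},\norm{\cdot}\}$), so that $\mathcal{L}(\mathcal{E}\oplus\mathfrak{B})$ is a C*-algebra and it suffices to exhibit $\mathfrak{L}$ as a norm-closed $*$-subalgebra of it.

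For the upper bound in (\ref{eq:Lbound}) I would write $L(R,\varphi,\eta,b)$ as the sum of the four operators supported in the $(1,1)$, $(1,2)$, $(2,1)$, $(2,2)$ slots and apply the triangle inequality: the diagonal pieces have norms $\norm{R}$ and $\norm{b}$; the $(1,2)$ piece $\xi\oplus c\mapsto (\eta c)\oplus 0$ has norm at most $\norm{\eta}_{\mathfrak{B}}$ because $\norm{\eta c}^2=\norm{c^*\inner{\eta}{\eta}_{\mathfrak{B}}c}\le\norm{c}^2\norm{\eta}_{\mathfrak{B}}^2$; and the $(2,1)$ piece $\xi\oplus c\mapsto 0\oplus\inner{\varphi}{\xi}_{\mathfrak{B}}$ has norm at most $\norm{\varphi}_{\mathfrak{B}}$ by the Cauchy--Schwarz inequality for Hilbert modules. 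For the lower bound, restricting $L(R,\varphi,\eta,b)$ to the submodule $\mathcal{E}\oplus 0$ and using $\norm{A+B}\ge\max\{\norm{A},\norm{B}\}$ for positive $A,B$ gives $\norm{L(R,\varphi,\eta,b)}\ge\norm{R}$ and $\norm{L(R,\varphi,\eta,b)}\ge\sup_{\norm{\xi}\le 1}\norm{\inner{\varphi}{\xi}_{\mathfrak{B}}}=\norm{\varphi}_{\mathfrak{B}}$ (the supremum being attained along $\xi=\varphi/\norm{\varphi}_{\mathfrak{B}}$ when $\varphi\ne 0$); restricting to $0\oplus\mathfrak{B}$ and evaluating on an approximate identity of $\mathfrak{B}$ gives $\norm{L(R,\varphi,\eta,b)}\ge\norm{b}$; and $\norm{L(R,\varphi,\eta,b)}\ge\norm{\eta}_{\mathfrak{B}}$ then follows from these applied to the adjoint, once I have computed it.

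For the $*$-algebra structure, linearity is immediate, so the content is the adjoint and the product. For the adjoint I would expand $\inner{L(R,\varphi,\eta,b)(\xi\oplus c)}{\xi'\oplus c'}_{\mathfrak{B}}$ and, using adjointability of $R$, right $\mathfrak{B}$-linearity in the second slot, and the identity $\inner{\psi b}{\eta}_{\mathfrak{B}}=b^*\inner{\psi}{\eta}_{\mathfrak{B}}$ (which comes from the stated inner-product/action compatibility by taking adjoints), rewrite it as $\inner{\xi\oplus c}{L(R^*,\eta,\varphi,b^*)(\xi'\oplus c')}_{\mathfrak{B}}$; positive definiteness then forces $L(R,\varphi,\eta,b)^*=L(R^*,\eta,\varphi,b^*)\in\mathfrak{L}$, so in particular each $L(R,\varphi,\eta,b)$ is adjointable. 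For the product I would multiply the two matrices on a vector $\xi\oplus c$: the $(1,1)$ entry becomes $RR'+\Theta_{\varphi',\eta}$, which lies in $\mathcal{K}(\mathcal{E})$ since the compacts form an ideal in $\mathcal{L}(\mathcal{E})$ and $\Theta_{\varphi',\eta}\in\mathcal{K}(\mathcal{E})$; the $(1,2)$ slot becomes $R\eta'+\eta\cdot b'\in\mathcal{E}$; the $(2,2)$ entry becomes $\inner{\varphi}{\eta'}_{\mathfrak{B}}+bb'\in\mathfrak{B}$; and the $(2,1)$ entry $\xi\mapsto\inner{\varphi}{R'\xi}_{\mathfrak{B}}+b\inner{\varphi'}{\xi}_{\mathfrak{B}}$ is rewritten, via adjointability of $R'$ and the left-scaling identity above, as $\flat_\psi$ with $\psi=(R')^*\varphi+\varphi'b^*\in\mathcal{E}$. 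Hence the product again has the form $L(\,\cdot\,,\cdot\,,\cdot\,,\cdot\,)$ and lies in $\mathfrak{L}$.

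Finally, norm-closedness follows from (\ref{eq:Lbound}): if a sequence $L(R_n,\varphi_n,\eta_n,b_n)$ is Cauchy, the lower bound makes $(R_n)$, $(\varphi_n)$, $(\eta_n)$, $(b_n)$ Cauchy in $\mathcal{K}(\mathcal{E})$, $\mathcal{E}$, $\mathcal{E}$, $\mathfrak{B}$ respectively (each complete), and the upper bound identifies the limit operator as $L(R,\varphi,\eta,b)$ for the respective limits. Thus $\mathfrak{L}$ is a norm-closed $*$-subalgebra of the C*-algebra $\mathcal{L}(\mathcal{E}\oplus\mathfrak{B})$, hence a C*-subalgebra. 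I expect the bookkeeping in the product computation---recognising the $(2,1)$ entry as a single $\flat_\psi$, which hinges on $\inner{\psi b}{\eta}_{\mathfrak{B}}=b^*\inner{\psi}{\eta}_{\mathfrak{B}}$ and on the adjoint $(R')^*$ being available---to be the one step requiring care; the remaining estimates are standard Hilbert-module arguments.
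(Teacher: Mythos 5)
Your proposal is correct, and its skeleton coincides with the paper's: the same identification of the adjoint $L(R,\varphi,\eta,b)^* = L(R^*,\eta,\varphi,b^*)$, the same matrix product computation with the $(2,1)$ entry recognised as $\flat_{R'^*\varphi+\varphi' b^*}$ (your $(1,1)$ entry $RR'+\Theta_{\varphi',\eta}$ is the paper's $RR'+\eta\flat_{\varphi'}$), the same four-summand decomposition and triangle inequality for the upper bound, and the same deduction of norm-closedness from the two-sided estimate plus completeness of $\mathcal{K}(\mathcal{E})$, $\mathcal{E}$, and $\mathfrak{B}$. Where you genuinely diverge is the lower bound. The paper obtains $\norm{\eta}_{\mathfrak{B}},\norm{\varphi}_{\mathfrak{B}},\norm{b}\leq\norm{L}$ by Cauchy--Schwarz applied to cleverly chosen test vectors such as $(\eta,0)$ and $(0,\inner{\eta}{\eta}_{\mathfrak{B}})$, yielding inequalities of the shape $\norm{\eta}_{\mathfrak{B}}^4\leq\norm{L}\norm{\eta}_{\mathfrak{B}}^3$, and it handles $\norm{R}\leq\norm{L}$ by subtracting the other three corners and invoking the triangle inequality. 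You instead restrict $L$ to the submodules $\mathcal{E}\oplus 0$ and $0\oplus\mathfrak{B}$, use that $0\leq A\leq A+B$ forces $\norm{A}\leq\norm{A+B}$ for positive elements of $\mathfrak{B}$ to peel off $\norm{R}$ and $\norm{\varphi}_{\mathfrak{B}}$ simultaneously, extract $\norm{b}$ with an approximate identity, and get $\norm{\eta}_{\mathfrak{B}}$ for free from the already-computed adjoint. This route is cleaner and more uniform; in particular it yields $\norm{R}\leq\norm{L}$ by a direct positivity argument, whereas the paper's triangle-inequality step for that corner is loose as written (subtracting the off-diagonal and $(2,2)$ pieces only bounds $\norm{R}$ by $\norm{L}$ plus the norms of those pieces), so your version actually tightens that point.
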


\begin{proof}
First, note that each operator $L(R,\varphi,\eta,b)$ is indeed adjointable with adjoint $L(R^*,\eta,\varphi,b^*)$.  Clearly, $\mathfrak{L}$ is closed under addition and scalar multiplication.  Moreover, for another operator 
of the form $L(R',\varphi',\eta',b')$, we have
\begin{align}
L(R,\varphi,\eta,b)L(R',\varphi',\eta',b') = \begin{pmatrix}
R & \eta\\
\flat_\varphi & b
\end{pmatrix}
\begin{pmatrix}
R' & \eta'\\
\flat_{\varphi'} & b'
\end{pmatrix} = \begin{pmatrix}
RR' + \eta\flat_{\varphi'} & R\eta' + \eta\cdot b'\\
\flat_\varphi R' + b\flat_{\varphi'} & \inner{\varphi}{\eta}_\mathfrak{B} + bb'
\end{pmatrix}.
\end{align}
Now, $RR'\in\mathcal{K}(\mathcal{E})$ and $\eta\flat_{\varphi'}\in\mathcal{K}(\mathcal{E})$ so that $(RR' + \eta\flat_{\varphi'})\in\mathcal{K}(\mathcal{E})$.  Likewise, $(R\eta' + \eta\cdot b)\in \mathcal{E}$.  Moreover, $\flat_\varphi R' + b\flat_{\varphi'} = \flat_{R'^*\varphi + \varphi'\cdot b^*}$.  Finally, $(\inner{\varphi}{\eta}_{\mathfrak{B}} + bb')\in\mathfrak{B}$.  Hence, we have
\begin{align}
L(R,\varphi,\eta,b)L(R',\varphi',\eta',b') = L\big(RR' + \eta\flat_{\varphi'},R'^*\varphi + \varphi\cdot b^*,R\eta' + \eta\cdot b,\inner{\varphi}{\eta}_{\mathfrak{B}} + bb'\big)\in\mathfrak{L}
\end{align}
and so $\mathfrak{L}$ is closed under multiplication.

Now, we prove the rightmost inequality in Eq. (\ref{eq:Lbound}).  Note that
\begin{align}
L(R,\varphi,\eta,b) = \begin{pmatrix}
R & 0\\
0 & 0
\end{pmatrix} + \begin{pmatrix}
0 & \eta\\
0 & 0
\end{pmatrix} + \begin{pmatrix}
0 & 0\\
\flat_\varphi & 0
\end{pmatrix} + \begin{pmatrix}
0 & 0\\
0 & b
\end{pmatrix}
\end{align}
so that we can treat each summand separately.

For the first summand, we have
\begin{align}
\label{eq:bound1}
    \left \Vert \begin{pmatrix}
R & 0\nonumber\\
0 & 0
\end{pmatrix}\right \Vert  &= \sup_{\Vert (\xi,b')\Vert\leq 1}\left \Vert \begin{pmatrix}
R & 0\nonumber\\
0 & 0
\end{pmatrix} \begin{pmatrix} \xi\nonumber\\
b'\end{pmatrix}\right \Vert \nonumber\\
&= \sup_{\Vert (\xi,b')\Vert\leq 1} \left\Vert\begin{pmatrix}
R\xi\\
0\end{pmatrix}\right\Vert = \sup_{\Vert \xi\Vert_{\mathfrak{B}}\leq 1} \Vert
R\xi\Vert_{\mathfrak{B}} = \Vert R\Vert.
\end{align}

For the second summand, we have
\begin{align}
    \left\Vert\begin{pmatrix}
    0 & \eta\nonumber\\
    0 & 0
    \end{pmatrix}\right\Vert &= \sup_{\Vert (\xi,b')\Vert\leq 1} \left\Vert\begin{pmatrix}
    0 & \eta\nonumber\\
    0 & 0
    \end{pmatrix} \begin{pmatrix}
    \xi\\
    b'
    \end{pmatrix}\right\Vert\\
    &= \sup_{\Vert (\xi,b')\Vert\leq 1}\left\vert\begin{pmatrix}
    \eta\cdot b'\\
    0
    \end{pmatrix}\right\Vert = \sup_{\norm{b'}\leq 1} \norm{\eta\cdot b'}_{\mathfrak{B}} \leq \norm{\eta}_{\mathfrak{B}}.
\end{align}

For the third summand, we have from the Cauchy-Schwarz inequality \citep[][p. 3, Prop. 1.1]{La95} that
\begin{align}
    \left\Vert\begin{pmatrix}
    0 & 0\nonumber\\
    \flat_\varphi & 0
    \end{pmatrix}\right\Vert &= \sup_{\norm{(\xi,b')}\leq 1} \left\Vert \begin{pmatrix}
    0 & 0\nonumber\\
    \flat_\varphi & 0
    \end{pmatrix} \begin{pmatrix}
    \xi\nonumber\\
    b'\end{pmatrix}\right\Vert\nonumber\\
    &= \sup_{\norm{(\xi,b')}\leq 1} \left\Vert\begin{pmatrix}
    0\\
    \inner{\varphi}{\xi}_{\mathfrak{B}}
    \end{pmatrix}\right\Vert \leq \sup_{\norm{\xi}_{\mathfrak{B}}\leq 1} \norm{\inner{\varphi}{\xi}_{\mathfrak{B}}}\leq \norm{\varphi}_{\mathfrak{B}}.
\end{align}

For the fourth summand, we have
\begin{align}
    \left\Vert\begin{pmatrix}
    0 & 0\nonumber\\
    0 & b
    \end{pmatrix}\right\Vert &=\sup_{\norm{(\xi,b')}\leq 1}\left \Vert \begin{pmatrix}
    0 & 0\nonumber\\
    0 & b
    \end{pmatrix} \begin{pmatrix}
    \xi\nonumber\\
    b'
    \end{pmatrix}\right\Vert\nonumber\\
    &= \sup_{\norm{\xi,b'}\leq 1} \left\Vert\begin{pmatrix}
    0\\
    bb'
    \end{pmatrix}\right\Vert \leq \sup_{\norm{b'}\leq 1} \norm{bb'} \leq \norm{b}.
\end{align}
Hence, the rightmost inequality in Eq. (\ref{eq:Lbound}) is implied by the triangle inequality.

For the leftmost inequality in Eq. (\ref{eq:Lbound}), we consider each term.

First, since we have 
\begin{align}
    \begin{pmatrix}
    R & 0\\
    0 & 0
    \end{pmatrix} = L(R,\varphi,\eta,b) - \begin{pmatrix}
    0 & \eta\\
    0 & 0
    \end{pmatrix} - \begin{pmatrix}
    0 & 0\\
    \flat_{\varphi} & 0
    \end{pmatrix} - \begin{pmatrix}
    0 & 0\\
    0 & b
    \end{pmatrix},
\end{align}
it follows from Eq. (\ref{eq:bound1}) and the triangle inequality that $\norm{R}\leq\norm{L(R,\varphi,\eta,b)}$.

Second, note that it follows from the Cauchy-Schwarz inequality that
\begin{align}
    \left\langle \begin{pmatrix}
    \eta\nonumber\\
    0\end{pmatrix}
    , L(R,\varphi,\eta,b) \begin{pmatrix}
    0\nonumber\\
    \inner{\eta}{\eta}_{\mathfrak{B}}\end{pmatrix}
    \right\rangle_{\mathfrak{B}}^*&\left\langle \begin{pmatrix}
    \eta\nonumber\\
    0\end{pmatrix}
    , L(R,\varphi,\eta,b) \begin{pmatrix}
    0\nonumber\\
    \inner{\eta}{\eta}_{\mathfrak{B}}\end{pmatrix}
    \right\rangle_{\mathfrak{B}}\nonumber\\ 
    &\leq \left\Vert L(R,\varphi,\eta,b)\begin{pmatrix}
    0\nonumber\\
    \inner{\eta}{\eta}_{\mathfrak{B}}
    \end{pmatrix}\right\Vert_{\mathfrak{B}}^2 \left\langle\begin{pmatrix}
    \eta\nonumber\\
    0
    \end{pmatrix},\begin{pmatrix}
    \eta\nonumber\\
    0
    \end{pmatrix}\right\rangle_{\mathfrak{B}}\nonumber\\
    &\leq \norm{\eta}_{\mathfrak{B}}^4\norm{L(R,\varphi,\eta,b)}^2\inner{\eta}{\eta}_{\mathfrak{B}}.
\end{align}
Hence, we have
\begin{align}
    \norm{\eta}_{\mathfrak{B}}^4 = \norm{\inner{\eta}{\eta}_{\mathfrak{B}}}^2 = \left\Vert\left\langle \begin{pmatrix}
    \eta\\
    0
    \end{pmatrix}, L(R,\varphi,\eta,b)\begin{pmatrix}
    0\\
    \inner{\eta}{\eta}_{\mathfrak{B}}
    \end{pmatrix}\right\rangle_{\mathfrak{B}}\right\Vert\leq \norm{L(R,\varphi,\eta,b)}\norm{\eta}_{\mathfrak{B}}^3,
\end{align}
which implies $\norm{\eta}_{\mathfrak{B}}\leq\norm{L(R,\varphi,\eta,b)}$.

Third, we do a similar calculation to show
\begin{align}
    \left\langle \begin{pmatrix}
    0\nonumber\\
    \inner{\varphi}{\varphi}_{\mathfrak{B}}
    \end{pmatrix}, L(R,\varphi,\eta,b)\begin{pmatrix}
    \varphi\nonumber\\
    0
    \end{pmatrix}\right\rangle_{\mathfrak{B}}^*&\left\langle \begin{pmatrix}
    0\nonumber\\
    \inner{\varphi}{\varphi}_{\mathfrak{B}}
    \end{pmatrix}, L(R,\varphi,\eta,b)\begin{pmatrix}
    \varphi\nonumber\\
    0
    \end{pmatrix}\right\rangle_{\mathfrak{B}}\nonumber\\
    &\leq \left\Vert L(R,\varphi,\eta,b)\begin{pmatrix}
    \varphi\nonumber\\
    0
    \end{pmatrix}\right\Vert^2_{\mathfrak{B}} \left\langle\begin{pmatrix}
    0\nonumber\\
    \inner{\varphi}{\varphi}_{\mathfrak{B}}
    \end{pmatrix},\begin{pmatrix}
    0\nonumber\\
    \inner{\varphi}{\varphi}_{\mathfrak{B}}
    \end{pmatrix}\right\rangle_{\mathfrak{B}}\nonumber\\
    &\leq\norm{\varphi}_{\mathfrak{B}}^2\norm{L(R,\varphi,\eta,b)}^2\inner{\varphi}{\varphi}_{\mathfrak{B}}^2.
\end{align}
Hence, we have
\begin{align}
    \norm{\varphi}_{\mathfrak{B}}^4 = \norm{\inner{\varphi}{\varphi}_{\mathfrak{B}}}^2 = \left\Vert \left\langle \begin{pmatrix}
    0\\
    \inner{\varphi}{\varphi}_{\mathfrak{B}}
    \end{pmatrix}, L(R,\varphi,\eta,b) \begin{pmatrix}
    \varphi\\
    0
    \end{pmatrix}\right\rangle_{\mathfrak{B}}\right\Vert\leq \norm{L(R,\varphi,\eta,b)}\norm{\varphi}^3_{\mathfrak{B}},
\end{align}
which implies $\norm{\varphi}_{\mathfrak{B}}\leq\norm{L(R,\varphi,\eta,b)}$.

Fourth, since we have
\begin{align}
    \norm{b}^3 = \left\Vert\left\langle \begin{pmatrix}
    0\\
    b
    \end{pmatrix}, L(R,\varphi,\eta,b)\begin{pmatrix}
    0\\
    b
    \end{pmatrix}\right\rangle_{\mathfrak{B}}\right\Vert\leq \norm{L(R,\varphi,\eta,b)}\norm{b}^2,
\end{align}
it follows that $\norm{b}\leq\norm{L(R,\varphi,\eta,b)}$.

Thus, we have established both inequalities in Eq. (\ref{eq:Lbound}).  Now it follows that $\mathfrak{L}$ is norm closed as well because $\mathcal{K}(\mathcal{E})$, $\mathcal{E}$, $\mathfrak{B}$, and the space of operators of the form $\flat_\varphi$ are all closed.  Thus, $\mathfrak{L}$ is a C*-algebra.
\end{proof}

Next, we define a subalgebra of $\mathfrak{L}$.  We will consider the collection of compact operators $R$ on $\mathcal{E}$ whose image is contained in $\overline{\mathcal{E}\cdot K_{\alpha(\hbar)}}$ in the sense that for all $\psi\in\mathcal{E}$, $R\psi\in \overline{\mathcal{E}\cdot K_{\alpha(\hbar)}}$.  As shorthand for this property, we will say $R\in \mathcal{K}(\mathcal{E},\overline{\mathcal{E}\cdot K_{\alpha(\hbar)}})$ when $R\in\mathcal{K}(\mathcal{E})$ and $R(\mathcal{E})\subseteq \overline{\mathcal{E}\cdot K_{\alpha(\hbar)}}$.  Now we define
\begin{align}
    \mathfrak{D} = \big\{L(R,\varphi,\eta,b)\ |\ R\in \mathcal{K}(\mathcal{E},\overline{\mathcal{E}\cdot K_{\alpha(\hbar)}});\  b\in K_{\alpha(\hbar)};\ \varphi,\eta\in \overline{\mathcal{E}\cdot K_{\alpha(\hbar)}}\big\}.
\end{align}
\medskip

\begin{lemma}
$\mathfrak{D}$ is a closed two-sided ideal of $\mathfrak{L}$.
\end{lemma}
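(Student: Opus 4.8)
Write $\mathcal{F} = \overline{\mathcal{E}\cdot K_{\alpha(\hbar)}}$ and abbreviate $K = K_{\alpha(\hbar)}$; recall that $\mathcal{E}\oplus\mathfrak{B}$ is a Hilbert $\mathfrak{B}$-module and that $\mathfrak{L}$ is a C*-subalgebra of $\mathcal{L}(\mathcal{E}\oplus\mathfrak{B})$ by Lemma~\ref{lemma:link}. The plan is to record a handful of structural facts about $\mathcal{F}$, deduce that $\mathcal{K}(\mathcal{E},\overline{\mathcal{E}\cdot K_{\alpha(\hbar)}})$ is a closed, $*$-stable subspace of $\mathcal{K}(\mathcal{E})$, and then read the ideal property off the product formula established in the proof of Lemma~\ref{lemma:link}.

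First I would establish four facts about $\mathcal{F}$: (i) $\mathcal{F}$ is a closed $\mathfrak{B}$-submodule of $\mathcal{E}$; (ii) $T(\mathcal{F})\subseteq\mathcal{F}$ for every $T\in\mathcal{L}(\mathcal{E})$, since $T(e\cdot k)=(Te)\cdot k$ and $T$ is bounded; (iii) $\inner{\mathcal{F}}{\mathcal{E}}_{\mathfrak{B}}\subseteq K$ and $\inner{\mathcal{E}}{\mathcal{F}}_{\mathfrak{B}}\subseteq K$, because $\inner{e\cdot k}{\xi}_{\mathfrak{B}}=k^{*}\inner{e}{\xi}_{\mathfrak{B}}\in K$ and $K$ is closed (the second inclusion follows by taking adjoints); and (iv) $\mathcal{F}=\{g\in\mathcal{E}\mid\inner{g}{g}_{\mathfrak{B}}\in K\}$. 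The inclusion ``$\subseteq$'' in (iv) is a special case of (iii); for ``$\supseteq$'', given $g$ with $h=\inner{g}{g}_{\mathfrak{B}}\in K$ and an approximate identity $(u_{\lambda})$ of $K$, one computes $\inner{g-g\cdot u_{\lambda}}{g-g\cdot u_{\lambda}}_{\mathfrak{B}}=(1-u_{\lambda})h(1-u_{\lambda})\to 0$, so $g=\lim_{\lambda}g\cdot u_{\lambda}\in\overline{\mathcal{E}\cdot K}=\mathcal{F}$. Fact (iv) is the one genuinely nontrivial input.

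Next I would show $\mathcal{K}(\mathcal{E},\overline{\mathcal{E}\cdot K_{\alpha(\hbar)}})$ is norm closed and $*$-stable in $\mathcal{K}(\mathcal{E})$. Norm closedness is immediate: if $R_{n}\to R$ in $\mathcal{K}(\mathcal{E})$ with $R_{n}(\mathcal{E})\subseteq\mathcal{F}$, then $Re=\lim R_{n}e\in\mathcal{F}$ for each $e$. For $*$-stability, given $R\in\mathcal{K}(\mathcal{E})$ with $R(\mathcal{E})\subseteq\mathcal{F}$ and $e\in\mathcal{E}$, one has $\inner{R^{*}e}{R^{*}e}_{\mathfrak{B}}=\inner{RR^{*}e}{e}_{\mathfrak{B}}$; since $RR^{*}e\in R(\mathcal{E})\subseteq\mathcal{F}$, fact (iii) puts this inner product in $K$, whence $R^{*}e\in\mathcal{F}$ by (iv), so $R^{*}\in\mathcal{K}(\mathcal{E},\overline{\mathcal{E}\cdot K_{\alpha(\hbar)}})$.

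Finally I would assemble the claim. $\mathfrak{D}$ is a linear subspace because $L$ is linear in its four arguments and each coordinate condition is linear, and it is $*$-stable because $L(R,\varphi,\eta,b)^{*}=L(R^{*},\eta,\varphi,b^{*})$ with $R^{*}\in\mathcal{K}(\mathcal{E},\overline{\mathcal{E}\cdot K_{\alpha(\hbar)}})$, $b^{*}\in K$, and $\varphi,\eta\in\mathcal{F}$. It is a right ideal: for $L(R,\varphi,\eta,b)\in\mathfrak{D}$ and $L(R',\varphi',\eta',b')\in\mathfrak{L}$, the product formula from the proof of Lemma~\ref{lemma:link} expresses the product as $L$ of four entries, and one checks the top-left entry lies in $\mathcal{K}(\mathcal{E},\overline{\mathcal{E}\cdot K_{\alpha(\hbar)}})$ (one summand has range in $R(\mathcal{E})\subseteq\mathcal{F}$, the other is an operator $\Theta_{\varphi',\eta}$ with $\eta\in\mathcal{F}$, hence range in $\eta\cdot\mathfrak{B}\subseteq\mathcal{F}$), the two off-diagonal entries lie in $\mathcal{F}$ (using (i), (ii), and $K^{*}=K$), and the bottom-right entry lies in $K$ (using (iii) and that $K$ is an ideal). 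A $*$-stable right ideal of a C*-algebra is two-sided, so $\mathfrak{D}$ is a two-sided ideal; and it is norm closed because the bound $\max\{\norm{R},\norm{\varphi}_{\mathfrak{B}},\norm{\eta}_{\mathfrak{B}},\norm{b}\}\le\norm{L(R,\varphi,\eta,b)}$ of Lemma~\ref{lemma:link} forces convergence in $\mathfrak{L}$ to be coordinatewise, while $\mathcal{K}(\mathcal{E},\overline{\mathcal{E}\cdot K_{\alpha(\hbar)}})$, $\mathcal{F}$, and $K$ are each norm closed. The main obstacle is fact (iv) and the $*$-stability of $\mathcal{K}(\mathcal{E},\overline{\mathcal{E}\cdot K_{\alpha(\hbar)}})$ it powers; everything after that is routine bookkeeping against the product formula of Lemma~\ref{lemma:link}.
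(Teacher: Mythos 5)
Your proof is correct, and its skeleton matches the paper's: both arguments verify the ideal property entrywise against the matrix product formula recorded in the proof of Lemma \ref{lemma:link}, and both obtain closedness of $\mathfrak{D}$ from the lower bound $\max\{\norm{R},\norm{\varphi}_{\mathfrak{B}},\norm{\eta}_{\mathfrak{B}},\norm{b}\}\leq\norm{L(R,\varphi,\eta,b)}$ together with closedness of $\mathcal{K}(\mathcal{E},\overline{\mathcal{E}\cdot K_{\alpha(\hbar)}})$, $\overline{\mathcal{E}\cdot K_{\alpha(\hbar)}}$, and $K_{\alpha(\hbar)}$. You deviate in two ways. First, you only check products with $\mathfrak{D}$ on one side and then deduce two-sidedness from $*$-invariance of $\mathfrak{D}$, whereas the paper computes both $LL'$ and $L'L$ and checks all eight entries; this halves the bookkeeping at the price of needing $*$-stability explicitly. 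Second, and more substantively, you actually prove that $R\in\mathcal{K}(\mathcal{E},\overline{\mathcal{E}\cdot K_{\alpha(\hbar)}})$ implies $R^{*}\in\mathcal{K}(\mathcal{E},\overline{\mathcal{E}\cdot K_{\alpha(\hbar)}})$, via your fact (iv), the characterization $\overline{\mathcal{E}\cdot K_{\alpha(\hbar)}}=\{g\in\mathcal{E}\ |\ \inner{g}{g}_{\mathfrak{B}}\in K_{\alpha(\hbar)}\}$ established with an approximate identity of the ideal. The paper needs exactly this adjoint-stability: in step 3 of its proof it writes the lower-left entry of $LL'$ as $\flat_{\xi}$ with $\xi=R'^{*}\varphi+\varphi'\cdot b^{*}$ and asserts, without argument, that $R'\in\mathcal{K}(\mathcal{E},\overline{\mathcal{E}\cdot K_{\alpha(\hbar)}})$ entails $R'^{*}\in\mathcal{K}(\mathcal{E},\overline{\mathcal{E}\cdot K_{\alpha(\hbar)}})$; your facts (iii)--(iv) supply the justification the paper leaves implicit, so your write-up is in this respect more complete. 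Beyond that, both routes use the same ingredients---$K_{\alpha(\hbar)}$ is a closed $*$-ideal, adjointable operators are $\mathfrak{B}$-linear, and $\flat_{\varphi}T=\flat_{T^{*}\varphi}$, $b\,\flat_{\varphi}=\flat_{\varphi\cdot b^{*}}$---so I see no gap in your argument.
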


\begin{proof}
Consider $L = L(R,\varphi,\eta,b)\in \mathfrak{L}$ and $L'=L(R',\varphi',\eta',b')\in\mathfrak{D}$.  We have

\begin{align}
    LL' = \begin{pmatrix}
    R & \eta\\
    \flat_{\varphi} & b
    \end{pmatrix} \begin{pmatrix}
    R' & \eta'\\
    \flat_{\varphi'} & b'
    \end{pmatrix} = \begin{pmatrix}
    RR' + \eta\flat_{\varphi'} & R\eta' + \eta b'\\
    \flat_{\varphi}R' + b\flat_{\varphi'} & \inner{\varphi}{\eta'}_{\mathfrak{B}} + bb'
    \end{pmatrix}
\end{align}
and similarly
\begin{align}
    L'L = \begin{pmatrix}
    R'R + \eta'\flat_{\varphi} & R'\eta + \eta' b\\
    \flat_{\varphi'}R + b'\flat_{\varphi} & \inner{\varphi'}{\eta}_{\mathfrak{B}} + b'b
    \end{pmatrix}.
\end{align}
We consider each matrix element in turn and show that it belongs to the appropriate domain.

\begin{enumerate}
    \item $RR' + \eta \flat_{\varphi'}\in \mathcal{K}(\mathcal{E},\overline{\mathcal{E}\cdot K_{\alpha(\hbar)}})$ and $R'R + \eta'\flat_\varphi\in (\mathcal{K}(\mathcal{E},\overline{\mathcal{E}\cdot K_{\alpha(\hbar)}})$.

    Clearly, each of the operators in each of these sums is compact.  We must show that their images are contained in $\mathcal{K}(\mathcal{E},\overline{\mathcal{E}\cdot K_{\alpha(\hbar)}})$.
    
    Since $R'\in \mathcal{K}(\mathcal{E},\overline{\mathcal{E}\cdot K_{\alpha(\hbar)}})$, we have that for any $\psi\in\mathcal{E}$, $R'\psi = \lim_i\sum_i \xi_i b_i$ for some $b_i\in K_{\alpha(\hbar)}$ and $\xi_i\in \mathcal{E}$.  This implies immediately that $R'R\psi \in \overline{\mathcal{E}\cdot K_{\alpha(\hbar)}}$.  Moreover, since the right $\mathfrak{B}$-action commutes with all compact operators, we have that $RR'\psi = \lim_i\sum_i R(\xi_i b_i) = \lim_i\sum_i R\xi_i\cdot b_i\in \overline{\mathcal{E}\cdot K_{\alpha(\hbar)}}$.  Hence, we have that $RR',R'R\in \mathcal{K}(\mathcal{E},\overline{\mathcal{E}\cdot K_{\alpha(\hbar)}})$.

    Next, since $\varphi'\in \overline{\mathcal{E}\cdot K_{\alpha(\hbar)}}$, we have that $\varphi' = \lim_i\sum_i\xi_i b_i$ for some $b_i\in K_{\alpha(\hbar)}$ and $\xi_i\in\mathcal{E}$.  So for any $\psi\in\mathcal{E}$, we have 
    \begin{align}
    \eta\flat_{\varphi'}(\psi) = \eta\inner{\varphi'}{\psi}_{\mathfrak{B}}
    =\lim_i\sum_i \eta\inner{\xi_i b_i}{\psi}_{\mathfrak{B}}
    = \lim_i\sum_i \eta b_i^* \inner{\xi_i}{\psi}_{\mathfrak{B}},
    \end{align}
    and since $K_{\alpha(\hbar)}$ is an ideal, we have $\eta\flat_{\varphi'}(\psi)\in\overline{\mathcal{E}\cdot K_{\alpha(\hbar)}}$.

    Similarly, since $\eta'\in \overline{\mathcal{E}\cdot K_{\alpha(\hbar)}}$, we have that $\eta' = \lim_i\sum_i\xi_i b_i$ for some $b_i\in K_{\alpha(\hbar)}$ and $\xi_i\in\mathcal{E}$.  So for any $\psi\in\mathcal{E}$, we have
    \begin{align}
        \eta'\flat_{\varphi}(\psi) = \eta'\inner{\varphi}{\psi}_{\mathfrak{B}} = \lim_i\sum_i \xi_i b_i\inner{\varphi}{\psi}_{\mathfrak{B}},
    \end{align}
    and since $K_{\alpha(\hbar)}$ is an ideal, we have $\eta'\flat_{\varphi}(\psi)\in \overline{\mathcal{E}\cdot K_{\alpha(\hbar)}}$.  
    
Since $\mathcal{K}(\mathcal{E},\overline{\mathcal{E}\cdot K_{\alpha(\hbar)}})$ is a linear subspace of $\mathcal{L}(\mathcal{E})$, this establishes that the first matrix element of each of $LL'$ and $L'L$ belongs to $\mathcal{K}(\mathcal{E},\overline{\mathcal{E}\cdot K_{\alpha(\hbar)}})$.

\item $R\eta' + \eta b'\in \overline{\mathcal{E}\cdot K_{\alpha(\hbar)}}$ and $R'\eta + \eta' b\in \overline{\mathcal{E}\cdot K_{\alpha(\hbar)}}$.

Since $\eta'\in \overline{\mathcal{E}\cdot K_{\alpha(\hbar)}}$, we have $\eta' = \lim_i\sum_i \xi_i b_i$ for some $\xi_i\in \mathcal{E}$ and $b_i\in K_{\alpha(\hbar)}$.  Since the right $\mathfrak{B}$-action commutes with all compact operators, we have that $R\eta' = \lim_i\sum_i R(\xi_i b_i) = \lim_i\sum_i R\xi_i\cdot b_i$.
Thus, $R\eta'\in \overline{\mathcal{E}\cdot K_{\alpha(\hbar)}}$.  And since $\overline{\mathcal{E}\cdot K_{\alpha(\hbar)}}$ is a linear subspace, we have $R\eta' + \eta b'\in \overline{\mathcal{E}\cdot K_{\alpha(\hbar)}}$.

On the other hand, since $R'\in \mathcal{K}(\mathcal{E},\overline{\mathcal{E}\cdot K_{\alpha(\hbar)}})$, we have $R'\eta\in \overline{\mathcal{E}\cdot K_{\alpha(\hbar)}}$.  And likewise, since $K_{\alpha(\hbar)}$ is an ideal, we have $\eta'b\in \overline{\mathcal{E}\cdot K_{\alpha(\hbar)}}$, which implies further that $R'\eta + \eta'b\in \overline{\mathcal{E}\cdot K_{\alpha(\hbar)}}$.

\item  $\flat_\varphi R' + b\flat_{\varphi'} = \flat_\xi$ and $\flat_{\varphi'} R + b' \flat_{\varphi} = \flat_{\xi'}$ for some $\xi,\xi'\in \overline{\mathcal{E}\cdot K_{\alpha(\hbar)}}$.

Define $\xi = R'^*\varphi + \varphi'\cdot b^*$.  It follows that for any $\psi\in\mathcal{E}$,
\begin{align}
    \flat_{\xi}\psi &= \inner{R'^*\varphi}{\psi}_{\mathfrak{B}} + \inner{\varphi'\cdot b^*}{\psi}_{\mathfrak{B}}\nonumber\\
    &= \inner{\varphi}{R'\psi}_{\mathfrak{B}} + b\inner{\varphi'}{\psi}_{\mathfrak{B}}\nonumber\\
    &= \flat_{\varphi} R'\psi + b\flat_{\varphi'}\psi.
\end{align}
Moreover, since $R'\in \mathcal{K}(\mathcal{E},\overline{\mathcal{E}\cdot K_{\alpha(\hbar)}})$, it follows that $R'^*\in \mathcal{K}(\mathcal{E},\overline{\mathcal{E}\cdot K_{\alpha(\hbar)}})$, so $R'^*\varphi\in \overline{\mathcal{E}\cdot K_{\alpha(\hbar)}}$.  Similarly, since $K_{\alpha(\hbar)}$ is an ideal, we have $\varphi'\cdot b^*\in \overline{\mathcal{E}\cdot K_{\alpha(\hbar)}}$.  Hence, $\xi\in \overline{\mathcal{E}\cdot K_{\alpha(\hbar)}}$.

Similarly, define $\xi' = R^*\varphi' + \varphi\cdot b'^*$.  It follows that for any $\psi\in\mathcal{E}$,
\begin{align}
    \flat_{\xi'}\psi &= \inner{R^*\varphi'}{\psi}_{\mathfrak{B}} + \inner{\varphi\cdot b'^*}{\psi}_{\mathfrak{B}}\nonumber\\
    &= \inner{\varphi'}{R\psi}_{\mathfrak{B}} + b'\cdot \inner{\varphi}{\psi}_{\mathfrak{B}}\nonumber\\
    &= \flat_{\varphi'}R\psi + b'\flat_{\varphi}\psi.
\end{align}
Moreover, since $R^*$ is a compact operator and the right $\mathfrak{B}$-action commutes with compact operators, we have $R^*\varphi'\in \overline{\mathcal{E}\cdot K_{\alpha(\hbar)}}$.  Further, we have $\varphi\cdot b'^*\in \overline{\mathcal{E}\cdot K_{\alpha(\hbar)}}$.  Hence, $\xi' \in \overline{\mathcal{E}\cdot K_{\alpha(\hbar)}}$.

\item $\inner{\varphi}{\eta'}_{\mathfrak{B}} + bb'\in  K_{\alpha(\hbar)}$ and $\inner{\varphi'}{\eta}_{\mathfrak{B}} + b'b\in K_{\alpha(\hbar)}$

The fact that $K_{\alpha(\hbar)}$ is an ideal implies that both $\inner{\varphi}{\eta'}_{\mathfrak{B}} + bb'\in K_{\alpha(\hbar)}$ and $\inner{\varphi'}{\eta}_{\mathfrak{B}} + b'b\in K_{\alpha(\hbar)}$.
\end{enumerate}

\noindent The fact that $\mathfrak{D}$ is closed is implied by the bounds in Lemma \ref{lemma:link} along with the fact that each of the sets $\mathcal{K}(\mathcal{E},\overline{\mathcal{E}\cdot K_{\alpha(\hbar)}})$, $K_{\alpha(\hbar)}$, and $\overline{\mathcal{E}\cdot K_{\alpha(\hbar)}}$ is closed.
\end{proof}

Since $\mathfrak{D}$ is a closed two-sided ideal in $\mathfrak{L}$, the quotient C*-algebra $\mathfrak{L}/\mathfrak{D}$ is well-defined with the canonical quotient norm
\begin{align}
    \norm{L + \mathfrak{D}}_{\mathfrak{L}/\mathfrak{D}} = \inf_{D\in\mathfrak{D}}\norm{L + D}
\end{align}
for all $L\in\mathfrak{L}$.

\section{Tensor Products of Non-Unital Commutative C*-algebras}
\label{app:prod}

In this appendix we establish a technical lemma about tensor products of non-unital commutative C*-algebras.  It is well known that the tensor product of unital commutative C*-algebras corresponds to the Cartesian product of their state spaces \citep[][p. 849]{KaRi97}.  We extend this result to non-unital commutative C*-algebras.
 
So suppose, as above, that $\mathfrak{A}_0$ and $\mathfrak{B}_0$ are non-unital commutative C*-algebras and their unique C*-tensor product \citep[][Lemma 11.3.5, p. 854]{KaRi97} is denoted $\mathfrak{A}_0\otimes\mathfrak{B}_0$.  Gelfand duality implies that $\mathfrak{A}_0\cong C_0(M)$ and $\mathfrak{B}_0\cong C_0(N)$ for some locally compact spaces $M$ and $N$.  We will show that the pure state space $P$ of $\mathfrak{A}_0\otimes \mathfrak{B}_0$ is homeomorphic to $M\times N$.

To that end, let $\{1_\lambda^{\mathfrak{A}_0}\}_{\lambda\in\Lambda}$ and $\{1_{\lambda'}^{\mathfrak{B}_0}\}_{\lambda'\in\Lambda'}$ denote approximate units of $\mathfrak{A}_0$ and $\mathfrak{B}_0$, respectively.  Then define maps $pr_M: P\to M$ and $pr_N: P\to N$ by
\begin{align}
pr_M(p)(a) &= \lim_{\lambda'} p\big(a\otimes 1_{\lambda'}^{\mathfrak{B}_0}\big)\nonumber\\
pr_N(p)(b) &= \lim_{\lambda} p\big(1_{\lambda}^{\mathfrak{A}_0}\otimes b\big)
\end{align}
for all $p\in P$, $a\in\mathfrak{A}_0$, and $b\in\mathfrak{B}_0$.  Each limit converges \citep[][Thm. 2.3, p. 162]{La76}, and the limit is independent of the chosen approximate unit.  We need to show that when $p$ is a pure state, so are $pr_M(p)$ and $pr_N(p)$.\medskip

\begin{proposition}
If $p\in P$, then $pr_M(p)\in\mathcal{P}(\mathfrak{A}_0)$ and $pr_N(p)\in\mathcal{P}(\mathfrak{B}_0)$.
\end{proposition}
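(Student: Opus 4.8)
The plan is to trade on the fact that $\mathfrak{A}_0\otimes\mathfrak{B}_0$ is a \emph{commutative} C*-algebra, so that the pure state $p$ is automatically a character, and then to verify that the slice maps $pr_M$ and $pr_N$ carry characters to characters. Concretely: on elementary tensors $(a\otimes b)(a'\otimes b')=aa'\otimes bb'=a'a\otimes b'b=(a'\otimes b')(a\otimes b)$, so the algebraic tensor product is commutative and hence so is its completion in the maximal C*-norm. By Gelfand theory the pure states of a commutative C*-algebra are exactly its characters, so $p$ satisfies $p(xy)=p(x)p(y)$ for all $x,y\in\mathfrak{A}_0\otimes\mathfrak{B}_0$, and it suffices to show that $pr_M(p)$ and $pr_N(p)$ are nonzero characters of $\mathfrak{A}_0$ and $\mathfrak{B}_0$ respectively.

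I would first record that $pr_M(p)$ is a state of $\mathfrak{A}_0$: positivity is immediate since it is a pointwise limit of the positive functionals $a\mapsto p(a\otimes 1_{\lambda'}^{\mathfrak{B}_0})$, and nonvanishing (indeed norm one) follows because the net $\{1_\lambda^{\mathfrak{A}_0}\otimes 1_{\lambda'}^{\mathfrak{B}_0}\}$ is an approximate unit of $\mathfrak{A}_0\otimes\mathfrak{B}_0$, so that $\lim_\lambda pr_M(p)(1_\lambda^{\mathfrak{A}_0})=\lim_\lambda\lim_{\lambda'}p(1_\lambda^{\mathfrak{A}_0}\otimes 1_{\lambda'}^{\mathfrak{B}_0})=1$. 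The heart of the argument is then multiplicativity. Fix $a,a'\in\mathfrak{A}_0$. Since $p$ is a character and $(a\otimes 1_{\lambda'}^{\mathfrak{B}_0})(a'\otimes 1_{\lambda'}^{\mathfrak{B}_0})=aa'\otimes(1_{\lambda'}^{\mathfrak{B}_0})^2$,
\[
p\big(a\otimes 1_{\lambda'}^{\mathfrak{B}_0}\big)\,p\big(a'\otimes 1_{\lambda'}^{\mathfrak{B}_0}\big)=p\big(aa'\otimes(1_{\lambda'}^{\mathfrak{B}_0})^2\big)\quad\text{for every }\lambda'.
\]
Passing to the limit over $\lambda'$: the left-hand side tends to $pr_M(p)(a)\,pr_M(p)(a')$, while the right-hand side tends to $pr_M(p)(aa')$, because $\{(1_{\lambda'}^{\mathfrak{B}_0})^2\}_{\lambda'\in\Lambda'}$ is again an approximate unit of $\mathfrak{B}_0$ and the limit defining $pr_M(p)$ is independent of the chosen approximate unit. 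Hence $pr_M(p)(aa')=pr_M(p)(a)\,pr_M(p)(a')$, so $pr_M(p)$ is a nonzero character, i.e.\ a pure state of $\mathfrak{A}_0$; the argument for $pr_N(p)$ is identical with the two factors interchanged.

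The main point requiring care — rather than a genuine obstacle — is the limit manipulation in the multiplicativity step: one must check that squares of an approximate unit again form an approximate unit and invoke the already-cited independence of the limit $pr_M(p)$ from the choice of approximate unit \citep[][Thm. 2.3, p. 162]{La76}. If one wished to avoid the commutativity shortcut, one could instead argue directly that any positive functional dominated by $pr_M(p)$ is a scalar multiple of it (lifting it through $p$ and using purity of $p$), but exploiting that $\mathfrak{A}_0\otimes\mathfrak{B}_0$ is abelian makes the character argument both shorter and more transparent.
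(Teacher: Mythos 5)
Your proposal is correct and follows essentially the same route as the paper: both establish that $pr_M(p)$ and $pr_N(p)$ are states via positivity and the approximate unit $\{1_\lambda^{\mathfrak{A}_0}\otimes 1_{\lambda'}^{\mathfrak{B}_0}\}$, and both prove multiplicativity by the same trick of writing $a_1a_2\otimes(1_{\lambda'}^{\mathfrak{B}_0})^2$ as a product of elementary tensors and using that $p$, being a pure state of the commutative algebra $\mathfrak{A}_0\otimes\mathfrak{B}_0$, is multiplicative, before invoking the equivalence of multiplicative and pure states. Your version merely makes explicit the commutativity of the tensor product and the character property of $p$, which the paper uses tacitly in its computation.
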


\begin{proof}
Suppose $p$ is a pure state.  First we confirm that $pr_M(p)$ and $pr_N(p)$ are both states.  They are clearly linear functionals. 
 To see that they are each positive, note that for $a\in\mathfrak{A}_0$ and $b\in\mathfrak{B}_0$, we have that $a^*a\otimes 1_{\lambda'}^{\mathfrak{B}_0}$ and $1_{\lambda}^{\mathfrak{A}_0}\otimes b^*b$ are both positive in $\mathfrak{A}_0\otimes\mathfrak{B}_0$ \citep[][Thm. 4.2.2(iv)]{KaRi97}.  Hence,
 \begin{align}
     pr_M(p)(a^*a) = \lim_{\lambda'} p\big(a^*a\otimes 1_{\lambda'}^{\mathfrak{B}_0}\big)\geq 0\nonumber\\
     pr_N(p)(b^*b) = \lim_{\lambda} p\big(1_{\lambda}^{\mathfrak{A}_0}\otimes b^*b\big)\geq 0.
\end{align}
To see that $pr_M(p)$ and $pr_N(p)$ are both normalized, note that $\{1_{\lambda}^{\mathfrak{A}_0}\otimes 1_{\lambda'}^{\mathfrak{B}_0}\}_{(\lambda,\lambda')\in\Lambda\times\Lambda'}$ is an approximate identity for $\mathfrak{A}_0\otimes \mathfrak{B}_0$ \citep[][Lemma 2.1, p. 161]{La76}.  Hence,
\begin{align}
    1 = \norm{p} = \lim_{(\lambda,\lambda')} p\big(1_{\lambda}^{\mathfrak{A}_0}\otimes 1_{\lambda'}^{\mathfrak{B}_0}\big)
\end{align}
and so it follows that
\begin{align}
    \norm{pr_M(p)} &= \lim_{\lambda'} pr_M(p)\big(1_{\lambda}^{\mathfrak{A}_0}\big) = \lim_\lambda\lim_{\lambda'} p\big(1_{\lambda}^{\mathfrak{A}_0}\otimes 1_{\lambda'}^{\mathfrak{B}_0}\big) = 1\nonumber\\
    \norm{pr_N(p)} &= \lim_\lambda pr_N(p)\big(1_{\lambda'}^{\mathfrak{B}_0}\big) = \lim_{\lambda'}\lim_{\lambda} p\big(1_{\lambda}^{\mathfrak{A}_0}\otimes 1_{\lambda'}^{\mathfrak{B}_0}\big) = 1.
\end{align}
Thus, $pr_M(p)$ and $pr_N(p)$ are both states.

We now need to show that each state is pure. Consider $a_1,a_2\in\mathfrak{A}_0$ and $b_1,b_2\in\mathfrak{B}_0$.  Note that $\{(1_{\lambda}^{\mathfrak{A}_0})^2\}_{\lambda\in\Lambda}$ and $\{(1_{\lambda'}^{\mathfrak{B}_0})^2\}_{\lambda'\in\Lambda'}$ are both also approximate units, so we have
\begin{align}
    pr_M(p)(a_1a_2) &= \lim_{\lambda'} p\big(a_1a_2\otimes (1_{\lambda'}^{\mathfrak{B}_0})^2\big)\nonumber\\
    &=\lim_{\lambda'} p\big((a_1\otimes 1_{\lambda'}^{\mathfrak{B}_0})\cdot(a_2\otimes 1_{\lambda'}^{\mathfrak{B}_0})\big)\nonumber\\
    &= \lim_{\lambda'} p\big(a_1\otimes 1_{\lambda'}^{\mathfrak{B}_0}\big)\cdot p\big(a_2\otimes 1_{\lambda'}^{\mathfrak{B}_0}\big)\nonumber\\
    &= pr_M(p)(a_1) \cdot pr_M(p)(a_2)
\end{align}
and
\begin{align}
    pr_N(p)(b_1b_2) &= \lim_{\lambda'} p\big( (1_{\lambda}^{\mathfrak{A}_0})^2\otimes b_1b_2\big)\nonumber\\
    &=\lim_{\lambda'} p\big((1_{\lambda}^{\mathfrak{A}_0}\otimes b_1)\cdot(1_{\lambda}^{\mathfrak{A}_0}\otimes b_2)\big)\nonumber\\
    &= \lim_{\lambda'} p\big(1_{\lambda}^{\mathfrak{A}_0}\otimes b_1\big)\cdot p\big(1_{\lambda}^{\mathfrak{A}_0}\otimes b_2\big)\nonumber\\
    &= pr_N(p)(b_1) \cdot pr_N(p)(b_2).
\end{align}
Hence, $pr_M(p)$ and $pr_N(p)$ are each multiplicative, and therefore, pure \citep[][Prop. 4.4.1, p. 269]{KaRi97}

\end{proof}

So we now have a map $(pr_M,pr_N): P\to M\times N$.  We proceed to define an inverse map $(pr_M,pr_N)^{-1}: M\times N\to P$ by the continuous linear extension of
\begin{align}
    (pr_M,pr_N)^{-1}(m,n)(a\otimes b) = m(a)n(b)
\end{align}
for all $a\in\mathfrak{A}_0$ and $b\in\mathfrak{B}_0$.  Again, we must show that when $m$ and $n$ are pure states on $\mathfrak{A}_0$ and $\mathfrak{B}_0$, then $(pr_M,pr_N)^{-1}(m,n)$ is a pure state on $\mathfrak{A}_0\otimes\mathfrak{B}_0$.\medskip

\begin{proposition}
    If $m\in\mathcal{P}(\mathfrak{A}_0)$ and $n\in\mathcal{P}(\mathfrak{B}_0)$, then $(pr_M,pr_N)^{-1}(m,n)\in P$.
\end{proposition}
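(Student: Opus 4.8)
The plan is to show that $\omega := (pr_M,pr_N)^{-1}(m,n)$ is a nonzero $*$-homomorphism from the C*-algebra $\mathfrak{A}_0\otimes\mathfrak{B}_0$ into $\mathbb{C}$, i.e.\ a character of this algebra, and then to invoke the fact that on a commutative C*-algebra the characters are precisely the pure states \citep[][Prop. 4.4.1, p. 269]{KaRi97} --- the same fact already used above for $pr_M(p)$ and $pr_N(p)$. Note that $\mathfrak{A}_0\otimes\mathfrak{B}_0$ is commutative, since its dense subalgebra, the algebraic tensor product, is commutative. Establishing this will also simultaneously settle that $\omega$ is well-defined, i.e.\ that the prescription $a\otimes b\mapsto m(a)n(b)$ really does admit a continuous linear extension to $\mathfrak{A}_0\otimes\mathfrak{B}_0$.

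First I would recall that, again by \citet[][Prop. 4.4.1, p. 269]{KaRi97}, the pure states $m$ and $n$ are multiplicative, so that $m:\mathfrak{A}_0\to\mathbb{C}$ and $n:\mathfrak{B}_0\to\mathbb{C}$ are $*$-homomorphisms (necessarily of norm one). Their ranges inside $\mathbb{C}$ commute trivially, so the universal property of the maximal C*-tensor product --- which, as noted in \S\ref{sec:symplim}, is the unique C*-tensor product of the commutative C*-algebras $\mathfrak{A}_0$ and $\mathfrak{B}_0$ \citep[][Lemma 11.3.5, p. 854]{KaRi97} --- yields a $*$-homomorphism $m\otimes n:\mathfrak{A}_0\otimes\mathfrak{B}_0\to\mathbb{C}$ satisfying $(m\otimes n)(a\otimes b)=m(a)n(b)$ for all $a\in\mathfrak{A}_0$ and $b\in\mathfrak{B}_0$. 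Being bounded and agreeing with the defining prescription on elementary tensors, this $*$-homomorphism is exactly the functional $\omega=(pr_M,pr_N)^{-1}(m,n)$; in particular $\omega$ is well-defined.

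It then remains only to check that $\omega$ is nonzero, which I would do by choosing $a\in\mathfrak{A}_0$ and $b\in\mathfrak{B}_0$ with $m(a)=n(b)=1$ (possible since $m$ and $n$ are characters on C*-algebras), so that $\omega(a\otimes b)=1$. A nonzero $*$-homomorphism from the commutative C*-algebra $\mathfrak{A}_0\otimes\mathfrak{B}_0$ into $\mathbb{C}$ is a character, hence a pure state by \citet[][Prop. 4.4.1, p. 269]{KaRi97}, giving $\omega\in P$. The only point requiring some care is the appeal to the universal property of the maximal tensor product in the possibly non-unital setting; if a more self-contained argument is wanted, one can instead note that $m\otimes n$ is manifestly $*$-preserving and multiplicative on the algebraic tensor product by bilinearity, and --- writing $m$ and $n$ as point evaluations $\mathrm{ev}_{x_0}$, $\mathrm{ev}_{y_0}$ via Gelfand duality --- that $(m\otimes n)\big(c^*c\big)=\big|\sum_i a_i(x_0)b_i(y_0)\big|^2\geq 0$ for any $c=\sum_i a_i\otimes b_i$, so $m\otimes n$ is a positive functional on a dense $*$-subalgebra of a C*-algebra with an approximate identity and therefore bounded, its multiplicativity and $*$-preservation passing to the completion by continuity.
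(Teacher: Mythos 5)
Your main argument is correct, and it takes a genuinely different route from the paper's. You use commutativity head-on: pure states of the abelian algebras $\mathfrak{A}_0$ and $\mathfrak{B}_0$ are characters, the commuting one-dimensional representations $m$ and $n$ induce, via the universal property of the maximal tensor product, a $*$-homomorphism $\mathfrak{A}_0\otimes\mathfrak{B}_0\to\mathbb{C}$ agreeing with $a\otimes b\mapsto m(a)n(b)$, and a nonzero character of the abelian C*-algebra $\mathfrak{A}_0\otimes\mathfrak{B}_0$ is a pure state; as you note, this settles well-definedness of the continuous linear extension at the same time. The worry you flag about non-unitality is harmless: a pair of commuting representations on a Hilbert space (here $\mathbb{C}$) always induces a representation of the maximal tensor product, unital or not, since the associated $*$-homomorphism on the algebraic tensor product is dominated by the maximal C*-norm. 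The paper instead passes to the unitizations $\dot{\mathfrak{A}}_0$ and $\dot{\mathfrak{B}}_0$, uses that $\dot{m}\otimes\dot{n}$ is pure \citep[][Prop. 11.3.2(ii), p. 848]{KaRi97}, and then invokes Lazar's decomposition of states on $\dot{\mathfrak{A}}_0\otimes\dot{\mathfrak{B}}_0$ into a convex combination involving the states at infinity \citep[][Thm. 2.3, p. 163]{La76} to conclude that the restriction to $\mathfrak{A}_0\otimes\mathfrak{B}_0$ is pure. Your route is shorter and more elementary, but it is specifically commutative (it runs through characters), whereas the paper's unitization/Lazar argument has the shape that would survive if the factors were not abelian.

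One caveat concerning your optional self-contained variant: the step asserting that a positive functional on a dense $*$-subalgebra of a C*-algebra with an approximate identity is therefore bounded is false in general --- the Lebesgue integral on $C_c(\R)\subseteq C_0(\R)$ is positive and unbounded, even though $C_c(\R)$ contains a bounded approximate identity for $C_0(\R)$. Boundedness should instead be obtained directly: for $c=\sum_i a_i\otimes b_i$ one has $(m\otimes n)(c)=\tilde{c}(x_0,y_0)$ with $\tilde{c}(x,y)=\sum_i a_i(x)b_i(y)$, and $\abs{\tilde{c}(x_0,y_0)}\leq\norm{\tilde{c}}_\infty$, where the sup norm is a C*-norm on the algebraic tensor product and hence, by the uniqueness of the C*-norm in the commutative case, coincides with the norm of $\mathfrak{A}_0\otimes\mathfrak{B}_0$. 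With that repair, or simply relying on your primary universal-property argument, the proof is sound.
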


\begin{proof}
    In what follows, we let $\dot{\mathfrak{A}}_0 = \mathfrak{A}_0\oplus\mathbb{C}$ and $\dot{\mathfrak{B}}_0 = \mathfrak{B}_0\oplus\mathbb{C}$ denote the unitzations of $\mathfrak{A}_0$ and $\mathfrak{B}_0$, respectively.  We know that $m$ has a unique extension to a pure state $\dot{m}$ on $\dot{\mathfrak{A}}_0$ and $n$ has a unique extension to a pure state $\dot{n}$ on $\dot{\mathfrak{B}}_0$ satisfying
    \begin{align}
        \dot{m}(a+\alpha I) &= m(a) + \alpha\nonumber\\
        \dot{n}(b + \alpha I) &= n(b) + \alpha
    \end{align}
    for $\alpha\in\mathbb{C}$, $a\in\mathfrak{A}_0$, and $b\in\mathfrak{B}_0$.  Hence, there is a unique state $\dot{m}\otimes\dot{n}$ on $\dot{\mathfrak{A}}_0\otimes\dot{\mathfrak{B}}_0$ such that
    \begin{align}
    (\dot{m}\otimes\dot{n})\big((a+\alpha I)\otimes(b+\alpha' I)\big) = \dot{m}(a + \alpha I)\cdot \dot{n}(b+\alpha' I)
    \end{align}
    for all $\alpha,\alpha'\in\mathbb{C}$, $a\in\mathfrak{A}_0$, and $b\in\mathfrak{B}_0$ \citep[][Prop. 11.1.1, p. 802]{KaRi97}.  Since $m$ and $n$ are pure states, so are $\dot{m}$ and $\dot{n}$, and it follows that $\dot{m}\otimes\dot{n}$ is a pure state as well \citep[][Prop. 11.3.2(ii), p. 848]{KaRi97}.

    Now consider the states at infinity $\iota_M\in\mathcal{P}(\dot{\mathfrak{A}}_0)$ and $\iota_N\in\mathcal{P}(\dot{\mathfrak{B}}_0)$ defined by
    \begin{align}
        \iota_M(a+\alpha I) &= \alpha\nonumber\\
        \iota_N(b+\alpha I) &= \alpha
    \end{align}
    for all $\alpha\in\mathbb{C}$, $a\in\mathfrak{A}_0$, and $b\in\mathfrak{B}_0$.  Similarly, define $\iota_{M\times N} = \iota_M\otimes \iota_N\in\mathcal{P}(\dot{\mathfrak{A}}_0\otimes\dot{\mathfrak{B}}_0)$.  By Thm. 2.3 of \citet[][p. 163]{La76}, every state on $\dot{\mathfrak{A}}_0\otimes\dot{\mathfrak{B}}_0$ is a convex combination
    \begin{align}
        \alpha_1\cdot f + \alpha_2\cdot (\dot{m}'\otimes \iota_N) + \alpha_3\cdot(\iota_M\otimes \dot{n}') + \alpha_4\cdot \iota_{M\times N}
    \end{align}
    for some states $\dot{m}'$ on $\mathfrak{A}_0$, $\dot{n}'$ on $\mathfrak{B}_0$, and $f$ on $\mathfrak{A}_0\otimes\mathfrak{B}_0$.  Since $\dot{m}\otimes\dot{n}$ is pure, it must be equal to exactly one component of such a convex sum.  However, states of the latter three forms annihilate all operators of the form $a\otimes b$ for $a\in\mathfrak{A}_0$ and $b\in\mathfrak{B}_0$, while $\dot{m}\otimes\dot{n}$ does not.  Hence, $\dot{m}\otimes\dot{n}$ is the unique extension of the state $(pr_M,pr_N)^{-1}(m,n)$ on $\mathfrak{A}_0\otimes\mathfrak{B}_0$, which therefore must be pure.
\end{proof}

\begin{proposition}
    The map $(pr_M,pr_N)^{-1}$ is an inverse to $(pr_M,pr_N)$ in the sense that for any $p\in P$, we have
    \begin{align}
        (pr_M,pr_N)^{-1}(pr_M(p),pr_N(p)) = p,
    \end{align}
    and for any $m\in M$ and $n\in N$, we have
    \begin{align}
        pr_M\circ(pr_M,pr_N)^{-1}(m,n) &= m\nonumber\\
        pr_N\circ(pr_M,pr_N)^{-1}(m,n) &= n.
    \end{align}
\end{proposition}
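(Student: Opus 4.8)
The plan is to verify the two displayed identities by direct computation with approximate identities, using in an essential way that pure states on the commutative C*-algebras in play are characters, hence multiplicative \citep[][Prop. 4.4.1, p. 269]{KaRi97}. Together with the preceding two propositions---which guarantee that $pr_M(p)$ and $pr_N(p)$ are pure states and that $(pr_M,pr_N)^{-1}(m,n)$ is a pure state---this reduces everything to checking equality of bounded linear functionals on simple tensors. I would treat the easier direction (exhibiting a right inverse) first, and the direction that genuinely uses multiplicativity second.

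For the identities $pr_M\circ(pr_M,pr_N)^{-1}(m,n)=m$ and $pr_N\circ(pr_M,pr_N)^{-1}(m,n)=n$, fix pure states $m$ on $\mathfrak{A}_0$ and $n$ on $\mathfrak{B}_0$ and set $p=(pr_M,pr_N)^{-1}(m,n)$, so that $p(a\otimes b)=m(a)n(b)$ on simple tensors. Then for $a\in\mathfrak{A}_0$ one computes
\begin{align}
pr_M(p)(a)=\lim_{\lambda'}p\big(a\otimes 1_{\lambda'}^{\mathfrak{B}_0}\big)=m(a)\lim_{\lambda'}n\big(1_{\lambda'}^{\mathfrak{B}_0}\big)=m(a)\norm{n}=m(a),
\end{align}
where the net converges by \citet[][Thm. 2.3, p. 162]{La76} and $\lim_{\lambda'}n(1_{\lambda'}^{\mathfrak{B}_0})=\norm{n}=1$ because $n$ is a state and $\{1_{\lambda'}^{\mathfrak{B}_0}\}_{\lambda'\in\Lambda'}$ is an approximate unit. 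The identity $pr_N(p)=n$ follows by the symmetric computation, giving $(pr_M,pr_N)\circ(pr_M,pr_N)^{-1}=\mathrm{id}_{M\times N}$.

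For $(pr_M,pr_N)^{-1}(pr_M(p),pr_N(p))=p$, fix $p\in P$ and put $m=pr_M(p)$, $n=pr_N(p)$; since both sides are bounded linear functionals on $\mathfrak{A}_0\otimes\mathfrak{B}_0$, it suffices to see they agree on simple tensors. For $a\in\mathfrak{A}_0$ and $b\in\mathfrak{B}_0$ I would use $\norm{a1_{\lambda}^{\mathfrak{A}_0}-a}\to 0$ and $\norm{1_{\lambda'}^{\mathfrak{B}_0}b-b}\to 0$, together with boundedness of the approximate units, to conclude $(a\otimes 1_{\lambda'}^{\mathfrak{B}_0})(1_{\lambda}^{\mathfrak{A}_0}\otimes b)=a1_{\lambda}^{\mathfrak{A}_0}\otimes 1_{\lambda'}^{\mathfrak{B}_0}b\to a\otimes b$ in norm; then, since $\mathfrak{A}_0\otimes\mathfrak{B}_0$ is commutative and $p$ is pure, $p$ is a multiplicative and continuous functional, so
\begin{align}
p(a\otimes b)=\lim_{\lambda,\lambda'}p\big(a\otimes 1_{\lambda'}^{\mathfrak{B}_0}\big)p\big(1_{\lambda}^{\mathfrak{A}_0}\otimes b\big)=\Big(\lim_{\lambda'}p\big(a\otimes 1_{\lambda'}^{\mathfrak{B}_0}\big)\Big)\Big(\lim_{\lambda}p\big(1_{\lambda}^{\mathfrak{A}_0}\otimes b\big)\Big)=m(a)n(b).
\end{align}
Hence $p$ and $(pr_M,pr_N)^{-1}(m,n)$ coincide, which yields $(pr_M,pr_N)^{-1}\circ(pr_M,pr_N)=\mathrm{id}_P$.

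The only step that is not pure bookkeeping is the passage from the joint net over $\Lambda\times\Lambda'$ to the product of the two separate limits in the last display. This will be justified by noting that $\{1_{\lambda}^{\mathfrak{A}_0}\otimes 1_{\lambda'}^{\mathfrak{B}_0}\}_{(\lambda,\lambda')\in\Lambda\times\Lambda'}$ is an approximate unit for $\mathfrak{A}_0\otimes\mathfrak{B}_0$ \citep[][Lemma 2.1, p. 161]{La76}, so the product net converges in norm and the iterated limits $\lim_{\lambda'}p(a\otimes 1_{\lambda'}^{\mathfrak{B}_0})=m(a)$ and $\lim_{\lambda}p(1_{\lambda}^{\mathfrak{A}_0}\otimes b)=n(b)$ exist by \citet[][Thm. 2.3]{La76}; continuity of scalar multiplication then lets the two one-parameter limits be taken independently. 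Everything else is an immediate application of the two preceding propositions.
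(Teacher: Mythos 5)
Your proposal is correct and follows essentially the same route as the paper's proof: the right-inverse identities are verified by the same direct computation with approximate units, and the left-inverse identity uses multiplicativity of the pure state $p$ on the commutative tensor product together with norm convergence of $a1_{\lambda}^{\mathfrak{A}_0}\otimes 1_{\lambda'}^{\mathfrak{B}_0}b$ to $a\otimes b$, exactly as in the paper (you merely spell out the limit-interchange step that the paper leaves implicit).
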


\begin{proof}
    For any $a\in\mathfrak{A}_0$ and $b\in\mathfrak{B}_0$, we have
    \begin{align}
        (pr_M,pr_N)^{-1}(pr_M(p),pr_N(p))\big(a\otimes b\big) &= \lim_{\lambda'} p\big(a\otimes 1_{\lambda'}^{\mathfrak{B}_0}\big) \cdot \lim_{\lambda} p(1_{\lambda}^{\mathfrak{A}_0}\otimes b\big)\nonumber\\
        &= \lim_{\lambda'}\lim_\lambda p\big((a\otimes b)\cdot (1_{\lambda}^{\mathfrak{A}_0}\cdot 1_{\lambda'}^{\mathfrak{B}_0})\big)\nonumber\\
        &= p\big(a\otimes b\big).
    \end{align}
    Similarly, we have
    \begin{align}
         pr_M\circ(pr_M,pr_N)^{-1}(m,n)(a) &= \lim_{\lambda'} m(a)\cdot n(1_{\lambda'}^{\mathfrak{B}_0}) = m(a)\nonumber\\
        pr_N\circ(pr_M,pr_N)^{-1}(m,n)(b) &= \lim_\lambda m(1_{\lambda}^{\mathfrak{A}_0})\cdot n(b) = n(b).
    \end{align}
\end{proof}

\begin{proposition}
    The maps $pr_M$, $pr_N$ and $(pr_M,pr_N)^{-1}$ are all continuous in the weak* topologies.
\end{proposition}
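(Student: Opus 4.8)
The plan is to handle the three maps separately: $(pr_M,pr_N)^{-1}$ by a routine density argument, and $pr_M,pr_N$ by a small algebraic trick that rewrites the limit over an approximate unit as a genuine ratio of weak*-continuous functions.

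First I would dispose of $(pr_M,pr_N)^{-1}$. For a fixed elementary tensor $a\otimes b$, the map $(m,n)\mapsto (pr_M,pr_N)^{-1}(m,n)(a\otimes b) = m(a)\,n(b)$ is a product of the weak*-continuous coordinate evaluations, hence continuous on $M\times N$ with the product topology; by linearity the same holds for any finite sum $\sum_i a_i\otimes b_i$. For arbitrary $c\in\mathfrak{A}_0\otimes\mathfrak{B}_0$, pick finite sums $c_j\to c$ in norm; since each $(pr_M,pr_N)^{-1}(m,n)$ is a state, $\lvert (pr_M,pr_N)^{-1}(m,n)(c_j) - (pr_M,pr_N)^{-1}(m,n)(c)\rvert \le \lVert c_j - c\rVert$ uniformly in $(m,n)$, so $(m,n)\mapsto (pr_M,pr_N)^{-1}(m,n)(c)$ is a uniform limit of continuous functions and therefore continuous. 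Hence $(pr_M,pr_N)^{-1}$ is weak*-continuous.

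For $pr_M$ (and $pr_N$ by the symmetric argument with the two factors interchanged), the key observation is the identity $p(a\otimes b) = pr_M(p)(a)\,pr_N(p)(b)$ for all $a\in\mathfrak{A}_0$ and $b\in\mathfrak{B}_0$, which follows immediately from the previous proposition that $(pr_M,pr_N)^{-1}(pr_M(p),pr_N(p)) = p$ together with the defining formula for $(pr_M,pr_N)^{-1}$. Now let $(p_\nu)$ be a net in $P$ with $p_\nu\to p$, and fix $a\in\mathfrak{A}_0$. Since $pr_M(p)$ and $pr_N(p)$ are nonzero characters, I can choose $a_0\in\mathfrak{A}_0$ and $b\in\mathfrak{B}_0$ with $pr_M(p)(a_0) = pr_N(p)(b) = 1$, so that $p(a_0\otimes b) = 1$; weak*-convergence then gives $p_\nu(a_0\otimes b)\to 1$, so $p_\nu(a_0\otimes b)\neq 0$ for all $\nu$ beyond some index $\nu_0$. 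For such $\nu$, using that $p_\nu$ and $pr_M(p_\nu)$ are multiplicative (being pure states on commutative C*-algebras),
\[
p_\nu(aa_0\otimes b) = pr_M(p_\nu)(aa_0)\,pr_N(p_\nu)(b) = pr_M(p_\nu)(a)\cdot p_\nu(a_0\otimes b),
\]
so that $pr_M(p_\nu)(a) = p_\nu(aa_0\otimes b)/p_\nu(a_0\otimes b)$. Passing to the limit along the net, $p_\nu(aa_0\otimes b)\to p(aa_0\otimes b) = pr_M(p)(a)$ while $p_\nu(a_0\otimes b)\to 1$, so $pr_M(p_\nu)(a)\to pr_M(p)(a)$. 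Since $a$ was arbitrary, $pr_M(p_\nu)\to pr_M(p)$ in the weak* topology, which gives continuity of $pr_M$, and the argument for $pr_N$ is identical with the roles of the two tensor factors exchanged.

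The main obstacle is precisely the continuity of $pr_M$ and $pr_N$: because these maps are defined by a limit over an approximate unit, the naive approach of writing $pr_M$ as a pointwise limit of the weak*-continuous maps $p\mapsto p(a\otimes 1_{\lambda'}^{\mathfrak{B}_0})$ does not deliver continuity, and approximate units are not Cauchy, so there is no uniform control to rescue it. The ratio representation above sidesteps this: it is valid on the open set where $p\mapsto p(a_0\otimes b)$ is nonzero, which is a neighborhood of the point under consideration, and on that neighborhood it exhibits $pr_M(\cdot)(a)$ as a quotient of two honestly weak*-continuous functions whose denominator stays bounded away from zero near $p$.
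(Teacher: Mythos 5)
Your proposal is correct, and for the marginal maps it takes a genuinely different route from the paper. For $(pr_M,pr_N)^{-1}$ both arguments are essentially the same: the paper checks weak* convergence only on elementary tensors $a\otimes b$, while you add the standard completion step (norm density of finite sums of elementary tensors plus the uniform bound $\lvert\omega(c_j)-\omega(c)\rvert\le\lVert c_j-c\rVert$ for states $\omega$), which the paper leaves implicit. For $pr_M$ and $pr_N$, however, the paper's proof is exactly the ``naive'' argument you flag: it writes
$pr_M(p_\alpha)(a)=\lim_{\lambda'}p_\alpha\big(a\otimes 1_{\lambda'}^{\mathfrak{B}_0}\big)\to\lim_{\lambda'}p\big(a\otimes 1_{\lambda'}^{\mathfrak{B}_0}\big)=pr_M(p)(a)$,
i.e.\ it interchanges the net limit over $\alpha$ with the limit over the approximate unit without supplying the uniformity that would license this. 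Your ratio trick sidesteps that issue entirely: using the already-established facts that $pr_M(q)$, $pr_N(q)$ are multiplicative and that $(pr_M,pr_N)^{-1}(pr_M(q),pr_N(q))=q$, you get the factorization $q(a\otimes b)=pr_M(q)(a)\,pr_N(q)(b)$, choose $a_0,b$ with $p(a_0\otimes b)=1$, and exhibit $pr_M(p_\nu)(a)=p_\nu(aa_0\otimes b)/p_\nu(a_0\otimes b)$ on a tail of the net where the denominator is nonzero; both numerator and denominator are honestly weak*-continuous in $p_\nu$, and the denominator tends to $1$. This is a fully rigorous, self-contained justification of the same conclusion, at the cost of a slightly longer argument; what the paper's version buys is brevity, but as written it asserts rather than proves the limit interchange, so your approach is the more careful one.
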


\begin{proof}
    Suppose that we have a net $\{p_\alpha\}$ in $P$ with $p_\alpha\to p$ in the weak* topology.  Then for every $a\in\mathfrak{A}_0$ and $b\in\mathfrak{B}_0$, we have $p_\alpha(a\otimes b)\to p(a\otimes b)$.  Hence,
    \begin{align}
        pr_M(p_\alpha)(a) = \lim_{\lambda'}p_\alpha(a\otimes 1_{\lambda'}^{\mathfrak{B}_0}) \to \lim_{\lambda'} p(a\otimes 1_{\lambda'}^{\mathfrak{B}_0}) = pr_M(p)(a)
    \end{align}
    and
    \begin{align}
        pr_N(p_\alpha)(b) = \lim_{\lambda} p_{\alpha}(1_{\lambda}^{\mathfrak{A}_0}\otimes b) \to \lim_{\lambda} p(1_{\lambda}^{\mathfrak{A}_0}\otimes b) = pr_N(p)(b).
    \end{align}
Similarly, suppose we have a net $\{m_\alpha\}$ in $M$ with $m_\alpha\to m$ and a net $\{n_{\alpha'}\}$ in $N$ with $n_{\alpha'}\to N$.  Then for every $a\in\mathfrak{A}_0$, we have $m_\alpha(a)\to m(a)$ and for every $b\in\mathfrak{B}_0$, we have $n_{\alpha'}(b)\to n(b)$.  Hence,
\begin{align}
    (pr_M,pr_N)^{-1}(m_\alpha,n_{\alpha'})(a\otimes b) &= m_\alpha(a)\cdot n_{\alpha'}(b)\nonumber\\
    &\to m(a)\cdot n(b) = (pr_M,pr_N)^{-1}(m,n)(a\otimes b).
\end{align}
\end{proof}
\noindent Hence, we have established that $P$ is homeomorphic to $M\times N$.

\bibliography{bibliography.bib}
\bibliographystyle{apalike}
\end{document}